\newtheorem{theorem}{Theorem}
\newtheorem{definition}[theorem]{Definition}
\newtheorem{lemma}[theorem]{Lemma}
\newtheorem{corollary}[theorem]{Corollary}
\newtheorem{problem}{Open Problem}
\DeclareMathOperator*{\argmax}{argmax}
\newcommand{\jonnote}[1]{}
\newcommand{\joshnote}[1]{}
\newcommand{\ggnote}[1]{}
\newcommand{\indicator}[1]{\mathcal{I}\left(#1\right)}
\newcommand{\D}{\mathcal{D}}
\newcommand{\E}{\mathbb{E}}
\newcommand{\N}{\mathbb{N}}
\def \Welf  {{\sf Welf}}
\DeclarePairedDelimiter{\set}{ \{ }{ \} }
\newcommand{\dist}[3]{d_{#3}\left(#1, #2\right)}
\newcommand{\poly}{\mathrm{poly}}
\newcommand{\Derivative}[1]{\frac{d}{d#1}}
\newcommand{\DSet}{\textnormal{Dominating-Set}\xspace}
  \newcommand{\cost}[1]{c_{#1}}
  \newcommand{\costvector}{\bm{c}}
  \newcommand{\Cost}[1]{c_{#1}}
  \newcommand{\CostVector}{\bm{c}}
  \newcommand{\TypedCostVector}[1]{\bm{c}^{(#1)}}
  \newcommand{\Forecast}[2]{F_{#1, #2}}
  \newcommand{\TypedForecast}[3]{F^{(#1)}_{#2, #3}}
  \newcommand{\ForecastMatrix}{F}
  \newcommand{\TypedForecastMatrix}[1]{F^{(#1)}}
  \newcommand{\ForecastTensor}{\bm{F}}
  \newcommand{\reward}[1]{r_{#1}}
  \newcommand{\rewardvector}{\bm{r}}
  \newcommand{\Reward}[1]{r_{#1}}
  \newcommand{\RewardVector}{\bm{r}}
  \newcommand{\TypedExpectedReward}[2]{R^{(#1)}_{#2}}
  \newcommand{\ExpectedReward}[1]{R_{#1}}
  \newcommand{\PrincipalAgentProblem}{\left(\costvector, \ForecastTensor, \rewardvector\right)}
  \newcommand{\PrincipalAgentProblemPrime}{\left(\costvector', \ForecastTensor', \rewardvector'\right)}
  \newcommand{\Contract}[1]{x_{#1}}
  \newcommand{\TypedContract}[2]{x^{(#1)}_{#2}}
  \newcommand{\ContractVector}{\bm{x}}
  \newcommand{\TypedContractVector}[1]{\bm{x}^{(#1)}}
  \newcommand{\ContractMatrix}{X}
  \newcommand{\TransferCoefficient}{\alpha}
  \newcommand{\OptimalAction}{i_*}
  \newcommand{\TypedOptimalAction}[1]{\OptimalAction^{(#1)}}
  \newcommand{\Profit}{\textsc{Profit}}
  \newcommand{\TypedProfit}[1]{\Profit^{(#1)}}
  \newcommand{\Utility}{U}
  \newcommand{\forecastprime}[3]{F^{\prime(#1)}_{#2, #3}}
  \newcommand{\contract}[1]{x_{#1}}
\newcommand{\Measure}{B}
\newcommand{\OWelfare}{\textsc{Welfare}}
\newcommand{\OTypeAware}{\textsc{Opt-TypeAware}}
\newcommand{\OMenu}{\textsc{Opt-Menu}}
\newcommand{\OSingle}{\textsc{Opt-Single}}
\newcommand{\OLinear}{\textsc{Opt-Linear}}
\newcommand{\StandardBasisVector}[1]{\vec{e}_{#1}}
\newcommand{\ApxHard}{\textsc{APX}-Hard\xspace}
\newcommand{\NPHard}{\textsc{NP}-Hard\xspace}
\newcommand{\dutting}{D{\"u}tting}
\newcommand{\Naively}{Na{\"i}vely}
\newcommand{\actionmapping}{(i_1, i_2, \ldots, i_T)}
\newcommand{\optactionmapping}{(i^*_1, i^*_2, \ldots, i^*_T)}
\newcommand{\neighbor}[2]{\delta_{#1}\left(#2\right)}
\title{Contracts under Moral Hazard and Adverse Selection}
\author[1]{Guru Guruganesh, Jon Schneider, Joshua Wang }
\affil[1]{Google Research \footnote{ \{gurug,jschnei,joshuawang\}@google.com}}
\begin{document}

\maketitle

\begin{abstract}
In the classical principal-agent problem, a principal must design a contract to incentivize an agent to perform an action on behalf of the principal. We study the classical principal-agent problem in a setting where the agent can be of one of several types (affecting the outcome of actions they might take). This combines the contract theory phenomena of ``moral hazard'' (incomplete information about actions) with that of ``adverse selection'' (incomplete information about types). 

We examine this problem through the computational lens. We show that in this setting it is APX-hard to compute either the profit-maximizing single contract or the profit-maximizing menu of contracts (as opposed to in the absence of types, where one can efficiently compute the optimal contract). We then show that the performance of the best linear contract scales especially well in the number of types: if agent has $n$ available actions and $T$ possible types, the best linear contract achieves an $O(n\log T)$ approximation of the best possible profit. Finally, we apply our framework to prove tight worst-case approximation bounds between a variety of benchmarks of mechanisms for the principal.
\end{abstract}

\thispagestyle{empty}

\pagebreak

\clearpage
\setcounter{page}{1}

\section{Introduction}
\label{sec:intro}

Consider the classical principal-agent problem, wherein one self-interested economic actor (the ``agent'') is hired to take actions on behalf of another (the ``principal''). The problem arises because the agent has little to no stake in the resulting outcome, so the two enter into a contract designed to encourage the agent to take actions in the interest of the principal. Contract theory distinguishes between two fundamental information asymmetries to consider in the design of this contract. The first is \emph{moral hazard}, which occurs when it may be difficult or costly for the principal to monitor the agent's actions. In this case, the agent may take inappropriate actions, e.g. an insured party may begin to act more recklessly since they are shielded from risk. The second is \emph{adverse selection} (also known as screening), where the agent possesses some private information about their type (e.g. how costly it is for them to perform certain actions) before they decide whether to enter the contract. In this case, the agents that enter contract may be biased, e.g.~people with pre-existing conditions would prefer to purchase health insurance.

We study the confluence of these two asymmetries, moral hazard and adverse selection, under the computational lens. In particular, we consider a simple hidden-action principal-agent model~\cite{grossman1992analysis} but generalized to capture multiple possible agent types. In our generalized model, the agent may be one of $T$ different types. The agent has the choice between $n$ actions, each of a different cost to the agent (typically representing different effort levels the agent may exert). After the agent chooses an action, one of $m$ disjoint outcomes may occur; the probability of a particular outcome depends on both the agent's type and the chosen action. The principal observes this outcome, but does not observe the chosen action (creating a moral hazard issue) nor the type of the agent (creating a adverse selection issue).

The principal also receives a reward depending on the outcome that occurred. They wish to incentivize the agent to play actions that generate high rewards in expectation. To this end, the principal may offer the agent a \emph{contract}: a mapping from outcomes to transfer amounts with the promise that if a particular outcome occurs, the principal will transfer the corresponding amount to the agent. The principal wishes to maximize their total \emph{profit}, the rewards they receive less the amount they must transfer to the agent. We emphasize that this model is quite general and many applications do indeed have a principal subject to both information asymmetries. For example, consider a health insurance provider who can offer several different insurance policies (i.e. a menu of contracts), which incentivize different types of agents (e.g. those with and without pre-existing health conditions) to perform certain actions (e.g. to exercise, to get regular checkups). The principal must worry about both the moral hazard aspect of contracts while also accounting for adverse selection.


\subsection{Our Results}
\subsubsection*{Contracts without Types}

Before we proceed to our results, we give a brief sketch of analogous results for the principal-agent 
problem without types. Many of these results appear in \cite{dutting2019simple}, which was one of the first papers to 
study contracts from an algorithmic/computational perspective.

Without types, it is possible to efficiently find the optimal contract for a given principal-agent problem 
-- one way to do so is to, for each action, find the optimal contract that induces the agent to play this 
action by solving a linear program. However, the optimal contracts 
are often complex and unnatural: they can be non-monotone (better outcomes for the principal don't necessarily correspond to better payments for the agent), non-budget-balanced (the principal may pay more than their value for certain outcomes on those outcomes), and can be very sensitive to the parameters of the problem (in particular the forecast probabilities $\ForecastTensor$). 

Such contracts rarely occur in practice. Instead, people tend to use simpler, more-easily understood contracts. One of the simplest classes of contracts is the class of \emph{linear contracts} where the principal transfers a constant $\alpha$ fraction ($\alpha \in [0, 1]$) of their total reward to the agent. 

In \cite{dutting2019simple}, the authors also show that when there are $n$ actions the best linear contract achieves an $O(n)$-approximation to the best overall contract. In fact, their proof further shows that the profit of the best linear contract is an $O(n)$-approximation to the best profit achievable in the absence of moral hazard, when the principal can directly observe (and contract on) the agent's action (this is often referred  to in economics as the ``first-best'' outcome). In this case, the principal will persuade the agent to take the action $i$ (by paying them $c_i$ if and only if they play $i$) which maximizes the principal's expected reward less the agent's cost. In addition to being the optimal profit in the absence of moral hazard, this quantity can also be viewed as the maximum possible expected welfare -- we will henceforth refer to it as the ``optimal welfare''.  

\subsubsection*{Contracts with Types}

When the principal only knows that the agent is one of many possible types, the problem becomes more difficult.

To begin with, it is no longer necessarily true that the best mechanism for the principal consists of presenting the agent with a single contract. As is the case in traditional Bayesian mechanism design (which also has hidden types), the principal may wish to present the agent with a \emph{menu} of different contracts (each specifying how much the principal would transfer for each outcome). Each agent now begins by  choosing their most desirable contract from this menu, then picking the action which maximizes their utility as a result of this contract. 

In the previous section, it was possible to efficiently (in polynomial time in the inputs) compute the optimal contract. Now it is no longer clear how to efficiently compute the optimal menu for multiple types. In fact, it is not even clear how to efficiently compute the optimal \textit{single contract} for multiple types -- the same LP approach as before now requires solving exponentially many LPs, one for each assignment from types to actions. In~\Cref{sec:computation}, we show that this difficulty is inherent: we show that computation of both the optimal single contract and optimal menu are \ApxHard{}. That is, not only is it NP-hard to compute the optimal single contract (/optimal menu) exactly, there exists a constant $\epsilon > 0$ such that getting any $(1 - \epsilon)$-approximation to the optimal profit is \NPHard{}.

\begin{restatable}[APX-Hardness]{theorem}{apxhardness}
\label{thm:apx-hard}
  Given a principal-agent problem $\PrincipalAgentProblem$, it is \ApxHard to compute the optimal contract/menu. In other words, there exists a constant $\epsilon > 0$ such that it is \NPHard to compute a (incentive-compatible) menu of contracts $\ContractMatrix$ or single contract $\ContractVector$ whose value approximates the optimal expected profit by a factor of $(1 - \epsilon)$. 
\end{restatable}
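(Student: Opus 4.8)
The plan is to give a gap-preserving (L-reduction-style) reduction from \DSet on bounded-degree graphs, which is \ApxHard: for some constant $\gamma>0$ it is \NPHard to distinguish graphs admitting a dominating set of size at most $k$ from those whose smallest dominating set has size at least $(1+\gamma)k$. The feature I would exploit is that on graphs of maximum degree $\Delta=O(1)$ the optimum dominating-set size is $\Theta(N)$ (it lies between $N/(\Delta+1)$ and $N$), so a constant additive gap in the dominating-set size corresponds to a constant additive gap in a profit whose total magnitude is also $\Theta(N)$, hence to a constant \emph{multiplicative} gap. A single polynomial-size construction will handle both the single-contract and the menu versions, with the menu case coming essentially for free once one observes that in this instance only the per-outcome maximum transfer matters.

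\textbf{Construction.} Given a graph $G=(V,E)$ with $|V|=N$ I would build an instance with: one outcome $o_u$ per vertex $u$ (reward $\Reward{o_u}=R$) plus a null outcome (reward $0$); one vertex-action $a_u$ per vertex (all of common cost $c$) plus a null action (cost $0$); and two families of types. The \emph{vertex types} $\{t_v\}_{v\in V}$ have forecasts under which $a_u$ deterministically produces $o_u$ when $u\in N[v]$ (the closed neighborhood of $v$) and produces the null outcome otherwise; the \emph{tax types} $\{\tau_u\}_{u\in V}$ deterministically produce $o_u$ under every action. The type prior is taken (by duplicating vertex types, should the prior be forced to be uniform) so that each vertex type carries weight $\mu$ and each tax type weight $\mu_\tau$ with $\mu_\tau$ slightly below $\mu$. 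Given a contract $\ContractVector$, write $S=\{u:x_{o_u}\ge c\}$. A vertex type $t_v$ prefers a productive action to the null action exactly when some $o_u$ with $u\in N[v]$ carries transfer at least $c$, i.e. exactly when $v$ is dominated by $S$; a dominated type then yields the principal net $\mu(R-c)$, while each tax type $\tau_u$ always returns reward $R$ but surrenders transfer $x_{o_u}$, costing $\mu_\tau c$ precisely when $u\in S$. Consequently the optimal contract sets every relevant transfer to exactly $c$, and the expected profit is an affine function $\mu(R-c)D(S)+\mu_\tau R N-\mu_\tau c|S|$ of the number $D(S)$ of dominated vertices and of $|S|$.

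\textbf{Analysis.} Choosing $c=R/2$ and $\mu/2<\mu_\tau<\mu$ makes it strictly worthwhile to dominate every vertex (the marginal gain $\mu(R-c)$ from covering a new vertex exceeds the marginal tax $\mu_\tau c$ of a new selected outcome, since this is equivalent to $\mu>\mu_\tau$), so the profit-maximizing $S$ is a minimum dominating set and the optimal profit equals $\mu(R-c)N+\mu_\tau R N-\mu_\tau c\cdot\mathrm{OPT}_{\mathrm{DS}}$. The first two terms are a fixed $\Theta(\mu R N)$ offset, while the variable term swings by $\Theta(\mu_\tau c\gamma\,\mathrm{OPT}_{\mathrm{DS}})=\Theta(\mu R\gamma N/\Delta)$ between the two sides of the dominating-set gap; as $\Delta$ and $\gamma$ are constants, a $(1-\epsilon)$-approximation of the optimal profit for a small enough constant $\epsilon$ would separate the two cases, and this separation is \NPHard. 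For the menu version I would note that every agent's behavior here depends only on $\bar x_u:=\max_j x^{(j)}_{o_u}$, the largest transfer any menu entry places on $o_u$: a served vertex type simply picks the entry maximizing its own transfer, and tax type $\tau_u$ extracts exactly $\bar x_u$. Hence the menu's value is governed by $S=\{u:\bar x_u\ge c\}$ in precisely the same way a single contract is, so menus confer no advantage and the identical bound applies.

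\textbf{The main obstacle.} I expect the delicate point to be the quantitative balancing rather than the combinatorics: the tax types unavoidably contribute a large constant reward offset $\mu_\tau R N$, so to keep the dominating-set-dependent term a constant fraction of the total profit I must keep $\mu_\tau$ comparable to $\mu$ while simultaneously keeping $c$ a constant fraction of $R$ and preserving both ``full domination is optimal'' and ``positive net profit per served type''; this is exactly where boundedness of $\Delta$ (forcing $\mathrm{OPT}_{\mathrm{DS}}=\Theta(N)$) is essential. Two smaller technical points to discharge are tie-breaking/incentive-compatibility at $x_{o_u}=c$ (resolved by an infinitesimal perturbation $c\mapsto c+\eta$ or a favorable tie-breaking convention) and verifying that no contract benefits from some $x_{o_u}\in(0,c)$ (such a transfer dominates no new vertex yet still pays the tax, so it is never optimal), which pins the optimum to $x_{o_u}\in\{0,c\}$ and completes the correspondence with minimum dominating sets.
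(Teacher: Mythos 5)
Your proposal is correct and is essentially the paper's own proof: the same reduction from \DSet on bounded-degree graphs, with vertex types that are served iff dominated and tax types that deterministically surrender reward but charge the transfer on their outcome, yielding the same affine profit formula in $|S|$ (after pinning transfers to $\{0,c\}$ and arguing full domination is optimal), and the same observation that menus collapse to the per-outcome maximum transfer so single contracts and menus are handled simultaneously. The only substantive deviation is that you use one action per vertex ($n = N+1$) where the paper exploits the degree-$3$ bound to index neighbors by action and keep $n = 5$ constant --- immaterial for the theorem as stated, though the constant-action feature is what lets the paper claim its hardness precludes algorithms exponential only in $n$.
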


We complement this result by showing that there exists an exponential time algorithm to compute the optimal contract. 
\begin{restatable}[Exponential-Time Algorithm]{lemma}{expalg}
\label{lem:algorithm}
There exists an algorithm to compute the optimal contract in time 
$$O(\min(\poly(n^T,m), \poly((n^2T)^m)))$$ 
\noindent
where $n$ is the number of actions, $m$ is the number of outcomes, and $T$ is the number of types. 
\end{restatable}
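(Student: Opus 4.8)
The plan is to give two algorithms and return whichever is faster; both rest on the same primitive. Fix a \emph{best-response profile} $\actionmapping$ recording which action each type is meant to play, where each type's action set is augmented by an ``opt-out'' option to model individual rationality. For a single contract $\ContractVector \in \R^m_{\ge 0}$, the requirement that type $t$ (weakly) prefers $i_t$ to any other action $j$ is the linear inequality $\sum_o \TypedForecast{t}{i_t}{o}\,\Contract{o} - \cost{i_t} \ge \sum_o \TypedForecast{t}{j}{o}\,\Contract{o} - \cost{j}$, participation is the linear constraint that this utility is nonnegative, and the principal's expected profit $\sum_t p_t\big(\TypedExpectedReward{t}{i_t} - \sum_o \TypedForecast{t}{i_t}{o}\,\Contract{o}\big)$ (with $p_t$ the prior probability of type $t$) is affine in $\ContractVector$, since each $\TypedExpectedReward{t}{i_t}$ is a constant once the profile is fixed. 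Maximizing this affine objective over the polytope cut out by these constraints is a linear program, and the overall optimum is the best LP value over all profiles. The menu case is identical after replacing $\ContractVector$ by the $T$ contracts $\TypedContractVector{1}, \dots, \TypedContractVector{T}$ and adding the (still linear) constraints that type $t$ prefers its own contract-action pair to playing any action under any other type's contract. Since there are at most $(n+1)^T = \poly(n^T)$ profiles, enumerating them and solving one LP apiece runs in $\poly(n^T, m)$ time for both objectives.

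For the second bound I would show that, for a single contract, the profit is a piecewise-affine function of $\ContractVector \in \R^m$ whose pieces are exactly the maximal regions on which the induced profile $\actionmapping$ is constant. The boundaries of these regions are indifference hyperplanes: for each type $t$ and pair of actions $(i,j)$, the hyperplane $\sum_o(\TypedForecast{t}{i}{o} - \TypedForecast{t}{j}{o})\,\Contract{o} = \cost{i} - \cost{j}$, together with the $O(Tn)$ participation hyperplanes. This is an arrangement of $O(Tn^2)$ hyperplanes in $\R^m$, which has only $O((Tn^2)^m)$ cells, and the best-response profile is constant on each cell. Hence at most $O((Tn^2)^m)$ profiles are realizable by any contract; enumerating the cells of the arrangement by a standard traversal and solving the associated LP on each yields a $\poly((n^2T)^m)$ running time. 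Returning the faster of the two algorithms gives the claimed $O(\min(\poly(n^T,m), \poly((n^2T)^m)))$ bound.

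I expect the bulk of the effort to be correctness bookkeeping rather than a single hard idea. The points to verify are: (i) restricting to weak best responses loses nothing, which requires fixing a tie-breaking convention (the agent breaks ties in the principal's favor) and checking the per-profile LP attains its optimum; (ii) opt-out and participation are faithfully captured by the extra null action and its hyperplanes; and (iii) the arrangement cell count genuinely bounds the number of realizable profiles in the single-contract case. I would also flag that, in this approach, the arrangement speed-up is specific to the single contract: for a menu the induced profile depends on all $T$ contracts jointly, i.e.\ on a point of $\R^{Tm}$, so the analogous arrangement has far more cells and only the $\poly(n^T,m)$ algorithm applies. Pinning down the arrangement bound and its interface with the per-cell LPs is where I would concentrate the rigor.
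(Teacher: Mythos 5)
Your proposal is correct, and your first algorithm is exactly the paper's: enumerate the $O(n^T)$ action profiles $\actionmapping$ and solve one LP per profile (the paper minimizes expected transfers subject to the preference constraints, you maximize the affine profit --- equivalent once the profile is fixed), handling menus by the same enumeration with the cross-type IC constraints added. Your second algorithm takes a slightly different route to the same count: the paper observes that the optimal contract is an extreme point of the feasible region of some profile's LP, hence is determined by a size-$m$ subset of the $O(n^2T)$ pairwise-preference constraints, and enumerates all $\binom{O(n^2T)}{m} = O((n^2T)^m)$ candidate subsets directly; you instead enumerate the $O((n^2T)^m)$ cells of the arrangement of the same indifference hyperplanes in $\R^m$ and solve a per-cell LP. These are two faces of one counting argument, but your version buys a little extra rigor at the margins: solving an LP over each closed cell handles attainment and the principal-favorable tie-breaking transparently, and it sidesteps a small imprecision in the paper's phrasing, whose extreme-point pool should also include the nonnegativity constraints $x_j \ge 0$ among the candidate defining constraints (a benign fix, since this only changes constants in the exponent's polynomial). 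The paper's version is in turn more elementary, requiring no arrangement-traversal machinery --- one just solves an $m \times m$ linear system per subset and evaluates the profit. Your closing observation that the arrangement speed-up is specific to single contracts (a menu lives in $\R^{Tm}$, so only the $\poly(n^T, m)$ bound applies) matches the paper exactly, which lists $\poly(n^T, m)$ alone for the optimal menu and the min of both bounds only for the optimal single contract.
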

Our algorithm is tight in the following sense: if the number of types or number of outcomes is constant, then it is polynomial-time. However, the hardness instances that we generate only contain a constant number of actions and hence preclude an algorithm whose runtime is exponential only in the number of actions. 

The computational hardness of computing these optimal mechanisms make it even more important to understand the power of simple classes of contracts. For example, how does the profit of the best linear contract (which are efficiently computable) compare to that of the best single contract or best menu of contracts? One way to get such a bound is to look at the best linear contract for each type individually and take the best of these contracts: since each individual linear contract achieves an $O(n)$-approximation to the welfare for the corresponding type, and since there are $T$ types, this method will result in an $O(nT)$-approximation to the total welfare (and hence to the best single contract or best menu). In~\Cref{sec:linear}, we further explore the case where each agent type has the same cost values, and show that this structure means that the best linear contract actually scales as a $O(n \log T)$-approximation to the total welfare, and that this is tight.

\begin{theorem}[Linear contracts]\label{thm:linear_main}
For a given principal-agent problem with types, let $\OLinear$ denote the profit of the best linear contract and let $\OWelfare$ denote the optimal welfare. Then

$$\OLinear \ge \Omega\left(\frac{1}{n\log T}\right)\OWelfare.$$

Moreover, this is tight; for any $n$ and $T$ there exists an instance of the principal-agent problem where 

$$\OLinear \le O\left(\frac{1}{n\log T}\right)\OWelfare.$$
\end{theorem}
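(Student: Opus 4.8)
The plan is to collapse both the profit and the welfare onto the one-dimensional ``reward curves'' of the types and then run a charging argument organized by the scale of $1-\alpha$. For a linear contract with parameter $\alpha$, write $g_t(\alpha)$ for the expected reward of the action that type $t$ best-responds with; since increasing $\alpha$ only makes higher-reward actions relatively more attractive, each $g_t$ is a non-decreasing step function taking at most $n$ distinct values. Writing $\lambda_t$ for the prior probability of type $t$ and $F(\alpha):=\sum_t \lambda_t\, g_t(\alpha)$, the principal's profit is $\OLinear=\max_\alpha (1-\alpha)F(\alpha)$. After normalizing so that the cheapest action has cost $0$ (a harmless opt-out action), the agent's indirect utility $u_t(\alpha)=\max_i(\alpha R^{(t)}_i-c_i)$ is convex with $u_t'=g_t$ and $u_t(0)=0$, so $u_t(1)=\int_0^1 g_t$. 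This yields the clean reformulation
\[ \OWelfare=\sum_t \lambda_t \max_i\bigl(R^{(t)}_i-c_i\bigr)=\int_0^1 F(\alpha)\,d\alpha, \]
reducing the theorem to the statement that a single maximizing $\alpha$ recovers an $\Omega\bigl(1/(n\log T)\bigr)$ fraction of the area under $F$.

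Next I would partition the contract axis into the geometric scales $S_j=\{\alpha:1-\alpha\in(2^{-j-1},2^{-j}]\}$. The crucial elementary estimate is a per-scale area bound: since $F$ is non-decreasing and $F(\alpha)\le \OLinear/(1-\alpha)$ for every $\alpha$, one gets $\int_{S_j}F\le |S_j|\cdot F(\sup S_j^-)\le \OLinear$ for each $j$. Summing over scales and invoking the integral identity gives the master inequality $\OWelfare\le \OLinear\cdot\bigl|\{j:\int_{S_j}F>0\}\bigr|$, so the whole problem is now reduced to counting the number of scales that carry a non-negligible amount of welfare.

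The main obstacle, and the only place the shared-cost hypothesis enters, is exactly this scale count: I would aim to prove that when all types share the same cost vector the welfare is supported (up to a constant fraction) on $O(n\log T)$ scales. Two phenomena should combine here. First, each individual type deposits its welfare essentially on its $\le n$ switch-points: the plateaus of $g_t$ beyond the last switch spread over infinitely many scales as $\alpha\to1$, but their contribution decays geometrically in the scale index, so up to a constant factor a single type occupies only $O(n)$ welfare-bearing scales. Second, the shared costs synchronize the scale locations across types: because the switch between two Pareto actions of type $t$ occurs at $\alpha^{(t)}_i=\tfrac{c_i-c_{i^-}}{R^{(t)}_i-R^{(t)}_{i^-}}$ with a \emph{common} numerator, the types cannot scatter their substantial welfare across more than $O(\log T)$ distinct bands of scales without incurring the geometric decay above—so the union of the per-type scale-sets has size $O(n\log T)$ rather than the naive $O(nT)$. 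This is the delicate step and is where I expect the real work to lie; granting it, the master inequality becomes $\OWelfare\le O(n\log T)\cdot\OLinear$, and a pigeonhole over the surviving scales exhibits one representative contract $\hat\alpha_j=\sup S_j$ with $(1-\hat\alpha_j)F(\hat\alpha_j)=\Omega\bigl(\OWelfare/(n\log T)\bigr)$.

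For tightness I would stack $\Theta(\log T)$ gadgets at geometrically separated bands of $1-\alpha$, each gadget being the known worst-case $n$-action instance for which linear contracts lose a factor of $n$ against welfare \citep{dutting2019simple}, with the rewards of the $k$-th gadget scaled by $2^{k}$ so that (i) every gadget contributes equally to $\OWelfare$ and (ii) any fixed $\alpha$ lies in the active range of at most one gadget and there extracts only a $1/n$ fraction of that gadget's welfare. Distributing the $T$ types evenly among the $\Theta(\log T)$ gadgets and giving all of them a single common cost vector then forces $\OLinear\le O\bigl(1/(n\log T)\bigr)\OWelfare$, matching the upper bound.
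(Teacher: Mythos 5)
Your reduction framework is sound and genuinely different from the paper's: the identity $\OWelfare=\int_0^1 F(\alpha)\,d\alpha$ and the per-scale bound $\int_{S_j}F\le\OLinear$ are both correct, whereas the paper never dyadically decomposes the $\alpha$-axis (it sorts the $(n-1)T$ breakpoints of $\Utility$ and runs a repetition-counting argument). But the proposal has a genuine gap exactly where you flag it: the claim that under shared costs a constant fraction of the welfare is carried by $O(n\log T)$ scales is \emph{assumed} (``granting it''), and the heuristic offered does not prove it. The missing quantitative ingredient is the deposit identity: after the paper's WLOG that every action is played by every type (so each breakpoint is between \emph{consecutive} actions --- note your ``Pareto actions'' have numerators $c_i-c_{i^-}$ that are \emph{sums} of gaps, which without this WLOG gives $O(n^2)$ distinct numerators, not $n$), a breakpoint of type $t$ for the gap $\delta_i=c_{i+1}-c_i$ at location $\alpha$ satisfies $\alpha=\delta_i/w$ and deposits welfare $w-\delta_i=\delta_i\frac{1-\alpha}{\alpha}$, i.e.\ $\Theta(\delta_i 2^{-j})$ at scale $j\ge 1$, with at most one such deposit per type per gap. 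From this one can actually complete your lemma (for each gap, deposits below the max-deposit scale force $m_j=0$ beyond a $\log T$ window, and the tail above decays geometrically since $m_j\le T$), but this is precisely the paper's mechanism in disguise: its groups $S_k$ and the bound $\MagicRatio{i}\ge k$ say the $k$-th repetition of a gap deposits at most $O(\OLinear/k)$, and the harmonic sum $\sum_{k\le T}1/k=O(\log T)$ per gap is where $\log T$ comes from. As written, the entire $n\log T$ content of the theorem sits in the unproven step. (Also, your master inequality needs care: $\left|\{j:\int_{S_j}F>0\}\right|$ is infinite once any breakpoint exists, since a deposit at scale $j_b$ spreads over all scales $j\ge j_b$; the count must be of scales carrying a non-negligible share, which again requires the lemma.)

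The tightness construction also fails as described, for the same structural reason. With a \emph{common} cost vector, the deposit identity pins a type whose switches occur at scale $j$ to welfare $\Theta\bigl(2^{-j}\sum_i\delta_i\bigr)$: you cannot scale a gadget's rewards by $2^k$ to equalize contributions while keeping costs fixed, since scaling rewards alone both moves the breakpoints ($\alpha=\delta/w$ shrinks) and changes $\max_i(2^kR_i-c_i)$, destroying the near-cancellation $R_i\approx c_i$ that makes the \dutting{} et al.\ instance hard; and distributing the $T$ types \emph{evenly} among $\Theta(\log T)$ geometrically separated bands necessarily gives band $j_\ell$ total welfare $\propto 2^{-j_\ell}$, i.e.\ geometrically unequal bands. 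Equal band contributions under shared costs force geometrically growing type counts per band, which is exactly what the paper's Theorem~\ref{thm:main-lower} does: costs $c_i=\sum_{j\le i}\lambda^j$ common to all types, harmonic reward perturbations $\TypedExpectedReward{t}{i}=c_i+\frac{i}{tn\log T}$, so that band $k$ contains the $\approx 2^k$ types $t\in[2^k,2^{k+1})$, each contributing welfare $\Theta\bigl(\frac{1}{t\log T}\bigr)$, and every single $\alpha$ earns only $O\bigl(\frac{1}{nT\log T}\bigr)$. Your ``bands times within-band $n$-loss'' intuition matches that construction, but the even split plus reward-only scaling is inconsistent with the shared-cost hypothesis that the whole theorem turns on.
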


Our analysis pins down the pivotal problem parameter to be the number of unique ``cost gaps''. Consider each agent type separately, and sort the entries of each type's cost vector $\CostVector$. The cost gaps are the differences between adjacent entries of the sorted cost vector, and our analysis hinges on how these cost gaps are repeated. For our main case where all types have the same costs, cost gaps are repeated $T$ times over the types and this repetition drives our improved upper bound. 
Other assumptions that reduce the number of unique cost gaps also imply an improved approximation ratio, e.g. an arithmetic sequence of cost values implies a $O(\log(nT))$ approximation ratio (see~\Cref{cor:main-upper} for more details).

We next explore the more-fine grained gaps in power between different options available to the principal. For example, how much better is the best single contract than the best linear contract (in the presence of types)? What fraction of welfare is the best menu guaranteed to achieve? Specifically, we consider the following five benchmarks for the principal, in order from least powerful (linear contracts) to most powerful (welfare). 


\begin{itemize}[noitemsep]
  \item \textbf{Linear Contract.} The expected profit of the best linear contract. We write this benchmark as $\OLinear$.
    \item \textbf{Single Contract.} The expected profit of the best single contract. We write this benchmark as $\OSingle$.
  \item \textbf{Menu of Contracts.} The expected profit of the best menu of contracts. We write this benchmark as $\OMenu$.
  \item \textbf{Type-Aware Contract.} The discussion so far assumes that the principal does not (or is not allowed to) observe the type of the agent when designing the mechanism (i.e. the adverse selection asymmetry). If the principal is allowed to offer type-aware contracts, then the best mechanism for the principal is to simply offer each type of agent their optimal contract. We write this benchmark as $\OTypeAware$.
  \item \textbf{Welfare.} The maximum possible profit the principal can hope to obtain is upper bounded by the maximum total expected utility possible -- the expected welfare. This is also the profit achievable when the principal can observe both the type of the agent and the action they take (the ``first-best'' benchmark). We write this benchmark as $\OWelfare$.
\end{itemize}

We characterize the optimal worst-case ratio between nearly all pairs of benchmarks: the results are summarized in Table \ref{tab:ratios}. For all but two of the pairs of benchmarks, we show that the optimal worst-case ratio is equal to $\Theta(n\log T)$; that is, this $\Theta(n\log T)$ separation holds not only between $\OLinear$ and $\OWelfare$, but at a finer-grained level between most pairs of intermediate benchmarks. One exception to this is the gap between $\OTypeAware$ and $\OWelfare$, which is $\Theta(n)$ (this follows directly from the definition of $\OTypeAware$ and the $O(n)$ approximation ratio of \cite{dutting2019simple}). The other (possible) exception is the gap between $\OSingle$ and $\OMenu$; this gap we can only prove is between $2$ (see~\Cref{lem:gap3v4}) and $O(n \log T)$. Understanding the true value of this gap is an interesting open problem, and the last remaining piece of Table \ref{tab:ratios}:

\begin{problem}[Power of menus]
What is the worst-case gap between the expected profit of the best single contract ($\OSingle$) and the expected profit of the best menu of contracts ($\OMenu$)? Does the best single contract achieve a constant-factor approximation (independent of $n$, $m$, and $T$) to the best menu of contracts?
\end{problem}

One may also observe that the number of outcomes $m$ does not appear in Table \ref{tab:ratios}. This is not an oversight: all of our upper bounds (specifically Theorem \ref{thm:linear_main}) works for any number of outcomes $m \geq 1$, and all of our lower bound counterexamples only require a constant number of outcomes ($m \leq 5$). In other words, the gaps between different benchmarks do not seem to scale meaningfully in the number of outcomes. 


\subsection{Related Work}


Contract Theory is a fundamental branch of economics and has been applied in a wide variety of contexts (see~\cite{nobel2016oliver}). For a general introduction to contract theory, we refer the interested reader to the books of~\cite{bolton2005contract,laffont2009theory} and the references therein. While the phenomena of moral hazard and adverse selection have both been studied extensively in the contract theory literature, to the best of our knowledge the interaction between these two phenomena has been significantly less explored. One such example is the work \cite{Laffont1986linked} which consider the principal-agent model with types where the types and actions are linked. Another (more closely related) example is the work of \cite{gottlieb2015simple}, where the case where the types and effort levels can be arbitrarily related.  Their work characterizes a number of different assumptions under which the optimal menu consists of a single contract. 

Our work is most closely related (and in some sense is a direct successor) to the work of \dutting et al. (\cite{dutting2019simple, dutting2020complexity}), which is the first paper we are aware of to study contract theory with the tools of approximation algorithms and computational complexity. In \cite{dutting2019simple}, the authors show that linear contracts are ``robust'' in a certain sense and that they can achieve an $O(n)$-approximation to welfare. In \cite{dutting2020complexity}, the authors show that there exist natural principal-agent problems with exponentially-sized (but succinctly representable) outcome spaces for which finding the optimal contract is computationally hard. 

Some earlier research studies specific variants of contract problems from a computational perspective. One such line of research is the work of \cite{babaioff2006agency} and the related work of \cite{babaioff2006mixed,babaioff2009free}. They introduce a model known as ``Combinatorial Agency'' where a team of agents needs to be incentivized to perform a set of hidden actions, where a combination of their actions yields a certain outcome. Another such line of work studies the problem of learning optimal contracts in a repeated setting \cite{ho2016adaptive}. In \cite{xiao2019optimal}, the authors prove computational results for agents with multiple types (analogous to the results we show in Section \ref{sec:computation}) but for the restricted setting where the principal can contract directly on the agents' actions (we additionally show an APX-hardness result over the NP-hardness result in their paper). 

\newcommand{\multirowrotate}[2]{\multirow{#1}{*}{\rotatebox[origin=c]{90}{#2}}}

\begin{table}
\begin{tabular}{r|ccccc}
  & \multirowrotate{7}{$\OWelfare$}
  & \multirowrotate{7}{$\OTypeAware$}
  & \multirowrotate{7}{$\OMenu$}
  & \multirowrotate{7}{$\OSingle$}
  & \multirowrotate{7}{$\OLinear$} \\
  \\
  \\
  \\
  \\
  \\
  \\
  \midrule
  $\OWelfare$         & $=$                &                    &                           &                    &     \\
  $\OTypeAware$ & $\Theta(n)$        & $=$                &                           &                    &     \\
  $\OMenu$            & $\Theta(n \log T)$ & $\Theta(n \log T)$ & $=$                       &                    &     \\
  $\OSingle$          & $\Theta(n \log T)$ & $\Theta(n \log T)$ & $O(n \log T)$ and $\ge 2$ & $=$                &     \\
  $\OLinear$          & $\Theta(n \log T)$ & $\Theta(n \log T)$ & $\Theta(n \log T)$        & $\Theta(n \log T)$ & $=$ \\
  \bottomrule
\end{tabular}
\caption{Table of our results concerning worst-case ratios between our five benchmarks.}
\label{tab:ratios}
\end{table}

\subsection{Organization}
In~\Cref{sec:prelims}, we set up the problem formally and introduce the five benchmarks that we will use to
evaluate the principal's strategy. In~\Cref{sec:linear}, we will show our main result that 
linear contracts are $\Theta(n \log T)$-competitive with respect to the optimal welfare. We next show how 
to use this reduction combined with a set of reductions to get tight results for almost all the benchmark comparisons in~\Cref{sec:reductions}. We complement these results by showing APX-hardness and exponential time algorithms to compute the optimal menu/contract in~\Cref{sec:computation}.

 \begin{table}
\centering
\begin{tabular}{ccc}
  \toprule
  Measure                   & Running Time                             & Hardness \\ \midrule
  \OWelfare                 & $O(nmT)$                                 & - \\
  \OTypeAware               & $\poly(n, m, T)$                         & - \\
  \multirow{2}{*}{\OMenu}   & $\poly(n^T,m)$                        & \ApxHard with $n = 5$ \\
                            &                                          & (\Cref{thm:apx-hard}) \\
  \multirow{2}{*}{\OSingle} & $\min(\poly(n^T,m), \poly(n^2T)^m))$ & \ApxHard with $n = 5$ \\
                            & (\Cref{lem:algorithm})                   & (\Cref{thm:apx-hard}) \\
  \OLinear                  & $\poly(n,m,T)$                           & - \\
  \bottomrule
\end{tabular}
\caption{Table of our results concerning computability of our five measures.}
\label{tab:computability}
\end{table}

\section{Preliminaries}\label{sec:prelims}

Throughout this paper, we use $[x]$ to denote the one-indexed set $\{1, 2, \dots, x\}$.

\subsection{The Classical Principal-Agent Problem}

In the classical (discrete) principal-agent problem, there is an agent with $n$ actions and $m$ possible (disjoint) outcomes of these actions. Each action $i \in [n]$ incurs a cost $\Cost{i} \geq 0$ for the agent, and each outcome $j \in [m]$ endows the principal with some reward $\Reward{j} \geq 0$. Taking action $i$ leads to a distribution $\D_{i}$ over outcomes; we write $\Forecast{i}{j}$ for the (forecasted) probability that outcome $j \in [m]$ arises as a result of the agent taking action $i \in [n]$. The principal can observe the eventual outcome $j \in [m]$, but not the action $i \in [n]$ that the agent takes (in the contract theory literature, this is referred to as the \emph{hidden-action model}). Without loss of generality, we will always assume the actions are ordered by increasing cost ($\Cost{1} \le \Cost{2} \le \cdots \le \cost{n}$) and outcomes are ordered by increasing reward ($\Reward{1} \le \Reward{2} \le \cdots \le \Reward{m}$). We further assume that there is a zero-cost action ($c_1 = 0$) and a zero-reward outcome ($r_1 = 0$); see \Cref{subsec:simplifying} for why we can do so without loss of generality. We typically group these rewards together into a reward vector $\RewardVector$, these forecasts together into a forecast matrix $\ForecastMatrix$, and these costs together into a cost vector $\CostVector$. These three things constitute the principal-agent problem $\PrincipalAgentProblem$.

The goal of the principal in this problem is to incentivize the agent to perform actions which lead to beneficial outcomes for the principal. To this end, the principal can offer the agent a \emph{contract}. A contract is specified by an $m$-dimensional vector $\ContractVector$, whose nonnegative entries $\Contract{j} \geq 0$ denotes the amount the principal will transfer to the agent in the event of outcome $j$. Given a contract $\ContractVector$, the agent chooses an action which maximizes their overall utility; i.e. the action $\OptimalAction(\ContractVector)$ given by
\begin{align*}
  \OptimalAction(\ContractVector) \in \argmax \left( \sum_{j=1}^m \Forecast{i}{j} \Contract{j} \right) - \Cost{i} \text{.}
\end{align*}

The principal's expected profit is their expected reward minus the expected transfer:
\begin{align*}
  \Profit(\CostVector, \ForecastMatrix, \RewardVector, \ContractVector) \triangleq \sum_{j=1}^m \Forecast{\OptimalAction(\ContractVector)}{j} (\Reward{j} - \Contract{j}).
\end{align*}
Note that in the instance where multiple actions maximize overall utility, we will tiebreak in favor of the principal and assume that the one that results in the highest profit is chosen. The principal can always approximate this tie-breaking behavior with arbitrary accuracy by adding a vanishing fraction of their rewards to the contract. Sometimes, when the principal-agent problem $\PrincipalAgentProblem$ is known from context, we will abbreviate $\Profit(\CostVector, \ForecastMatrix, \RewardVector, \ContractVector)$ to $\Profit(\ContractVector)$. The principal would like to choose a contract $\ContractVector$ to maximize $\Profit(\ContractVector)$. It is known that this can be solved in polynomial time  by solving $O(n)$ linear programs (each computing the minimum expected transfers to force action $i \in [n]$), but this may lead to strange or unnatural contracts that do not typically appear in practice.

One way to try to avoid the issue of strange or unnatural contracts is to restrict the principal to \emph{linear contracts}, where the principal simply transfers some fraction $\TransferCoefficient$ of their reward to the agent. In other words, linear contracts can be completely specified by a transfer coefficient $\TransferCoefficient \in [0, 1]$: $\Contract{j} = \TransferCoefficient \Reward{j}$. When considering linear contracts, we will often abuse notation and use $\TransferCoefficient$ to refer to the contract itself, writing $\OptimalAction(\TransferCoefficient)$ and $\Profit(\TransferCoefficient)$ to denote the optimal action and profit under the contract $\ContractVector = \TransferCoefficient \RewardVector$ respectively.

Another important quantity in the classical principal-agent problem is the maximum achievable \emph{welfare} (sum of principal profit and agent utility). Since by assumption the agent always has a zero-cost action to fall back on, the (maximum) expected welfare is an upper bound on the principal's expected profit. If we let $\ExpectedReward{i}$ denote the expected reward of the principal from the agent playing action $i$, i.e.
\begin{align*}
  \ExpectedReward{i} \triangleq \sum_{j=1}^m \Forecast{i}{j} \Reward{j}\text{,}
\end{align*}
then we can write the welfare $\OWelfare$ as
\begin{align*}
  \OWelfare \PrincipalAgentProblem
    \triangleq \max_i \left( \sum_{j=1}^m \Forecast{i}{j} \Reward{j} - \Cost{i} \right)
    = \max_i \left( \ExpectedReward{i} - \Cost{i} \right)
\end{align*}

\subsection{Typed Agents}

In this paper, we want to incorporate an adverse selection information asymmetry: the principal is uncertain about the exact properties of the agent. Specifically, they are uncertain about the probabilities $\Forecast{i}{j}$ of action $i$ resulting in outcome $j$. We model this by supposing there are $T$ different types of agent.  Although all our results generalize to the setting where the agent types are drawn from an arbitrary discrete distribution, we assume each agent occurs with the same probability to simplify our exposition. Each type $t \in [T]$ has its own set of distributions $\left\{\D_i^{(t)}\right\}_i$ over outcomes with associated probabilities $\TypedForecast{t}{i}{j}$, the probability that a type $t$ agent playing action $i$ results in outcome $j$. Unless otherwise specified, we assume that the cost $\Cost{i}$ of action $i \in [n]$ remains constant over types\footnote{We can model type-dependent costs with type-independent costs at the cost of additional actions. To do so, view the costs of a particular type as a multiset, and let the type-independent costs be the minimal super-multiset of these multisets. For each type's newly introduced costs, let their corresponding dummy actions have the the same probability distribution as that type's zero-cost action, ensuring that they are dominated by an existing action. For a detailed discussion of the effect of this reduction to the approximation ratios, see~\Cref{apx:nonuniform-costs}.}. This assumption is equivalent to saying that the type of an agent has no effect on what effort levels are available, just how effective each effort level is. As before, the principal only observes the eventual outcome $j$; not the action $i$ taken by the agent (hidden-action), nor the type of the agent $t \in [T]$ (adverse selection).

Again, the principal could offer the agent a single contract parameterized by a transfer vector $\ContractVector$. An agent of type $t$ will play the action $\TypedOptimalAction{t}(\ContractVector)$ given by
\begin{align*}
  \TypedOptimalAction{t}(\ContractVector) \in \argmax_i \left( \sum_{j=1}^m \TypedForecast{t}{i}{j} \Contract{j} \right) - \Cost{i} \text{.}
\end{align*}

We assume that each type $t$ occurs in the population with some probability. In this case the principal's profit from an agent of type $t$ is given by
\begin{align*}
  \TypedProfit{t}(\CostVector, \ForecastTensor, \RewardVector, \ContractVector)
    \triangleq \sum_{j=1}^m \TypedForecast{t}{\TypedOptimalAction{t}(\ContractVector)}{j} (\reward{j} - \contract{j})
\end{align*}
and their overall expected profit is given by 
\begin{align*}
  \Profit(\CostVector, \ForecastTensor, \RewardVector, \ContractVector)
    \triangleq \E_t \left[ \TypedProfit{t}(\CostVector, \ForecastTensor, \RewardVector, \ContractVector) \right] \text{.}
\end{align*}
Again, we will tiebreak between multiple utility-maximizing actions in favor of maximizing profit. Unless otherwise specified, for simplicity  we assume that all types occur with equal probability, i.e. the distribution over types is uniform. Again, if $\ContractVector$ is a linear contract with $\TransferCoefficient$ transfer coefficient, we abuse notation and write $\TypedOptimalAction{t}(\TransferCoefficient)$ and $\Profit(\TransferCoefficient)$ for the utility-maximizing action taken by a type $t$ agent and the total profit respectively.

With the addition of these types, a general principal can sometimes secure additional profit if instead they offer the agent a menu of contracts $\ContractMatrix = \left(\TypedContractVector{1}, \ldots, \TypedContractVector{t}\right)$, where an agent who reports their type to be $t$ receives the contract $\TypedContractVector{t}$. We say that such a menu is ``incentive-compatible'' (IC) if no type $t$ has an incentive to misreport their type as a different type $t' \ne t$:
\begin{align}
  \label{ineq:ic}
  \forall t, t' \in [T] \quad \max_i \left( \sum_{j=1}^m \TypedForecast{t}{i}{j} \TypedContract{t}{j} \right) - c_i
                          \ge \max_i \left( \sum_{j=1}^m \TypedForecast{t}{i}{j} \TypedContract{t'}{j} \right) - c_i
  \text{.}
  \tag{IC}
\end{align}

The principal's overall expected profit is given by
\begin{align*}
  \Profit(\CostVector, \ForecastTensor, \RewardVector, \ContractMatrix)
    \triangleq \E_t \left[ \TypedProfit{t}(\CostVector, \ForecastTensor, \RewardVector, \TypedContractVector{t}) \right] \text{.}
\end{align*}

We again define expected reward and welfare, the latter of which still upper-bounds the principal's expected profit:
\begin{align*}
  \TypedExpectedReward{t}{i} &\triangleq \sum_{j=1}^m \TypedForecast{t}{i}{j} \Reward{j}\text{,} \\
  \OWelfare \PrincipalAgentProblem
    &\triangleq \E_t \left[ \max_i \sum_{j=1}^m \TypedForecast{t}{i}{j} \Reward{j} - \Cost{i} \right]
    = \E_t \left[ \max_i \TypedExpectedReward{t}{i} - \Cost{i} \right]
    \text{.}
\end{align*}

\subsection{Five Benchmarks for Typed Principal-Agent Problems}

We summarize the definitions of the five benchmarks we consider for principal-agent problems with types:
\begin{enumerate}
  \item \OWelfare{}, the maximum expected welfare. This is the maximum expected profit attainable by a principal that can observe (and contract on) both types and actions.
  \[
    \OWelfare \PrincipalAgentProblem
      \triangleq \E_t \left[ \max_i \TypedExpectedReward{t}{i} - \Cost{i} \right]
  \]
  \item \OTypeAware{}, the expected profit of the best contract per type. This is the maximum profit attainable by a principal that can observe the types of agents (i.e. if the contracts he assigns can depend on the type of the agent).
  \[
    \OTypeAware \PrincipalAgentProblem
      \triangleq \max_{(\TypedContractVector{1}, \TypedContractVector{2}, \ldots, \TypedContractVector{t})}
        \E_t \left[ \TypedProfit{t}(\CostVector, \ForecastTensor, \RewardVector, \TypedContractVector{t}) \right]
  \]
  \item \OMenu{}, the expected profit of the best (incentive-compatible) menu of contracts.
  \[
    \OMenu \PrincipalAgentProblem
      \triangleq \max_{\ContractMatrix = (\TypedContractVector{1}, \ldots, \TypedContractVector{t}) \text{ is IC}}
        \E_t \left[ \TypedProfit{t}(\CostVector, \ForecastTensor, \RewardVector, \TypedContractVector{t}) \right]
  \]
  \item \OSingle{}, the expected profit of the best single contract.
  \[
    \OSingle \PrincipalAgentProblem
      \triangleq \max_{\ContractVector}
        \E_t \left[ \TypedProfit{t}(\CostVector, \ForecastTensor, \RewardVector, \ContractVector) \right]
  \]
  \item \OLinear{}, the expected profit of the best single linear contract.
  \[
    \OLinear \PrincipalAgentProblem
      \triangleq \max_{\TransferCoefficient \in [0, 1]}
        \E_t \left[ \TypedProfit{t}(\CostVector, \ForecastTensor, \RewardVector, \ContractVector = \TransferCoefficient \RewardVector) \right]
  \]
\end{enumerate}

Each benchmark is an upper bound on the next:
\begin{align*}
  \OWelfare\PrincipalAgentProblem &\ge \OTypeAware\PrincipalAgentProblem \\
                     &\ge \OMenu\PrincipalAgentProblem\\
                     &\ge \OSingle\PrincipalAgentProblem \\
                     &\ge \OLinear\PrincipalAgentProblem \text{.}
\end{align*}

\section{Linear Contracts}
\label{sec:linear}

In this section, we explore the surprising power of linear contracts, in which the principal gives a constant $\TransferCoefficient$ fraction of their reward to the agent. Our main goal is to show that the profit of the optimal linear contract is always within a $O(n \log T)$ factor of the optimal welfare and that this is tight; there exists a principal-agent problem $\PrincipalAgentProblem$ where the best linear contract achieves profit that is a $\Omega(n \log T)$ factor from the optimal welfare.

Central to both our upper and lower bounds are the following two observations about linear contracts. Recall our definition of the expected reward when a type $t$ agent plays action $i$: $\TypedExpectedReward{t}{i} \triangleq \sum_{j=1}^m \TypedForecast{t}{i}{j} \Reward{j}$. The first observation is that $\OWelfare$ and $\Profit(\TransferCoefficient)$ can be written entirely in terms of $\TransferCoefficient$, $\CostVector$, and the $\TypedExpectedReward{t}{i}$. In particular, the values of $\TypedForecast{t}{i}{j}$ and $\Reward{j}$ \emph{do not affect either of these measures} beyond their use in the computation of $\TypedExpectedReward{t}{i}$. Formally, we have that (recall that $\TypedOptimalAction{t}(\TransferCoefficient)$ is the utility-maximizing action taken by type $t$ under the linear contract $\alpha$):
\begin{align*}
  \OWelfare                     &= \E_t \left[ \max_i \left(\TypedExpectedReward{t}{i} - \Cost{i} \right) \right] \text{ and} \\
  \Profit(\TransferCoefficient) &= \E_t \left[ (1-\TransferCoefficient) \TypedExpectedReward{t}{\TypedOptimalAction{t}(\TransferCoefficient)} \right] \text{.}
\end{align*}

The second observation is that $\Profit(\TransferCoefficient)$ is a \emph{piecewise linear function with at most $nT$ pieces}. This stems from the fact that each $\TypedProfit{t}(\TransferCoefficient)$ is a piecewise-linear function with at most $n$ pieces, which is because for each range of $\alpha$ where $\TypedOptimalAction{t}(\TransferCoefficient)$ is constant, $\TypedProfit{t}(\TransferCoefficient)$ is linear (and there are at most $n$ possible values of $\TypedOptimalAction{t}(\TransferCoefficient)$). 

In fact, we can say something even stronger. Let's consider the resulting utility of a type $t$ agent when offered a linear contract with transfer coefficient $\TransferCoefficient$ and the expectation of this quantity over types:
\begin{align*}
  \Utility^{(t)}(\TransferCoefficient) &\triangleq \max_i \left( \alpha \TypedExpectedReward{t}{i} - \Cost{i} \right)\text{ and} \\
  \Utility(\TransferCoefficient)       &\triangleq \E_t \left[ \Utility^{(t)}(\TransferCoefficient) \right] \text{.}
\end{align*}
By similar logic, $\Utility(\TransferCoefficient)$ is a piecewise linear function with at most $nT$ pieces. Moreover, from the definitions of profit and utility we have that:
\begin{align}
  \label{eq:profit-utility}
  \Profit(\TransferCoefficient) = (1-\TransferCoefficient) \Derivative{\TransferCoefficient} \Utility(\TransferCoefficient)\text{.}
\end{align}

Since $\OWelfare = \Utility(1)$, \Cref{eq:profit-utility} allows us to relate the expected profit of the optimal linear contract to the maximum expected welfare and phrase our original problem as an optimization problem over the piecewise linear functions $\Utility$. \Naively{} solving this optimization problem (over all increasing piecewise linear functions $\Utility$ with $nT$ pieces) leads to an $O(nT)$-approximation factor, but this does not use any information about types\footnote{It may be instructive to compare this with the \dutting{} et al. proof of the $O(n)$ approximation factor of linear contracts, when types are not an issue \cite{dutting2019simple}.}. Since each type has the same set of allowable costs, this imposes additional constraints on $\Utility$. Specifically, the linear segments of $\Utility$, if extended, can have at most $T$ different ``$y$-intercepts''. We leverage these constraints to give an improved $O(n \log T)$-approximation ratio.

\subsection{Linear Contracts: Profit Guarantee}

In this subsection, we prove that the optimal linear contract obtains profit within a $O(n \log T)$ factor of $\OWelfare$. In the proof of the following theorem, it will be helpful to assume that all actions are played by all types for some choice of $\TransferCoefficient \in [0, 1]$, i.e. for every type $t \in [T]$ and action $i \in [n]$, there exists some $\TransferCoefficient \in [0, 1]$ such that $\TypedOptimalAction{t}(\TransferCoefficient) = i$. This holds without loss of generality because if action $i$ is never chosen by type $t$, we can increase $\TypedExpectedReward{t}{i}$ until it is barely chosen; this affects neither $\OWelfare$ nor $\OLinear$. 

\begin{theorem}
\label{thm:main-upper}
For any principal-agent problem $\PrincipalAgentProblem$,
\begin{align*}
  \OLinear \PrincipalAgentProblem \ge \Omega \left(\frac{1}{n \log T}\right) \OWelfare \PrincipalAgentProblem\text{.}
\end{align*}
\end{theorem}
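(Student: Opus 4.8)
The plan is to work entirely with the piecewise-linear reformulation supplied by \eqref{eq:profit-utility}, so that it suffices to exhibit a single transfer coefficient $\TransferCoefficient$ whose profit is $\Omega(1/(n\log T))$ of $\OWelfare = \Utility(1)$. First I would record a per-type ``crossing'' decomposition of welfare. Using the stated normalization that every action is played by every type, as $\TransferCoefficient$ increases from $0$ to $1$ each type $t$ passes through actions $1, 2, \dots, n$ in cost order, switching from $i$ to $i+1$ at the breakpoint $\gamma^{(t)}_i = (\Cost{i+1}-\Cost{i})/(\TypedExpectedReward{t}{i+1}-\TypedExpectedReward{t}{i})$. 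Integrating the step function $(\Utility^{(t)})'$ gives $\Utility^{(t)}(1) = \TypedExpectedReward{t}{1} + \sum_{i=1}^{n-1}(1-\gamma^{(t)}_i)(\TypedExpectedReward{t}{i+1}-\TypedExpectedReward{t}{i})$; that is, welfare is a sum over the $n-1$ consecutive crossings, each term being $(1-\gamma^{(t)}_i)\,\Delta R^{(t)}_i$ with $\Delta R^{(t)}_i$ the reward gap. The base term $\frac1T\sum_t \TypedExpectedReward{t}{1}$ is exactly $\Profit(0)\le\OLinear$, so it can be set aside.

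Next I would bucket the crossings by their cost gap $g_i = \Cost{i+1}-\Cost{i}$. Because costs are shared and every action is played, there are only $n-1$ such consecutive gaps, so $\OWelfare \le \Profit(0) + \sum_{i=1}^{n-1}\big(\tfrac1T\sum_t (1-\gamma^{(t)}_i)\Delta R^{(t)}_i\big)$, and paying a factor $n$ it suffices to capture, for the single heaviest index $i$, an $\Omega(1/\log T)$ fraction of that bucket's welfare with one linear contract. The point of bucketing by gap is a \emph{linkage}: within bucket $i$ the gap $g=g_i$ is common to all types, so the breakpoint $\gamma^{(t)}_i = g/\Delta R^{(t)}_i$ is a decreasing function of the single parameter $\Delta R^{(t)}_i$. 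This one-parameter structure, absent for arbitrary pairs, is what rescues the argument from an $\Omega(T)$ loss.

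I would then prove the bucket bound with a layer-cake / harmonic-sum estimate. Fix bucket $i$, write $v_t = \Delta R^{(t)}_i$, and consider the contract $\TransferCoefficient = g/\beta$ for a threshold $\beta \ge g$. Every type with $v_t \ge \beta$ has already crossed into action $i+1$ (since $\gamma^{(t)}_i = g/v_t \le \TransferCoefficient$), so its played action carries reward at least $v_t$, giving $\Profit(g/\beta) \ge \tfrac1T(1-\TransferCoefficient)\sum_{t:v_t\ge\beta} v_t \ge \tfrac1T(\beta-g)\,N_{\ge\beta}$, where $N_{\ge\beta} = |\{t : v_t \ge \beta\}| \le T$. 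The bucket's welfare is $\tfrac1T\sum_t (v_t-g) = \tfrac1T\int_g^\infty N_{\ge\beta}\,d\beta$; since $N_{\ge\beta}$ is integer-valued and at most $T$, rewriting the integral as $\sum_{k=1}^{T}\sup\{\beta-g : N_{\ge\beta}\ge k\}$ and using $(\beta-g)N_{\ge\beta}\le T\cdot\OLinear$ bounds the $k$-th term by $T\,\OLinear/k$, so the integral is at most $H_T\cdot T\,\OLinear = O(\log T)\,T\,\OLinear$. Dividing by $T$ gives the bucket bound, and combining over the $\le n$ buckets yields $\OWelfare \le O(n\log T)\,\OLinear$.

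The main obstacle is exactly the step where a single contract must serve all types at once: per-type optimal contracts sit at different $\TransferCoefficient$, and a crude ``best contract per type'' union loses a factor $T$. The two devices that beat this are (i) the reduction to one cost-gap bucket, where the shared-cost linkage $\gamma = g/\Delta R$ turns the thresholds into a monotone one-parameter family, and (ii) the harmonic estimate, which is tight and is the source of the $\log T$ (and, when several gaps coincide as in an arithmetic cost ladder, degrades gracefully to $\log(nT)$, matching \Cref{cor:main-upper}). A secondary point worth getting right is the ``all actions played'' normalization: without it crossings could skip actions and produce up to $\binom{n}{2}$ distinct gaps, inflating the $n$; the normalization is what keeps the number of distinct consecutive gaps at $n-1$.
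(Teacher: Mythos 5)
Your proof is correct, and while it rests on the same three pillars as the paper's argument --- the reduction to the piecewise-linear utility $\Utility(\TransferCoefficient)$, the fact that shared costs leave at most $n-1$ gap classes each crossed at most $T$ times, and a harmonic sum supplying the $\log T$ --- the organization is genuinely different. The paper never decomposes welfare by gap: it sorts all $(n-1)T$ breakpoints of $\Utility$ globally, lower-bounds profit at breakpoint $i$ by $\min\{V_i,\ (D_i/\delta_i)\,v_i\}$ using the \emph{cumulative} quantities $V_i$ and $D_i$, filters out indices with small $V_i$, and then pigeonholes over multiplicity classes $S_k$ (indices whose gap has been seen exactly $k$ times) to exhibit one good breakpoint. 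Your pointwise inequality $(\beta-g)N_{\ge\beta}\le T\cdot\OLinear$ is precisely the paper's fact that $D_i/\delta_i\ge k$ on $S_k$ (indeed your thresholds $g/\beta_k$ are the same witness breakpoints), but you run the harmonic estimate in the opposite direction: instead of a proof by contradiction producing one good $k$, you integrate via the layer cake to get the additive per-bucket bound that each gap's welfare share is at most $H_T\cdot\OLinear$. That buys you two things: you avoid the paper's filtering step and $\min$-split entirely, and merging buckets with equal gap value (running the same integral with $N_{\ge\beta}\le m_u$) yields $\sum_u O(\log m_u)$ directly, so \Cref{cor:main-upper} falls out of your argument without modification. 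The paper's formulation, in exchange, pinpoints the explicit maximizing breakpoint structure that is then reverse-engineered into the tight counterexample of Theorem~\ref{thm:main-lower}. Two minor points to state explicitly: $\Profit(0)$ equals $\frac1T\sum_t \TypedExpectedReward{t}{1}$ only up to tie-breaking among zero-cost actions (you need only the inequality $\ge$, which the principal-favorable tie-breaking provides), and the step ``the played action carries reward at least $v_t$'' silently uses both $\TypedExpectedReward{t}{i}\ge 0$ and the monotonicity of each type's expected reward along its crossing order under the all-actions-played normalization --- both true, but worth a sentence.
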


\newcommand{\Breakpoint}[1]{\alpha_{#1}}
\newcommand{\TypedBreakpoint}[2]{\Breakpoint{#2}^{(#1)}}
\newcommand{\CoefficientJump}[1]{w_{#1}}
\newcommand{\TypedCoefficientJump}[2]{\CoefficientJump{#2}^{(#1)}}
\newcommand{\CumulativeCoefficientJump}[1]{W_{#1}}
\newcommand{\BiasDrop}[1]{\delta_{#1}}
\newcommand{\TypedBiasDrop}[2]{\BiasDrop{#2}^{(#1)}}
\newcommand{\CumulativeBiasDrop}[1]{D_{#1}}

\newcommand{\MagicDifference}[1]{v_{#1}}
\newcommand{\CumulativeMagicDifference}[1]{V_{#1}}

\newcommand{\MagicRatio}[1]{\beta_{#1}}

\begin{proof}
Rewriting the theorem statement, we must show that there exists an $\TransferCoefficient \in [0, 1]$ such that $\Profit(\alpha) \ge \frac{\OWelfare}{O(n\log T)}$.

We begin by considering our utility and profit functions. Note that $\Utility^{(t)}(\TransferCoefficient)$ is an increasing piecewise linear function with $n$ pieces (one for each possible action). We denote its $i^{th}$ breakpoint as $\TypedBreakpoint{t}{i}$, which is where $\Utility^{(t)}(\TransferCoefficient)$ changes from $\TransferCoefficient \TypedExpectedReward{t}{i} - \Cost{i}$ to $\TransferCoefficient \TypedExpectedReward{t}{i+1} - \Cost{i+1}$. We define some notation to express how much the multiplier on $\TransferCoefficient$ increases and how much the constant term decreases:
\begin{align*}
  \TypedCoefficientJump{t}{i} &\triangleq \TypedExpectedReward{t}{i+1} - \TypedExpectedReward{t}{i} \\
  \TypedBiasDrop{t}{i} &\triangleq \Cost{i+1} - \Cost{i}
\end{align*}
One way of looking at the situation is that when we increase $\TransferCoefficient$ past $\TypedBreakpoint{t}{i}$, we add the linear function $\TypedCoefficientJump{t}{i} \TransferCoefficient - \TypedBiasDrop{t}{i}$ to $\Utility^{(t)}(\TransferCoefficient)$.

The function $\Utility(\TransferCoefficient)$ is the convex combination of $T$ different functions $\Utility^{(t)}(\TransferCoefficient)$. It therefore has $(n-1)T$ breakpoints: the union of breakpoints for all the functions $\Utility^{(t)}(\TransferCoefficient)$. Sort these $B \triangleq (n-1)T$ breakpoints in increasing order. Abusing notation, we now call the $i^{th}$ such breakpoint $\Breakpoint{i}$, its corresponding multiplier increase $\CoefficientJump{i}$, and its corresponding constant term decrease $\BiasDrop{i}$. Note that each such pair of values $\left( \CoefficientJump{i}, \BiasDrop{i} \right)$ previously appeared (a factor of $T$ larger due to expectation over types) as a corresponding pair of values $\left( \TypedCoefficientJump{t}{i'}, \TypedBiasDrop{t}{i'} \right)$ of one of the original functions $\Utility^{(t)}$. In particular, note that there can only be $n$ distinct values of $\BiasDrop{i}$, since $\BiasDrop{t}{i'}$ does not depend on $t$.

Next, consider $\TypedProfit{t}(\TransferCoefficient)$. Recall that by the definition of $\TypedProfit{t}$, $\TypedProfit{t}(\TransferCoefficient) = (1 - \TransferCoefficient) \Derivative{\TransferCoefficient} \Utility^{(t)}(\TransferCoefficient)$ (at a breakpoint we can arbitrarily choose the larger of the two derivatives). Since differentiation is linear, this property also holds for $\Profit$: $\Profit(\TransferCoefficient) = (1 - \TransferCoefficient) \Derivative{\TransferCoefficient} \Utility(\TransferCoefficient)$. Since $\Profit$ is piecewise linear, its maximum will occur when $\TransferCoefficient$ equals some breakpoint $\Breakpoint{i}$, zero, or one. In fact, it cannot occur when $\TransferCoefficient$ is zero but not a breakpoint since there is no incentive for the agent to pick a non-null action, and it cannot occur when $\TransferCoefficient$ is one because the principal gives all reward to the agent. We now relate $\Profit$ to other quantities.

We define the cumulative terms $\CumulativeCoefficientJump{i} \triangleq \sum_{j = 1}^{i} \CoefficientJump{j}$ and $\CumulativeBiasDrop{i} \triangleq \sum_{j=1}^{i} \BiasDrop{j}$. Since we have a dummy action, we know that for $\TransferCoefficient \in [\Breakpoint{i}, \Breakpoint{i+1})$, $\Utility(\TransferCoefficient) = \CumulativeCoefficientJump{i} \TransferCoefficient - \CumulativeBiasDrop{i}$ and therefore $\Profit(\TransferCoefficient) = \CumulativeCoefficientJump{i} (1 - \TransferCoefficient)$. Since the latter is decreasing in $\TransferCoefficient$, in this range, it is maximized at $\TransferCoefficient = \Breakpoint{i}$ at which $\Profit(\Breakpoint{i}) = \CumulativeCoefficientJump{i} (1 - \Breakpoint{i})$.

What is $\Breakpoint{i}$? Well, consider the point $\TransferCoefficient = \Breakpoint{i}$, where some agent is ambivalent between two actions.
\begin{align*}
  \CumulativeCoefficientJump{i} \Breakpoint{i} - \CumulativeBiasDrop{i} &= \CumulativeCoefficientJump{i-1} \Breakpoint{i} - \CumulativeBiasDrop{i-1} \\
  \CoefficientJump{i} \Breakpoint{i} &= \BiasDrop{i} \\
  \Breakpoint{i} &= \frac{\BiasDrop{i}}{\CoefficientJump{i}}
\end{align*}

Plugging this in, we get
\begin{equation}
\label{eq:rev-form1}
  \Profit(\Breakpoint{i}) = \CumulativeCoefficientJump{i} \left(1 - \frac{\BiasDrop{i}}{\CoefficientJump{i}} \right)
  \text{.}
\end{equation}

Remember that we need to compare this quantity to the welfare $\Welf = \Utility(1) = \CumulativeCoefficientJump{B} - \CumulativeBiasDrop{B}$. To make this comparison, we rewrite both $\Welf$ and \Cref{eq:rev-form1} in terms of a cumulative difference $\CumulativeMagicDifference{i}$ instead of $\CumulativeCoefficientJump{i}$. Define the following differences:
\begin{align*}
  \MagicDifference{i}           &\triangleq \CoefficientJump{i} - \BiasDrop{i} \\
  \CumulativeMagicDifference{i} &\triangleq \sum_{j=1}^i \MagicDifference{j} = \CumulativeCoefficientJump{i} - \CumulativeBiasDrop{i}
\end{align*}

We can now write welfare as $\Welf = \CumulativeMagicDifference{B}$ and our profit as:
\begin{align*}
  \Profit(\Breakpoint{i})
    &= (\CumulativeMagicDifference{i} + \CumulativeBiasDrop{i}) \left( 1 - \frac{\BiasDrop{i}}{\MagicDifference{i} + \BiasDrop{i}} \right) \\
    &= (\CumulativeMagicDifference{i} + \CumulativeBiasDrop{i}) \frac{\MagicDifference{i}}{\MagicDifference{i} + \BiasDrop{i}} \\
    &= \frac{\CumulativeMagicDifference{i} + \CumulativeBiasDrop{i}}{1 + \BiasDrop{i} / \MagicDifference{i}} \\
    &\ge \min \left\{ \CumulativeMagicDifference{i}, \frac{\CumulativeBiasDrop{i}}{\BiasDrop{i}} \MagicDifference{i} \right\} \text{.}
\end{align*}

It now suffices to show that there exists an $i$ such that $\min \left\{ \CumulativeMagicDifference{i}, \frac{\CumulativeBiasDrop{i}}{\BiasDrop{i}} \MagicDifference{i} \right\} \geq \CumulativeMagicDifference{B} / O(n \log T)$. We do this as follows. First filter out all the $i$ such that $\CumulativeMagicDifference{i} < \CumulativeMagicDifference{B} / O(n \log T)$. The sum of the remaining $\MagicDifference{i}$ is at least $(1 - 1/O(n \log T)) \CumulativeMagicDifference{B} \ge \tfrac12 \CumulativeMagicDifference{B}$. Now we simply need to show that there exists one $i$ such that
\begin{align*}
  \frac{\CumulativeBiasDrop{i}}{\BiasDrop{i}} \MagicDifference{i} \geq \frac{1}{O(n\log T)} \CumulativeMagicDifference{B}\text{.}
\end{align*}

Let's consider these multiplier terms $\MagicRatio{i} = \frac{\CumulativeBiasDrop{i}}{\BiasDrop{i}}$. We would like to argue that they are quite large. To do so, we use the fact that there are at most $n$ possible values of $\BiasDrop{i}$. Specifically, let's group the remaining $i$'s together based on their original indices $i'$ (recall that each $\BiasDrop{i}$ is just $\tfrac1T \TypedBiasDrop{t}{i'}$). This creates at most $n$ groups of $i$'s, each with at most $T$ elements. We can group $i$'s together in an alternate way; let $S_k$ be the set of indices $i$ such that indices that derive from the original index $i'$ and have been seen \emph{exactly} $k$ times up to and including index $i$ (i.e. $|\{j \mbox{ s.t. } j \leq i \mbox{ and both } i, j \text{ have original index } i'\}| = k$). We know the following:
\begin{enumerate}
  \item For all $k > T$, $S_k = \emptyset$ (each original index $i'$ generates $T$ indices $i$).
  \item For all $k$, $|S_k| \leq n$ (there are only $n$ original indices $i'$).
  \item For $i \in S_k$, $\MagicRatio{i} \ge k$ (since $\CumulativeBiasDrop{i} \ge k \BiasDrop{i}$ at this point).
\end{enumerate}

Let $\sigma_k = \sum_{i \in S_k} \MagicDifference{i}$. By (1), we know that $\sum_{k=1}^T \sigma_k \ge \tfrac12 \CumulativeMagicDifference{B}$. It follows that there exists an $k$ such that $k \sigma_k \geq \frac{\CumulativeMagicDifference{B}}{O(\log T)}$; if there weren't, then we would have $\sigma_k \le o\left(\frac{\CumulativeMagicDifference{B}}{k\log T}\right)$ for all $k$ and summing over $k$ we would have that $\sum_k \sigma_k \leq o(1) \CumulativeMagicDifference{B}$, a contradiction. By (2), it follows that there exists a $i \in S_k$ such that $k \MagicDifference{i} \ge \frac{\CumulativeMagicDifference{B}}{O(n\log T)}$. Finally, by (3) it follows that $\MagicRatio{i} \MagicDifference{i} \geq \frac{\CumulativeMagicDifference{B}}{O(n\log T)}$, as desired. This completes the proof.
\end{proof}

A closer examination of the above proof shows that the approximation ratio we obtain only depends on the multiset of the values $\delta_i$ (i.e., the differences between $c_{i+1} - c_{i}$ for consecutive costs). Let $S$ be this multi-set. Assume that $S$ has $U$ distinct values which appear with multiplicities $m_1, m_2, \dots m_U$. Let $\Delta(\CostVector{}) = \sum_{i=1}^{U} \log m_U$. Then a direct generalization of the proof of Theorem \ref{thm:main-lower} gives the following, more general corollary.

\begin{corollary}
\label{cor:main-upper}
For any (possibly cost-varying) principal-agent problem $\PrincipalAgentProblem$, let $S(\CostVector{})$ be the multiset of cost differences $\{c^{(t)}_{i+1} - c_{i}^{(t)} \mid i \in [n-1], t \in [T]\}$. Assume $S(\CostVector{})$ has $U$ distinct values which appear with multiplicities $m_1, m_2, \dots, m_U$. Then
\begin{align*}
  \OLinear \PrincipalAgentProblem \ge \Omega \left(\frac{1}{ \Delta(\CostVector{}) }\right) \OWelfare \PrincipalAgentProblem\text{,}
\end{align*}
 where $\Delta(\CostVector{}) = \sum_{i=1}^{U} \log m_i$.
\end{corollary}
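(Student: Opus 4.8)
The plan is to re-run the proof of \Cref{thm:main-upper} almost verbatim, isolating the single place where the assumption ``all types share the same costs'' is used and replacing it with a counting argument sensitive to the multiplicities $m_1,\dots,m_U$. First I would check that the entire setup survives type-dependent costs. For each type $t$, the utility curve $\Utility^{(t)}(\TransferCoefficient)=\max_i\left(\TransferCoefficient\,\TypedExpectedReward{t}{i}-\Cost{i}\right)$ is still an increasing piecewise-linear function with $n$ pieces; the only difference is that the bias drop at its $i'$-th breakpoint is now $\TypedBiasDrop{t}{i'}=c^{(t)}_{i'+1}-c^{(t)}_{i'}$, which genuinely depends on $t$. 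Forming $\Utility(\TransferCoefficient)$, sorting its $B=(n-1)T$ breakpoints, and defining $\CoefficientJump{i},\BiasDrop{i},\MagicDifference{i}=\CoefficientJump{i}-\BiasDrop{i}$ together with their cumulative sums exactly as before, the same algebra gives $\Profit(\Breakpoint{i})\ge\min\{\CumulativeMagicDifference{i},\MagicRatio{i}\MagicDifference{i}\}$ with $\MagicRatio{i}=\CumulativeBiasDrop{i}/\BiasDrop{i}$ and $\OWelfare=\CumulativeMagicDifference{B}$. So it again suffices to exhibit an index $i$ with $\min\{\CumulativeMagicDifference{i},\MagicRatio{i}\MagicDifference{i}\}\ge\CumulativeMagicDifference{B}/O(\Delta)$.

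The filtering step carries over unchanged: since $\CumulativeMagicDifference{i}$ is nondecreasing in $i$, discarding every $i$ with $\CumulativeMagicDifference{i}<\CumulativeMagicDifference{B}/O(\Delta)$ removes a prefix whose total $\MagicDifference{}$-mass is at most $\CumulativeMagicDifference{B}/O(\Delta)$, leaving surviving indices of total mass at least $\tfrac12\CumulativeMagicDifference{B}$. It then remains to find a surviving $i$ with $\MagicRatio{i}\MagicDifference{i}\ge\CumulativeMagicDifference{B}/O(\Delta)$.

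The heart of the generalization is the replacement of the ``$|S_k|\le n$ and $S_k=\emptyset$ for $k>T$'' dichotomy by a per-value harmonic count. Group the breakpoints according to the numerical value of $\BiasDrop{i}$, so that group $u$ contains its $m_u$ occurrences. As before, if $i$ is the $k$-th occurrence (in sorted breakpoint order) of its bias value, then $\CumulativeBiasDrop{i}\ge k\,\BiasDrop{i}$ and hence $\MagicRatio{i}\ge k$; thus the surviving indices in group $u$ carry distinct multiplier lower bounds drawn from $\{1,\dots,m_u\}$. A harmonic argument internal to each group then shows that $\max_{i\in u}\MagicRatio{i}\MagicDifference{i}\ge\bigl(\sum_{i\in u}\MagicDifference{i}\bigr)/H_{m_u}$, where $H_{m_u}=\sum_{k=1}^{m_u}1/k$: otherwise $\MagicDifference{i}<\bigl(\sum_{j\in u}\MagicDifference{j}\bigr)/(k_i H_{m_u})$ for every surviving $i$ of rank $k_i$, and summing over the distinct ranks present contradicts $\sum_{k=1}^{m_u}1/k=H_{m_u}$. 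Averaging this bound across the $U$ groups, using that the surviving masses $\sum_{i\in u}\MagicDifference{i}$ sum to at least $\tfrac12\CumulativeMagicDifference{B}$, produces a group with $\bigl(\sum_{i\in u}\MagicDifference{i}\bigr)/H_{m_u}\ge\tfrac12\CumulativeMagicDifference{B}/\sum_{u}H_{m_u}$, and therefore the desired index $i$ with $\MagicRatio{i}\MagicDifference{i}\ge\CumulativeMagicDifference{B}/O\bigl(\sum_u H_{m_u}\bigr)$.

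Finally I would convert $\sum_u H_{m_u}$ into the stated quantity: since $H_m=\Theta(\log m)$ for $m\ge 2$ and $H_1=1$, we have $\sum_u H_{m_u}=\Theta(\Delta(\CostVector))$ with $\Delta(\CostVector)=\sum_u\log m_u$, which recovers $\Theta(n\log T)$ when all $m_u=T$ (so $U=n-1$). I expect the main obstacle to be this final counting step: getting the per-group harmonic bound and the cross-group average to compose cleanly, and in particular reconciling the natural bound $\sum_u H_{m_u}$ with the advertised $\sum_u\log m_u$ --- the two agree up to constants only when the multiplicities are not dominated by singleton values (where $H_1=1$ but $\log 1=0$), a mild caveat worth stating explicitly. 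A secondary point to verify carefully is that the type-dependent-cost setup really does reproduce the profit/welfare identities of \Cref{thm:main-upper} and the nonnegativity $\MagicDifference{i}\ge0$ (which follows from the WLOG assumption that every action is played and from $\Breakpoint{i}=\BiasDrop{i}/\CoefficientJump{i}\in[0,1]$).
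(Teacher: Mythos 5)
Your proposal is correct and is essentially the paper's own argument: the paper establishes \Cref{cor:main-upper} only as a ``direct generalization'' of the proof of \Cref{thm:main-upper}, and your grouping of breakpoints by $\BiasDrop{}$-value with within-group ranks is the same double count as the paper's rank classes $S_k$ (which group by rank $k$ first and value second), just read in transposed order, with the identical ingredients $\MagicRatio{i}\ge k$, a harmonic bound, and a pigeonhole step. Your closing caveat is also accurate and worth keeping: the argument genuinely yields $\Omega\bigl(1/\sum_u H_{m_u}\bigr)=\Omega\bigl(1/(U+\sum_u\log m_u)\bigr)$, which agrees with the stated $\Delta(\CostVector)=\sum_u\log m_u$ only up to the additive $U$ term that the paper implicitly absorbs (as visible in its own claim that $\Delta(\CostVector)\le n$ for unconstrained costs, where all $m_u=1$).
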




This has a number of immediate consequences:

\begin{itemize}
    \item 
    For a standard principal-agent problem with no constraints on the costs $\CostVector{}$, $\Delta(\CostVector) \leq n$. This immediately implies the $O(n)$-approximation ratio proved in \cite{dutting2019simple}.
    \item
    For a standard principal-agent problem where the costs $\CostVector{}$, lie in an arithmetic sequence, $\Delta(\CostVector) = O(\log n)$. In particular, if the set of costs is generated (for example) by discretizing a continuum of actions, we can hope for exponentially better approximation ratios than in the general case.
    \item
    For a typed principal-agent problem with $n$ actions and $T$ types, $\Delta(\CostVector) \leq O(n \log T)$. This recovers the result of Theorem \ref{thm:main-upper} proved above.
    \item 
    For a typed principal-agent problem with $n$ actions and $T$ types whose costs $\CostVector{}$ lie in an arithmetic sequence, $\Delta(\CostVector) = O(\log(nT))$.
    \item
    For a typed principal-agent problem with $n$ actions and $T$ types whose costs $\CostVector{}$ may vary between types, $\Delta(\CostVector) = O(nT)$ (see also Appendix \ref{apx:nonuniform-costs} for a discussion of cost-varying typed principal-agent problems).
\end{itemize}

\subsection{Linear Contracts: Tight Counterexample}

We will now prove that the bound in \Cref{thm:main-upper} is tight. The following example follows almost immediately from the proof of Theorem \ref{thm:main-upper}; we choose a $\Utility(\TransferCoefficient)$ to satisfy all necessary inequalities in the proof of Theorem \ref{thm:main-upper}, and then implement this $\Utility(\TransferCoefficient)$ via a principal-agent problem. 

\begin{theorem}
\label{thm:main-lower}
For all $n, T > 0$, there exists a principal-agent problem $\PrincipalAgentProblem$ with $T$ types, $n+1$ actions, and two outcomes that satisfies
\begin{align*}
  \OLinear \PrincipalAgentProblem \le O\left( \frac{1}{n \log T} \right) \OWelfare \PrincipalAgentProblem
  \text{.}
\end{align*}
\end{theorem}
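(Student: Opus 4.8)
The plan is to reverse-engineer the tight instance directly from the proof of \Cref{thm:main-upper}: I will choose the piecewise-linear utility function $\Utility(\TransferCoefficient)$ (equivalently, the breakpoint data $(\delta_i, v_i)$ with $v_i = w_i - \delta_i$) so that every inequality used in that proof is simultaneously tight, and then realize the resulting $\Utility$ as a principal-agent instance with two outcomes. Recall from that proof that a breakpoint with cost-gap $\delta$ and slope-jump $w$ sits at $\TransferCoefficient = \delta / w$ and contributes profit $(1-\TransferCoefficient)$ times the cumulative slope, and that the bottleneck was the bound $\Profit(\alpha_i) \ge \min\{V_i, \beta_i v_i\}$ with $\beta_i = D_i/\delta_i \ge k$ for the $k$-th repetition of a cost-gap, combined with the harmonic sum over $k$ and the fact that each level $S_k$ has at most $n$ elements. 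To make all of these tight at once I want: (i) the $k$-th copy (across types) of every cost-gap to have welfare-increment $v \approx 1/k$, so that $\beta_i v_i \approx 1$ uniformly and $\sum_k \sigma_k \approx n H_T$ saturates the harmonic bound; and (ii) the $n$ distinct cost-gaps to have values that separate the copies into clean blocks, so that $\beta_i$ really is $\approx k$ and not larger.

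Concretely, I would take $n$ action-gaps (hence $n+1$ actions) with cost-gaps $d_g = T M^g$ for a large polynomial $M = \poly(n,T)$ (e.g. $M = 4(n+1)T$), and give type $t$ the slope-jump $w_{(g,t)} = (1/t) + M^g$ at the $g$-th gap, i.e. $v_{(g,t)} = 1/t$ and $\delta_{(g,t)} = M^g$. The geometric choice of $M$ is what makes everything work: since $t M^g \in [M^g, T M^g] \subset [M^g, M^{g+1})$, the $nT$ breakpoints sort globally into $n$ consecutive blocks (all of block $g$ before block $g+1$) and within a block sort by type-rank $t$; moreover each fixed type's breakpoints are increasing in $g$, which is exactly the monotonicity needed for all $n+1$ points $(c_i, R^{(t)}_i)$ to lie on the upper convex hull for every type, so that every action is indeed played.

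I would then verify the two benchmarks by direct summation. The welfare telescopes: $\OWelfare = \E_t[R^{(t)}_n - c_n] = n H_T = \Theta(n \log T)$, because the geometric reward part exactly cancels $c_n$ and leaves the harmonic part $n \sum_t (1/t)$. For the profit, the breakpoint $(g,t)$ lies at $1 - \TransferCoefficient = (1 + t M^g)^{-1}$ with cumulative slope $c_{g-1} + t M^g + (g-1) H_T + H_t$, so $\Profit(\alpha_{g,t}) = \frac{t M^g + c_{g-1} + (g-1) H_T + H_t}{1 + t M^g}$; choosing $M$ large enough that $c_{g-1}$, $(g-1)H_T$, and $H_t$ are all $o(t M^g)$ forces every such value into $[\tfrac{4}{5}, \tfrac{7}{4}] = \Theta(1)$. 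Since $\Profit$ is piecewise linear and decreasing on each piece, its maximum is attained at a breakpoint, giving $\OLinear = \Theta(1)$ and therefore $\OLinear \le O(1/(n \log T)) \cdot \OWelfare$.

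Finally, realizability is routine: with two outcomes $\{0, r\}$ the expected reward of type $t$ on action $i$ is $q^{(t)}_i r$ for an outcome-probability $q^{(t)}_i \in [0,1]$, so I set $q^{(t)}_i = R^{(t)}_i / r$ with $r = \max_{t,i} R^{(t)}_i = R^{(1)}_n$, and the costs are the partial sums of the $d_g$. I expect the main obstacle to be exactly the coupling between a breakpoint's position and its slope-jump — the product $\alpha_i w_i = \delta_i$ is forced to take only $n$ values — together with the per-type monotonicity constraint. The geometric spacing $d_g = T M^g$ is the single idea that decouples these and lets the harmonic profile $v_{(g,t)} = 1/t$ do its work cleanly within each block; getting that spacing right (large enough to separate blocks, not so large as to distort the welfare telescoping) is where the care is needed.
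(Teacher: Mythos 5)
Your proposal is correct and is essentially the paper's own construction: the paper likewise reverse-engineers the tight instance from the proof of \Cref{thm:main-upper}, using geometrically growing cost gaps ($c_i = \sum_{j\le i}\lambda^j$, playing the role of your $M^g$ blocks) to sort the $nT$ breakpoints into $n$ blocks ordered within each block by type, harmonic-in-type welfare increments ($R^{(t)}_i = c_i + \tfrac{i}{tn\log T}$, your $v_{(g,t)} \propto 1/t$), and a two-outcome realization by normalizing the expected rewards into probabilities. The only differences are an immaterial global normalization (you scale so $\OWelfare = \Theta(n\log T)$ and $\OLinear = \Theta(1)$, whereas the paper gets $\OWelfare = \Theta(1/T)$ and $\Profit(\alpha) = O(1/(nT\log T))$) and your explicit polynomial choice $M = \poly(n,T)$ in place of the paper's $\lambda \to \infty$.
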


\begin{proof}
Consider the following principal-agent problem. As stated, we will have $T$ types, $n+1$ actions, and two outcomes. For this proof, we zero-index our actions (zero and one) and outcomes (zero through $n$) Action zero will have cost $\Cost{0} \triangleq 0$, and action $i \in [n]$ will have cost $\Cost{i} \triangleq \sum_{j=1}^i \lambda^j$, where $\lambda$ is a constant to be determined later (we will take $\lambda \rightarrow \infty$).

As for outcomes, outcome zero contributes reward $\Reward{0} = 0$. Outcome one will contribute reward $\Reward{1} = \Cost{n} + \frac{n}{T \log T}$. For each type $t \in [T]$, and action $i \in \{0, \dots, n\}$, define
$$\TypedExpectedReward{t}{i} \triangleq \Cost{i} + \frac{i}{t n \log T}.$$

Let $p_{i}^{(t)} = \frac{\TypedExpectedReward{t}{i}}{\Reward{1}}$. If type $t$ plays action $i$, then outcome one will occur with probability $p_{i}^{(t)}$ and outcome zero will occur with probability $1-p_{i}^{(t)}$. Note that the expected reward for the principal of type $t$ playing action $i$ is exactly $\TypedExpectedReward{t}{i}$.

We begin by computing $\OWelfare$ for this instance. Note that the contribution to welfare from type $t$ is given by
$$\max_{i} (\TypedExpectedReward{t}{i} - \Cost{i}) = \max_{i} \frac{i}{t n \log T} = \frac{1}{t \log T}.$$

Averaging this over all types $t \in [T]$, we have that
$$ \OWelfare = \frac1T \sum_{t=1}^T \frac{1}{t\log T} = \Theta\left(\frac1T\right).$$

It thus suffices to show that $\Profit(\TransferCoefficient) = O(1 / n T \log T)$ for all $\TransferCoefficient \in [0, 1]$. We begin by computing the breakpoints $\TypedBreakpoint{t}{i}$ where the function $\Profit(\TransferCoefficient)$ changes slope. These points satisfy $\alpha_{i}^{(t)}R_{i}^{(t)} - c_{i} = \alpha_{i}^{(t)}R_{i+1}^{(t)} - c_{i+1}$, so
$$\alpha_{i}^{(t)} = \frac{c_{i+1} - c_{i}}{R_{i+1}^{(t)} - R_{i}^{(t)}} = \frac{\lambda^{i+1}}{\lambda^{i+1} + (nt\log T)^{-1}}.$$

For sufficiently large $\lambda$, these $\alpha_{i}^{(t)}$ are ordered first by type and then by action, i.e. $\alpha_{0}^{(1)} < \alpha_{0}^{(2)} < \dots < \alpha_{0}^{(T)} < \alpha_{1}^{(1)} < \dots < \alpha_{1}^{(T)} < \dots < \alpha_{n-1}^{(T)}$. (One way to see this is to notice that $(\alpha_{i}^{(t)})^{-1} = 1 + (nt\lambda^{i+1}\log T)^{-1}$; as long as $\lambda > T$, this will satisfy the previous ordering). Now, for $\alpha \in [\alpha_{i}^{(t)}, \alpha_{i}^{(t+1)}]$, note that the first $t$ types are playing action $i+1$ and the remaining types are playing action $i$. It follows that
\begin{eqnarray*}
  \Profit(\alpha) &=& \frac1T (1-\alpha)\left(\sum_{s=1}^{t} R_{i+1}^{(s)} + \sum_{s=t+1}^{T} R_{i}^{(s)}\right) \\
  &=& \frac1T (1 - \alpha) \left(\sum_{s=1}^{T} R_{i}^{(s)} + \sum_{s=1}^{t} (R_{i+1}^{(s)} - R_{i}^{(s)}) \right) \\
  &=& \frac1T (1 - \alpha) \left(\sum_{s=1}^{T} (c_i + \frac{i}{sn\log T}) + \sum_{s=1}^{t} (\lambda^{i+1} + (sn\log T)^{-1}) \right) \\
  &\leq & \frac1T (1 - \alpha_{i}^{(t)})\left(Tc_i + \frac{i}{n} + t\lambda^{i+1} + \frac{\log t}{n\log T} \right)  \\
  &\leq & \frac1T (1 - \alpha_{i}^{(t)})\left(T\left(\sum_{j=1}^{i}\lambda^{j}\right) + t\lambda^{i+1} + 2 \right)  \\
  &\leq & \frac1T (1 - \alpha_{i}^{(t)})(2t\lambda^{i+1}) \hspace{20mm} \mbox{ for sufficiently large $\lambda$} \\ 
  &=& \frac1T \frac{(2nt\log T)^{-1}}{\lambda^{i+1} + (nt\log T)^{-1}}(t\lambda^{i+1}) \\
  &=& \frac1T \frac{(t\lambda^{i+1})}{nt\lambda^{i+1}\log T + 1} \\
  &\leq & \frac{2}{n T \log T}.
\end{eqnarray*}

\noindent
A similar computation for $\alpha \in [\alpha_i^{(T)}, \alpha_{i+1}^{(1)}]$ shows that in this case as well, $\Profit(\alpha) \leq \frac{2}{n T \log T}$. It follows that $\Profit(\alpha) \leq O(1/(n\log T))\cdot \OWelfare$, as desired.


\end{proof}

\section{Fine Grained Gaps}
\label{sec:reductions}

In this section, we utilize \Cref{thm:main-upper} and \Cref{thm:main-lower} to get our worst-case bounds on the ratios between nearly all pairs of benchmarks.

We first discuss the easy side, namely upper-bounds on these ratios. From \Cref{thm:main-upper}, we know that the expected profit of the optimal linear contract, $\OLinear$, can be at most $O(n \log T)$ away from the maximum expected welfare, $\OWelfare$. Since these are the lowest and highest benchmarks, respectively, we know that the ratio between any pair of benchmarks can be at most $O(n \log T)$. The one pair where we obtain a better bound is between $\OTypeAware$ and $\OWelfare$. In this case, the problem decomposes by type. Since we can always just construct a linear contract for each type, the ratio is at most $O(n)$. As a result, we need no additional machinery to obtain all our ratio upper-bounds.

We now discuss the hard side, namely lower-bounds on these ratios. On this side, we can use lower-bounds on the ratios between closer benchmarks to get lower-bounds on the ratio between further benchmarks. All the lower-bounds in \Cref{tab:ratios} hence follow from lower-bounds on the ratios between adjacent benchmarks.

From \Cref{thm:main-lower}, we know there is a two-outcome family of counterexamples which shows that the ratio between $\OLinear$ and $\OWelfare$ is $\Omega(n \log T)$. In \Cref{subsec:two-outcome}, we show that in two-outcome principal-agent problems, the optimal menu of contracts cannot extract any more profit than the optimal linear contract. In \Cref{subsec:nonlinearity-disparity}, we show that we can edit any counterexample so that single contracts extract the entire welfare as profit but linear contracts do no better. In \Cref{subsec:info-is-power}, we show that we can edit any counterexample so that type-aware principals extract the entire welfare as profit but menus do no better. Combining these ideas yields that the ratio between $\OLinear$ and $\OSingle$ and the ratio between $\OMenu$ and $\OTypeAware$ are both $\Omega(n \log T)$.

This leaves two pairs of adjacent benchmarks. In \Cref{apx:gap3v4}, we directly show that the ratio between $\OSingle$ and $\OMenu$ is at least $(2 - \epsilon)$ for every constant $\epsilon > 0$. As for the ratio between $\OTypeAware$ and $\OWelfare$, the problem again decomposes by type and it was already known that the gap between first-best and second-best outcomes was $\Omega(n)$ \cite{dutting2019simple}.

See \Cref{fig:roadmap} for a helpful depiction of this plan.

\begin{figure}[h]
\centering
\begin{tikzpicture}[%
  auto,
  scale=1.0,
  result/.style={
    rectangle,
    draw=black,
    inner sep=1pt,
  },
  ]
  \node at (-5, 2) {\textbf{Upper Bounds}};
  
  \node[result] (generalizedwelfaregap) at (0, 2) {\begin{tabular}{c}
    \Cref{thm:main-upper} \\
    $\OLinear$ vs. $\OWelfare$ \\
    $O(n \log T)$ gap
  \end{tabular}};
  
  \draw[dashed] (-6, 1) -- (6, 1);
  
  \node at (-5, 0.5) {\textbf{Lower Bounds}};
  
  \node[result] (basecounterexample) at (-3, -1) {\begin{tabular}{c}
    \Cref{thm:main-lower} \\
    $\OLinear$ vs. $\OWelfare$ \\
    $\Omega(n \log T)$ gap with $m = 2$
  \end{tabular}};
  
  \node[result] (observedcounterexample) at (3, -1) {\begin{tabular}{c}
    \Cref{cor:two-outcome-linear} \\
    $\OMenu$ vs. $\OWelfare$ \\
    $\Omega(n \log T)$ gap with $m = 2$
  \end{tabular}};
  
  \node[result] (nonlinearitydisparity) at (-3, -4) {\begin{tabular}{c}
    \Cref{cor:nonlinearity-disparity} \\
    $\OLinear$ vs. $\OSingle$ \\
    $\Omega(n \log T)$ gap with $m = 4$
  \end{tabular}};
  
  \node[result] (infoispower) at (3, -4) {\begin{tabular}{c}
    \Cref{cor:info-is-power} \\
    $\OMenu$ vs. $\OTypeAware$ \\
    $\Omega(n \log T)$ gap with $m = 4$
  \end{tabular}};
  
  \node[result] at (0, -6) {\begin{tabular}{c}
    \Cref{lem:gap3v4} \\
    $\OSingle$ vs. $\OMenu$ \\
    $(2 - \epsilon)$ gap with $n = 3$
  \end{tabular}};
  
  \draw[->] (basecounterexample) -- (observedcounterexample) node[midway, above]
  {\Cref{lem:two-outcome-linear}};
  \draw[->] (basecounterexample) -- (nonlinearitydisparity) node[midway, left] {\Cref{lem:nonlinearity-disparity}};
  \draw[->] (observedcounterexample) -- (infoispower) node[midway, right] {\Cref{thm:info-is-power}};
\end{tikzpicture}
\caption{Proof road map of our upper and lower bound results between our various contract benchmarks.}
\label{fig:roadmap}
\end{figure}
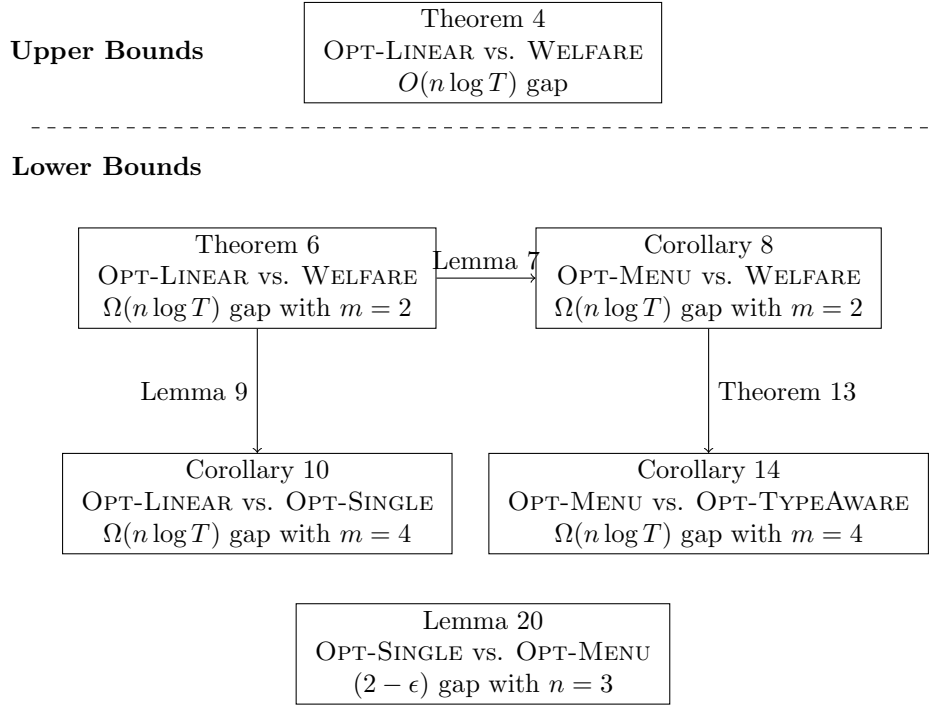

\subsection{The Two Outcome Case}
\label{subsec:two-outcome}

In this subsection, we consider the case where there are only two outcomes ($m = 2$). Intuitively, the only purpose of the contract is to encourage the higher-valued outcome and hence it only makes sense to offer zero on the lower-valued outcome and something on the higher-valued outcome. Given our assumption that the first reward $\reward{1}$ is zero, such a contract is linear. In fact, we might hope that this reasoning applies to every contract in a menu and hence linear contracts are as powerful as menus in this scenario. \Cref{lem:two-outcome-linear} and its proof (found in \Cref{apx:two-outcome}) formalize this intuition.

\begin{restatable}{lemma}{twooutcomelemma}
\label{lem:two-outcome-linear}
  Suppose we have a principal-agent problem $\PrincipalAgentProblem$ with $m = 2$ outcomes. Then linear contracts can extract as much profit as menus of contracts:
  \begin{align*}
    \OMenu \PrincipalAgentProblem = \OLinear \PrincipalAgentProblem\text{.}
  \end{align*}
\end{restatable}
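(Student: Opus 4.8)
The plan is to prove the two inequalities $\OLinear \le \OMenu$ and $\OMenu \le \OLinear$ separately. The first is immediate: a single linear contract is in particular a (trivially incentive-compatible) menu, so $\OLinear \le \OSingle \le \OMenu$. All the content is in the reverse inequality $\OMenu \le \OLinear$, which I would prove by showing that every incentive-compatible menu is dominated by one single linear contract.

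First I would set up convenient coordinates. Since $m=2$ and $\Reward{1}=0$, write $R \triangleq \Reward{2}$ and describe each contract by its offset $x_1 \ge 0$ and its ``effective slope'' $q \triangleq x_2 - x_1$. A short calculation shows that a type-$t$ agent's best-response utility under such a contract is $x_1 + g^{(t)}(q)$, where $g^{(t)}(q) \triangleq \max_i \left( \TypedForecast{t}{i}{2}\, q - \Cost{i} \right)$, and that the principal's profit from type $t$ is $P^{(t)}(q) - x_1$, where $P^{(t)}(q) \triangleq a^{(t)}(q)(R-q)$ and $a^{(t)}(q)$ is the success probability $\TypedForecast{t}{i}{2}$ of the agent's best response at slope $q$ (equivalently, the right-derivative of the convex function $g^{(t)}$). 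The crucial feature this exposes is that the offset $x_1$ enters the profit only as a direct subtraction: paying anything on the worthless outcome is pure loss, and its sole role is to relax the incentive constraints. A linear contract with coefficient $\alpha$ corresponds exactly to $x_1 = 0$ and $q = \alpha R$, giving profit $P^{(t)}(\alpha R)$ per type.

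The key structural lemma I would establish is a ``welfare monotonicity'': on each linear piece of $g^{(t)}$ with active action $i$ one has $P^{(t)}(q) + g^{(t)}(q) = \TypedExpectedReward{t}{i} - \Cost{i}$, so $W^{(t)}(q) \triangleq P^{(t)}(q) + g^{(t)}(q)$ equals the welfare of the best-response action, and this is non-decreasing in $q$ for all $q \le R$ (at each breakpoint where the active success probability jumps by $\Delta a > 0$, the welfare jumps by $\Delta a \,(R - q^*) \ge 0$, since the breakpoint $q^*$ satisfies $q^* \le R$). Given a menu, let $(x_1^{(t)}, q^{(t)})$ be the contract selected by type $t$, let $q_{\max}$ be the largest slope appearing in the menu, and set $\tilde q \triangleq \max(0, \min(q_{\max}, R))$ so that $\alpha \triangleq \tilde q / R \in [0,1]$. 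I would then show, type by type, that the menu's profit contribution $P^{(t)}(q^{(t)}) - x_1^{(t)}$ is at most $P^{(t)}(\tilde q)$. In the main regime $q_{\max} \in [0,R]$, incentive-compatibility against the max-slope contract forces $x_1^{(t)} \ge g^{(t)}(\tilde q) - g^{(t)}(q^{(t)})$, and combining this bound with welfare monotonicity on $[q^{(t)}, \tilde q]$ gives exactly $P^{(t)}(q^{(t)}) - x_1^{(t)} \le P^{(t)}(\tilde q)$. Summing over types then yields $\OMenu \le \frac1T \sum_t P^{(t)}(\tilde q) = \Profit(\alpha) \le \OLinear$.

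The main obstacle, and the part requiring the most care, is the edge behavior when slopes fall outside $[0,R]$, i.e. contracts paying more than the full reward on the good outcome ($q > R$) or paying more on the worthless outcome than on the good one ($q < 0$). For $q_{\max} > R$ I would observe that the incentive constraint against such an extreme contract forces offsets so large that every contribution is $\le 0 = P^{(t)}(R)$, so $\alpha = 1$ dominates. For $q_{\max} < 0$ (all slopes negative) incentive-compatibility no longer supplies the needed offset bound, so instead I would use feasibility: $x_2^{(t)} \ge 0$ forces $x_1^{(t)} \ge -q^{(t)}$, and a direct computation using $a^{(t)}(q^{(t)}) \le a^{(t)}(0)$ shows each contribution is at most $P^{(t)}(0)$, so $\alpha = 0$ dominates. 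Once these cases are dispatched, all three conclude with the same inequality $\OMenu \le \Profit(\alpha) \le \OLinear$, which together with $\OLinear \le \OMenu$ yields the claimed equality.
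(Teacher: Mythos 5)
Your proposal is correct, and it selects \emph{exactly} the same dominating linear contract as the paper: your $\tilde q$ is the paper's $\tilde{x} = \min\{\max\{\bar{x},0\}, \reward{2}\}$, namely the menu's largest difference $x_2 - x_1$ clamped to $[0,\reward{2}]$, and both arguments hinge on incentive compatibility against that maximum-slope contract together with nonnegativity of transfers. Where you genuinely diverge is in how domination is verified. The paper runs a two-step hybrid argument directly on forecast probabilities: it first freezes each type's action and shows the swap to the linear contract weakly lowers expected transfers (via IC against the max-difference type $\bar{t}$), then unfreezes the actions and shows that any re-optimization by a type with $\TypedContract{t}{2}-\TypedContract{t}{1}\le \reward{2}$ must weakly raise the probability of outcome two -- otherwise the same deviation would already have been profitable under the original menu -- and such deviations only help the principal since $\reward{2}-\tilde{x}\ge 0$; types with excessive differences are dispatched by noting the menu loses money on them. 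You instead work with the convex best-response utility $g^{(t)}(q)$ and prove a one-shot per-type inequality via the identity ``profit $=$ welfare of the induced action $-$ agent utility $+$ offset'' together with monotonicity of best-response welfare in the slope for $q \le R$; this is essentially the Section 3 machinery ($\Profit(\TransferCoefficient) = (1-\TransferCoefficient)\Derivative{\TransferCoefficient}\Utility(\TransferCoefficient)$) specialized to $m=2$. Your route buys a cleaner single inequality and a more explicit treatment of the edge regimes than the paper's terse cases (ii)/(iii): when $q_{\max} > R$ you show every type's contribution is nonpositive because IC forces agent utility at least $g^{(t)}(q_{\max})$, which exceeds any attainable welfare, and when $q_{\max} < 0$ you correctly fall back on feasibility ($x_2 \ge 0$, hence $x_1 \ge -q$) rather than IC; the paper's route is more elementary, requiring no convexity formalism. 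Two cosmetic caveats: your $\TransferCoefficient = \tilde q / R$ presupposes $R = \reward{2} > 0$ (the degenerate $R = 0$ case is trivial), and identifying $a^{(t)}(q)$ with the right-derivative of $g^{(t)}$ silently encodes the paper's principal-favoring tie-breaking, which is the correct convention precisely for $q \le R$ -- the only place you evaluate $P^{(t)}$, so no harm results.
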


Applying \Cref{lem:two-outcome-linear} to \Cref{thm:main-lower} yields the following corollary.

\begin{corollary}
\label{cor:two-outcome-linear}
For all $n, T > 0$, there exists a principal-agent problem $\PrincipalAgentProblem$ with $T$ types, $n+1$ actions, and two outcomes that satisfies
\begin{align*}
  \OMenu \PrincipalAgentProblem \le O\left( \frac{1}{n \log T} \right) \OWelfare \PrincipalAgentProblem
  \text{.}
\end{align*}
\end{corollary}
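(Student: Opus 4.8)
The plan is to obtain this corollary as an immediate consequence of the two preceding results, with no new construction required. The crucial feature to exploit is that the family of counterexamples built in Theorem~\ref{thm:main-lower} already uses exactly \emph{two} outcomes, which is precisely the regime governed by Lemma~\ref{lem:two-outcome-linear}. So the entire strategy is to take the existing lower-bound instance and observe that for it, menus collapse to linear contracts.

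First I would fix arbitrary $n, T > 0$ and invoke Theorem~\ref{thm:main-lower} to obtain the principal-agent problem $\PrincipalAgentProblem$ with $T$ types, $n+1$ actions, and $m = 2$ outcomes satisfying $\OLinear \PrincipalAgentProblem \le O(1/(n\log T)) \OWelfare \PrincipalAgentProblem$. Next, since this instance has exactly two outcomes, I would apply Lemma~\ref{lem:two-outcome-linear} directly to it, concluding that for this particular instance the best menu of contracts extracts no more profit than the best linear contract, i.e.\ $\OMenu \PrincipalAgentProblem = \OLinear \PrincipalAgentProblem$.

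Chaining the equality with the inequality then yields $\OMenu \PrincipalAgentProblem = \OLinear \PrincipalAgentProblem \le O(1/(n\log T)) \OWelfare \PrincipalAgentProblem$, which is exactly the desired bound, witnessed by the very same instance. Because this instance already has the stipulated number of types ($T$), actions ($n+1$), and outcomes ($2$), no modification or re-indexing is needed and the proof is complete.

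There is essentially no obstacle in this argument: the two preceding results do all the work, and the only point one must verify is the compatibility of their hypotheses---namely that the counterexample of Theorem~\ref{thm:main-lower} genuinely is a two-outcome instance, so that Lemma~\ref{lem:two-outcome-linear} applies verbatim. This is stated explicitly in Theorem~\ref{thm:main-lower}, so the corollary follows directly from the composition of the two.
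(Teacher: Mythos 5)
Your proof is correct and is exactly the paper's argument: the paper also obtains this corollary by applying \Cref{lem:two-outcome-linear} to the two-outcome counterexample of \Cref{thm:main-lower}, so that $\OMenu = \OLinear \le O\left(\frac{1}{n \log T}\right)\OWelfare$ on that instance. Your additional check that the hypotheses are compatible (the counterexample genuinely has $m = 2$ outcomes) is the only point of substance, and you handle it correctly.
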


\subsection{Reduction: Nonlinearity Disparity}
\label{subsec:nonlinearity-disparity}

In this subsection, we give a transformation for (typed) principal-agent problems that helps the optimal single contract extract the entire welfare as profit without allowing linear contracts to extract more profit.

\begin{lemma}
\label{lem:nonlinearity-disparity}
  Suppose we have a principal-agent problem $\PrincipalAgentProblem$ with $T$ types, $n$ actions, and $m$ outcomes. We can construct another contract problem $\PrincipalAgentProblemPrime$ with $T$ types, $n$ actions, and $m+2$ outcomes such that:
  \begin{align*}
    \OSingle \PrincipalAgentProblemPrime &= \OWelfare \PrincipalAgentProblem \\
    \OLinear \PrincipalAgentProblemPrime &= \OLinear \PrincipalAgentProblem
  \end{align*}
\end{lemma}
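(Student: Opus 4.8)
The plan is to build $\PrincipalAgentProblemPrime$ so that it has the \emph{same cost vector and the same expected-reward values} $\TypedExpectedReward{t}{i}$ as $\PrincipalAgentProblem$. The payoff is immediate from the two observations opening \Cref{sec:linear}: both $\OLinear$ and $\OWelfare$ are functions only of $\CostVector$ and the $\TypedExpectedReward{t}{i}$ (the individual forecasts $\TypedForecast{t}{i}{j}$ and rewards $\Reward{j}$ enter only through these sums). Hence keeping costs and expected rewards fixed forces $\OLinear\PrincipalAgentProblemPrime = \OLinear\PrincipalAgentProblem$ and $\OWelfare\PrincipalAgentProblemPrime = \OWelfare\PrincipalAgentProblem$ for free, and since profit never exceeds welfare we get $\OSingle\PrincipalAgentProblemPrime \le \OWelfare\PrincipalAgentProblemPrime$ automatically. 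So the whole lemma reduces to exhibiting a \emph{single} contract for $\PrincipalAgentProblemPrime$ whose profit is at least $\OWelfare\PrincipalAgentProblem$.

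First I would add the two new outcomes: an outcome $m+1$ of reward $0$, and an outcome $m+2$ of a large reward $V$, leaving $r'_j = \Reward{j}$ for $j \le m$. The point of the high outcome is to let the principal pay the agent an amount tracking the \emph{cost} of the chosen action rather than its reward -- exactly the nonlinearity a linear contract cannot reproduce. Concretely, for each type $t$ and action $i$ I set $\forecastprime{t}{i}{m+2} = \min(\Cost{i}, \TypedExpectedReward{t}{i})/V$, and then spread the remaining probability over the original outcomes together with the zero-reward sink $m+1$ so that the expected reward of the pair $(t,i)$ stays exactly $\TypedExpectedReward{t}{i}$. This is feasible for $V$ sufficiently large: after the high outcome siphons off expected reward $\min(\Cost{i}, \TypedExpectedReward{t}{i})$, the leftover target $\max(0, \TypedExpectedReward{t}{i} - \Cost{i})$ is at most $\Reward{m}$ while the leftover probability budget $1 - \min(\Cost{i}, \TypedExpectedReward{t}{i})/V$ tends to $1$, so a single positive-reward outcome plus the sink realize any such target.

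Next I would analyze the single contract $\ContractVector$ paying $\contract{m+2} = V$ on the high outcome and nothing elsewhere. A type-$t$ agent playing action $i$ then collects expected transfer $\forecastprime{t}{i}{m+2}\cdot V = \min(\Cost{i}, \TypedExpectedReward{t}{i})$, so its utility is $\min(\Cost{i}, \TypedExpectedReward{t}{i}) - \Cost{i} = \min(0, \TypedExpectedReward{t}{i} - \Cost{i}) \le 0$, with equality precisely for the nonnegative-welfare actions and in particular the welfare-maximizing action $i^*_t$. Every type is thus driven to zero surplus, and the pro-principal tie-break selects, among the zero-utility actions, the one of largest profit; since profit equals welfare whenever surplus is zero, this is $i^*_t$, yielding profit $\TypedExpectedReward{t}{i^*_t} - \Cost{i^*_t}$. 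Averaging over types gives $\OSingle\PrincipalAgentProblemPrime \ge \OWelfare\PrincipalAgentProblem$, and the reverse inequality closes the first equality. For the second, under any linear contract $\TransferCoefficient$ the new outcomes change nothing, since the induced action and profit depend only on the unchanged $\TypedExpectedReward{t}{i}$ and $\CostVector$.

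The main obstacle is the tension between the two required equalities: the transfer must encode $\Cost{i}$ to collapse the agent's surplus, yet the expected reward must be pinned to $\TypedExpectedReward{t}{i}$ to keep $\OLinear$ fixed. Resolving it is exactly the decoupling above -- paying on a high-reward, low-probability outcome whose probability is calibrated to the cost, with a separate zero-reward sink absorbing the probability slack. What remains is routine bookkeeping: verifying each $\forecastprime{t}{i}{\cdot}$ is a genuine distribution with the correct expected reward (handled by taking $V$ large), and the tie-breaking step, where one must confirm the principal captures the full $\TypedExpectedReward{t}{i^*_t} - \Cost{i^*_t}$ rather than the smaller reward of some competing zero-cost action that also leaves the agent with zero utility.
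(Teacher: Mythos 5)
Your proof is correct under the model's stated tie-breaking convention, but the construction is genuinely different from the paper's. The paper halves all original forecast probabilities, doubles the rewards, and adds \emph{two zero-reward} outcomes; the bonus outcome fires only on the pre-designated welfare-maximizing action $\TypedOptimalAction{t}$, with probability $\epsilon_t = \epsilon\, \Cost{\TypedOptimalAction{t}}$, so the single contract paying $1/\epsilon$ there reimburses exactly $\Cost{\TypedOptimalAction{t}}$, and $\OLinear$ is preserved \emph{structurally}, because linear contracts are blind to zero-reward outcomes. You instead keep every pair $(\Cost{i}, \TypedExpectedReward{t}{i})$ exactly invariant and route probability $\min(\Cost{i}, \TypedExpectedReward{t}{i})/V$ of \emph{every} action onto a single high-reward outcome, so the contract paying $V$ collapses each action's utility to $\min(0, \TypedExpectedReward{t}{i} - \Cost{i})$; preservation of $\OLinear$ then follows from the section-opening observation that linear-contract profit depends only on $\CostVector$ and the $\TypedExpectedReward{t}{i}$ --- a computational rather than structural argument, and a genuine economy, since you never designate per-type actions or rescale the instance (your feasibility bookkeeping with large $V$, including the boundary case $\max(0, \TypedExpectedReward{t}{i}-\Cost{i}) = \Reward{m}$ forcing $\Cost{i}=0$, checks out). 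The one trade-off worth flagging: your contract ties the agent at zero utility across \emph{all} nonnegative-welfare actions, whereas the paper's ties $\TypedOptimalAction{t}$ only with zero-cost actions. Under pro-principal tie-breaking both arguments go through (among zero-utility actions, profit equals welfare, so the tie-break hands you the welfare maximizer, as you note). But the paper's footnoted justification of that convention --- perturb the contract by a vanishing fraction of rewards --- does recover the right action in their construction (since $\TypedExpectedReward{t}{\TypedOptimalAction{t}} \ge \TypedExpectedReward{t}{j}$ for any zero-cost $j$), yet fails in yours: such a perturbation selects, among the tied actions, the one maximizing $\TypedExpectedReward{t}{i}$ rather than $\TypedExpectedReward{t}{i} - \Cost{i}$ (e.g. an action with expected reward $10$ and cost $9$ would be chosen over one with expected reward $5$ and cost $1$). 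So your proof is valid as the model is stated, but it leans on the exact tie-breaking rule more heavily than the paper's construction does.
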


\begin{proof}
  Our goal is to permit a single contract to extract the full welfare as profit but not allow linear contracts to extract any additional profit. We will achieve this by adding a new zero-reward outcome that can be used to encourage welfare-maximizing actions; note such linear contracts are blind to such outcomes. We also will need an additional zero-reward action to rebalance probabilities.

  Zooming in, we want to arrange things so that a single contract that only offers a positive transfer on outcome $(m + 1)$ winds up perfectly paying the cost of the welfare-maximizing action. Note that the identity of this action changes with the agent's type $t$; denote the identity of this action for type $t$ with $\TypedOptimalAction{t}$, breaking ties arbitrarily. We will need the probability of outcome $(m + 1)$ to be proportional to the cost of action $\TypedOptimalAction{t}$, since the expected transfers are this probability times the same base transfer amount. Let $\epsilon \triangleq (2 \max_t \Cost{\TypedOptimalAction{t}})^{-1}$, and let for all $t \in [T]$ let $\epsilon_t \triangleq \epsilon \Cost{\TypedOptimalAction{t}}$ which is at most one-half by construction; this outcome will occur for the type $t$'s welfare-maximizing action with probability $\epsilon_t$.
  
  We plan to dedicate one-half probability mass total to our two new outcomes, so outcome $m + 2$ will pick up the remaining probability. We also need to scale down the probability of existing outcomes by a factor of one-half so everything sums to one. To compensate for this probability scaling, we also scale up the rewards by a factor of two.
  
  Combining the above ideas, we formally define our new principal-agent problem $\PrincipalAgentProblemPrime$ as follows (see \Cref{tab:nonlinearity-disparity} for a depiction of this construction in tabular form).
  \begin{alignat*}
    \forall i \in [n] \quad
    &&\Cost{i}' &\triangleq \Cost{i} \\
    \forall t \in [T], i \in [n], j \in [m+2] \quad
    &&\forecastprime{t}{i}{j} &\triangleq \begin{cases}
      \tfrac12 \TypedForecast{t}{i}{j}                                  & \text{ if } j \in [m] \\
      \epsilon_t \indicator{i = \TypedOptimalAction{t}}                 & \text{ if } j = m + 1 \\
      \tfrac12 - \epsilon_t \indicator{i \not = \TypedOptimalAction{t}} & \text{ if } j = m + 2 \\
    \end{cases} \\
    \forall j \in [m+2] \quad
    &&\Reward{j}' &\triangleq \begin{cases}
      2 \Reward{j} & \text{ if } j \in [m] \\
      0            & \text{ if } j \in \{m+1, m+2\}
    \end{cases}
  \end{alignat*}
  
  \begin{table}
\centering
\begin{tabular}{cccccc}
  \toprule
  \multirow{2}{*}{Type $t \in [T]$} & \textbf{Outcome} $1$  & \multirow{2}{*}{$\cdots$} & \textbf{Outcome} $m$  & \textbf{Outcome} $m+1$ & \textbf{Outcome} $m+2$ \\
                                    & Reward $2 \reward{1}$ &                           & Reward $2 \reward{m}$ & Reward $0$             & Reward $0$               \\ \midrule
  \textbf{Action} $1$ & \multirow{2}{*}{$\tfrac12 \TypedForecast{t}{1}{1}$} & \multirow{2}{*}{$\cdots$} & \multirow{2}{*}{$\tfrac12 \TypedForecast{t}{1}{m}$} & \multirow{2}{*}{$0$} & \multirow{2}{*}{$\tfrac12$} \\ Cost $\cost{1}$ \\
  \textbf{Action} $2$ & \multirow{2}{*}{$\tfrac12 \TypedForecast{t}{2}{1}$} & \multirow{2}{*}{$\cdots$} & \multirow{2}{*}{$\tfrac12 \TypedForecast{t}{2}{m}$} & \multirow{2}{*}{$0$} & \multirow{2}{*}{$\tfrac12$} \\ Cost $\cost{2}$ \\
  $\vdots$ & $\vdots$ & $\ddots$ & $\vdots$ & $0$ & $\tfrac12$ \\
  \textbf{Action} $\TypedOptimalAction{t}$ & \multirow{2}{*}{$\tfrac12 \TypedForecast{t}{\TypedOptimalAction{t}}{1}$} & \multirow{2}{*}{$\cdots$} & \multirow{2}{*}{$\tfrac12 \TypedForecast{t}{\TypedOptimalAction{t}}{m}$} & \multirow{2}{*}{$\epsilon_t$} & \multirow{2}{*}{$\frac12 - \epsilon_t$} \\ Cost $\cost{\TypedOptimalAction{t}}$ \\
  $\vdots$ & $\vdots$ & $\ddots$ & $\vdots$ & $0$ & $\tfrac12$ \\
  \textbf{Action} $n$ & \multirow{2}{*}{$\tfrac12 \TypedForecast{t}{n}{1}$} & \multirow{2}{*}{$\cdots$} & \multirow{2}{*}{$\tfrac12 \TypedForecast{t}{n}{m}$} & \multirow{2}{*}{$0$} & \multirow{2}{*}{$\tfrac12$} \\ Cost $\cost{n}$ \\
  \bottomrule
\end{tabular}
\caption{Resulting principal-agent problem $\PrincipalAgentProblemPrime$ from the nonlinearity disparity reduction. For each type $t \in [T]$, we use outcome $m + t$ as a way for single contracts to extract the entire welfare. Outcome $m + T + 1$ is serves to rebalance the total probability mass.}
\label{tab:nonlinearity-disparity}
\end{table}
  
  Note that we have correctly guaranteed that the new forecast tensor is made of valid probability distributions, because for any type $t \in [T]$ and action $i \in [n]$, the relevant row sums to one:
  \begin{align*}
    \sum_{j=1}^{m+2} \forecastprime{t}{i}{j}
      = \underbrace{\left[ \sum_{j=1}^m \frac{1}{2} \forecastprime{t}{i}{j} \right]}_{\tfrac12}
      + \underbrace{\left[ \forecastprime{t}{i}{m+1} + \forecastprime{t}{i}{m+2} \right]}_{\frac12}
      = 1
  \end{align*}
  
  This completes our construction. We now argue that the optimal single contract is able to completely capture the new welfare $\OWelfare \PrincipalAgentProblemPrime$. Consider the contract which transfers $\left(1 / \epsilon\right)$ on outcome $(m+1)$: $\ContractVector = \left( 1 / \epsilon \right) \StandardBasisVector{m+1}$. Under this contract, an agent that picks action $i$ is expected to be transferred $\cost{\TypedOptimalAction{t}}$ if $i = \TypedOptimalAction{t}$ and nothing otherwise. Since costs are nonnegative, this means that $\TypedOptimalAction{t}$ is among the actions that maximize the utility of a type $t$ agent. Furthermore, since the principal only transfers the bare minimum, i.e. $\cost{\TypedOptimalAction{t}}$, they keep $\TypedExpectedReward{t}{i^*(t)} - \cost{i^*(t)}$. This is by definition the maximum welfare of a type $t$ agent.
  \begin{align*}
    \OSingle \PrincipalAgentProblemPrime &= \OWelfare \PrincipalAgentProblemPrime
  \end{align*}
  
  By construction, it is easy to see that the welfare has not changed. We scaled down all probabilities by one-half but scaled up all rewards by two, so expected rewards are the same. Since costs have remained constant, the welfare of each action is the same. Hence we get one of our desired statements:
  \begin{align*}
    \OWelfare \PrincipalAgentProblemPrime &= \OWelfare \PrincipalAgentProblem \\
    \OSingle \PrincipalAgentProblemPrime &= \OWelfare \PrincipalAgentProblem
  \end{align*}

  It remains to argue that linear contracts do not gain any profit. Since our two new outcomes have zero reward, they cannot be utilized by linear contracts. Hence, linear contracts are facing a rescaled version of the original problem, and the linear contract with transfer coefficient $\TransferCoefficient \in [0, 1]$ achieves the same profit in both $\PrincipalAgentProblemPrime$ as it did in $\PrincipalAgentProblem$.
  \begin{align*}
    \OLinear \PrincipalAgentProblemPrime &= \OLinear \PrincipalAgentProblem
  \end{align*}
  This completes the proof.
\end{proof}

Applying \Cref{lem:nonlinearity-disparity} to \Cref{thm:main-lower} yields the following corollary.

\begin{corollary}
\label{cor:nonlinearity-disparity}
For all $n, T > 0$, there exists a principal-agent problem $\PrincipalAgentProblem$ with $T$ types, $n+1$ actions, and four outcomes that satisfies
\begin{align*}
  \OLinear \PrincipalAgentProblem \le O\left( \frac{1}{n \log T} \right) \OSingle \PrincipalAgentProblem
  \text{.}
\end{align*}
\end{corollary}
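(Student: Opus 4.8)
The plan is to get this corollary essentially for free by feeding the hard instance of \Cref{thm:main-lower} through the reduction of \Cref{lem:nonlinearity-disparity}. First I would invoke \Cref{thm:main-lower} to obtain a principal-agent problem $\PrincipalAgentProblem$ with $T$ types, $n+1$ actions, and two outcomes for which $\OLinear \PrincipalAgentProblem \le O(1/(n\log T)) \OWelfare \PrincipalAgentProblem$. Then I would apply \Cref{lem:nonlinearity-disparity} to this instance to produce a new instance $\PrincipalAgentProblemPrime$. Since that reduction preserves the number of types and actions and adds exactly two (zero-reward) outcomes, $\PrincipalAgentProblemPrime$ has $T$ types, $n+1$ actions, and $2 + 2 = 4$ outcomes, which matches the parameter counts claimed in the corollary.

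The heart of the argument is then just to chain the two guarantees that \Cref{lem:nonlinearity-disparity} hands us, namely $\OSingle \PrincipalAgentProblemPrime = \OWelfare \PrincipalAgentProblem$ and $\OLinear \PrincipalAgentProblemPrime = \OLinear \PrincipalAgentProblem$. Combining these with the bound from \Cref{thm:main-lower} gives
\begin{align*}
  \OLinear \PrincipalAgentProblemPrime
    = \OLinear \PrincipalAgentProblem
    \le O\left( \frac{1}{n \log T} \right) \OWelfare \PrincipalAgentProblem
    = O\left( \frac{1}{n \log T} \right) \OSingle \PrincipalAgentProblemPrime,
\end{align*}
which is precisely the desired inequality, now stated for the four-outcome instance $\PrincipalAgentProblemPrime$.

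There is essentially no difficult step here; all of the work lives in the two results being composed, so the only thing I expect to need care is the bookkeeping. Concretely, I would double-check that the reduction truly ferries the gap from a ``versus welfare'' statement into a ``versus best single contract'' statement. This is exactly what the identity $\OSingle \PrincipalAgentProblemPrime = \OWelfare \PrincipalAgentProblem$ accomplishes: the newly introduced zero-reward outcome lets a single contract pay each type exactly the cost of its welfare-maximizing action, thereby extracting the full welfare as profit, while linear contracts remain blind to the zero-reward outcomes and hence cannot improve on their original profit. The one invariant worth re-confirming is that the reduction leaves $\OWelfare$ unchanged, so that the welfare appearing in \Cref{thm:main-lower} is the same quantity the single contract captures in $\PrincipalAgentProblemPrime$; this holds because the simultaneous halving of all probabilities and doubling of all rewards leaves each action's expected reward (and hence each type's welfare) fixed. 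With that invariant verified, the chain above is immediate and the proof is complete.
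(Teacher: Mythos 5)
Your proposal is correct and matches the paper's proof exactly: the paper also obtains this corollary by applying \Cref{lem:nonlinearity-disparity} to the two-outcome instance of \Cref{thm:main-lower}, chaining $\OLinear \PrincipalAgentProblemPrime = \OLinear \PrincipalAgentProblem$ with $\OSingle \PrincipalAgentProblemPrime = \OWelfare \PrincipalAgentProblem$. Your bookkeeping of types, actions, and outcomes is also the same as the paper's.
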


\subsection{Reduction: Information is Power}
\label{subsec:info-is-power}

In this subsection, we give a transformation for (typed) principal-agent problems that enable a type-aware principal to extract the entire welfare as profit, while principals subject to adverse selection cannot extract much more profit than they could in the base problem. For this transformation, we will need the following theoretical tool, capturing the notion that very small additive perturbations to the agent's utility cannot drastically alter the profit.

\begin{definition}[Cost-Stability]
\label{def:cost-stability}
  We say that a principal-agent benchmark $\Measure$ is cost-stable if for every principal-agent problem $\PrincipalAgentProblem$ there exist constants $s \ge 0$, $\tau > 0$ such that for all cost perturbations $\CostVector' \in \mathbb{R}^n, \dist{\CostVector}{\CostVector'}{\infty} \le \tau$, we have that
  \begin{align*}
    \Measure(\CostVector', \ForecastTensor, \RewardVector) \le \Measure \PrincipalAgentProblem + s \dist{\CostVector}{\CostVector'}{\infty}
  \end{align*}
\end{definition}

Cost-stability, when applied to the $\OSingle$, is related to \dutting{} et al.'s notion of $\delta$-incentive compatible contracts \cite{dutting2020complexity}. In \Cref{apx:stability}, we prove the following result, namely that all five of our benchmarks are cost-stable. However, this subsection's reduction only uses the cost-stability of $\OMenu$ and $\OSingle$. 

\begin{restatable}{corollary}{coststablecorollary}
\label{cor:cost-stable}
  The following principal-agent measures are all cost-stable: $\OWelfare$, $\OTypeAware$, $\OMenu$, $\OSingle$, and $\OLinear$.
\end{restatable}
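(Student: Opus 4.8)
The plan is to prove the bound in \Cref{def:cost-stability} separately for each benchmark, but to isolate a single reusable mechanism. First note two cheap reductions. For $\OWelfare \PrincipalAgentProblem = \E_t[\max_i (\TypedExpectedReward{t}{i} - \Cost{i})]$, lowering any cost by at most $\tau$ raises each inner maximum by at most $\tau$, so $\OWelfare$ is $1$-Lipschitz in $\CostVector$ and is cost-stable with $s=1$ and any $\tau$. And $\OTypeAware$ is a per-type average of single-type optimal profits, so its cost-stability reduces to the $T=1$ case of $\OSingle$. Thus the real content is $\OSingle$, $\OMenu$, and $\OLinear$ (the subsection's reduction in fact only needs $\OSingle$ and $\OMenu$). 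Since higher costs only make the agent harder to incentivize, the delicate direction is when $\CostVector'$ has \emph{lower} costs than $\CostVector$; equivalently, I must bound the value at the perturbed point $\CostVector'$ by the value at the fixed center $\CostVector$ plus $s\,\dist{\CostVector}{\CostVector'}{\infty}$.

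The engine is a configuration-LP decomposition. For $\OSingle$, index configurations by an action profile $\pi = (i_t)_{t\in[T]} \in [n]^T$ and define the best profit attainable while inducing $\pi$,
\[
  g_\pi\PrincipalAgentProblem = \max_{\ContractVector \ge 0}\left\{\ \E_t\Big[\textstyle\sum_{j} \TypedForecast{t}{i_t}{j}(\reward{j}-\contract{j})\Big]\ :\ \textstyle\sum_{j}\big(\TypedForecast{t}{i_t}{j}-\TypedForecast{t}{k}{j}\big)\contract{j}\ \ge\ \cost{i_t}-\cost{k}\ \ \forall t,k\ \right\},
\]
with $g_\pi = -\infty$ when the constraints are infeasible. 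Using the principal-favorable tie-breaking, one checks $\OSingle\PrincipalAgentProblem = \max_\pi g_\pi\PrincipalAgentProblem$. The crucial structural feature is that \emph{the objective of this LP does not involve $\CostVector$ at all}; the costs appear only in the right-hand sides $\cost{i_t}-\cost{k}$, which are affine in $\CostVector$. The same decomposition works for $\OMenu$ (configurations now record, for each type, which menu entry it selects and which action it then plays, yielding the across-type IC constraints of \eqref{ineq:ic} together with within-contract best-response constraints, again with cost-affine right-hand sides and cost-free objective) and for $\OLinear$ (a one-dimensional version in the single variable $\TransferCoefficient$, whose pieces are exactly the breakpoints $\Breakpoint{i}=\BiasDrop{i}/\CoefficientJump{i}$ identified in the proof of \Cref{thm:main-upper}).

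I would then invoke LP sensitivity: the optimal value of a linear program is a piecewise-linear function of its right-hand side, hence locally Lipschitz with constant bounded by the norm of an optimal dual solution. Since there are only finitely many configurations $\pi$ and each $g_\pi$ is finite (bounded above by $\OWelfare$) wherever feasible, I can take $s$ to be the maximum of these dual-norm Lipschitz constants $L_\pi$. The transfer argument is now clean: let $\ContractVector^\star,\pi^\star$ be optimal at $\CostVector'$, so $\OSingle(\CostVector',\ForecastTensor,\rewardvector)=g_{\pi^\star}(\CostVector',\ForecastTensor,\rewardvector)$. Because the objective is cost-free, the very same $\ContractVector^\star$ attains that profit value as a point feasible for the $2\dist{\CostVector}{\CostVector'}{\infty}$-relaxation of $\pi^\star$'s constraints at $\CostVector$ (each right-hand side moved by at most $2\dist{\CostVector}{\CostVector'}{\infty}$). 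Provided $\pi^\star$ is exactly feasible at $\CostVector$, LP sensitivity gives $g_{\pi^\star}(\CostVector',\ForecastTensor,\rewardvector)\le g_{\pi^\star}\PrincipalAgentProblem + 2L_{\pi^\star}\dist{\CostVector}{\CostVector'}{\infty}\le \OSingle\PrincipalAgentProblem + s\,\dist{\CostVector}{\CostVector'}{\infty}$, which is the desired inequality.

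The hard part is the feasibility-boundary issue hidden in the proviso above: a configuration that is \emph{infeasible} at the center $\CostVector$ may become feasible at $\CostVector'$ and contribute a spuriously large value, which is exactly the way $\max_\pi g_\pi$ could jump upward as we move off $\CostVector$. This is the same phenomenon underlying Dütting et al.'s $\delta$-incentive-compatible contracts: a near-optimal contract at $\CostVector'$ induces actions that are only $2\dist{\CostVector}{\CostVector'}{\infty}$-approximate best responses at $\CostVector$. I would resolve it by exploiting the freedom in \Cref{def:cost-stability} to choose $s,\tau$ \emph{per instance}. The set of right-hand sides for which a configuration is inducible is a (closed) polyhedron, and $\CostVector$ is fixed; so every configuration infeasible at $\CostVector$ is infeasible with a strictly positive margin, and there are finitely many of them. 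Choosing $\tau$ smaller than half of the smallest such positive margin guarantees that no configuration infeasible at $\CostVector$ becomes feasible anywhere in the $\tau$-ball, so the proviso holds for all relevant $\pi$ and the maximum cannot jump up. (Equivalently, one may first apply the paper's tie-breaking device, replacing $\ContractVector^\star$ by $\ContractVector^\star + \epsilon\rewardvector$ to make all induced best responses strict with positive slack, so they survive a cost swap of size below that slack.) With $\tau$ chosen this way and $s=2\max_\pi L_\pi$, the argument closes for $\OSingle$, and the identical template — finite configuration set, cost-free objective, cost-affine constraints, LP sensitivity, and an instance-dependent $\tau$ below the feasibility margin — yields cost-stability for $\OMenu$ and $\OLinear$, completing \Cref{cor:cost-stable}.
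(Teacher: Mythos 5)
Your proposal is correct and follows essentially the same route as the paper's proof in the cost-stability appendix: you decompose each benchmark over the finitely many type-to-action mappings, pick $\tau$ per instance below the smallest strictly positive infeasibility margin so that mappings infeasible at $\CostVector$ cannot become feasible in the $\tau$-ball (the paper operationalizes this margin via the auxiliary program $LP'_{\actionmapping}$), and pick $s$ via LP sensitivity with a bounded optimal dual solution, exactly as in the paper's lemmas for $\OSingle$ and $\OMenu$. The only cosmetic difference is that you run $\OLinear$ through the same one-dimensional LP template where the paper gives a bespoke direct computation of how far the minimal inducing $\TransferCoefficient$ can slide, while both treatments handle $\OWelfare$ as $1$-Lipschitz and reduce $\OTypeAware$ to per-type single contracts.
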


We now present the main theorem of this subsection. We give a proof sketch here and defer the full proof to \Cref{apx:info-is-power}.

\begin{restatable}{theorem}{infoispowerthm}
\label{thm:info-is-power}
  Suppose we have a principal-agent problem $\PrincipalAgentProblem$ with $T$ types, $n$ actions, and $m$ outcomes and a constant $\zeta > 0$. We can construct another principal-agent problem $\PrincipalAgentProblemPrime$ with $T+1$ types, $n$ actions, and $m+2$ outcomes such that:
  \begin{align*}
    \OTypeAware \PrincipalAgentProblemPrime &= \tfrac{T}{T+1} \OWelfare \PrincipalAgentProblem \\
    \OMenu \PrincipalAgentProblemPrime      &\le \tfrac{T}{T+1} \OMenu \PrincipalAgentProblem + \zeta \\
    \OSingle \PrincipalAgentProblemPrime    &\le \tfrac{T}{T+1} \OSingle \PrincipalAgentProblem + \zeta \\
    \OLinear \PrincipalAgentProblemPrime    &= \tfrac{T}{T+1} \OLinear \PrincipalAgentProblem
  \end{align*}
\end{restatable}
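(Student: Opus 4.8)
The plan is to graft a single adversarial ``parasite'' type onto the nonlinearity-disparity construction of \Cref{lem:nonlinearity-disparity}. Concretely, I would build $\PrincipalAgentProblemPrime$ exactly as in that lemma---adding two zero-reward outcomes $m+1$ and $m+2$, doubling the original rewards while halving the original forecasts, and letting outcome $m+1$ occur with probability $\epsilon_t \triangleq \epsilon\,\Cost{\TypedOptimalAction{t}}$ precisely when type $t$ plays its welfare-maximizing action $\TypedOptimalAction{t}$---and then add one new type $T+1$ that deterministically produces outcome $m+1$ for every action (i.e. $\forecastprime{T+1}{i}{m+1}=1$ for all $i$). Because $\Reward{m+1}'=0$, type $T+1$ has zero welfare and, under any contract $\ContractVector$, hands the agent an expected transfer of exactly $x_{m+1}$ regardless of the action taken. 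The key new parameter is $\epsilon$: unlike in \Cref{lem:nonlinearity-disparity}, I would take $\epsilon$ to be very small, with its exact size chosen at the end of the proof as a function of $\zeta$ and of the cost-stability constants of $\PrincipalAgentProblem$.

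The two exact identities are bookkeeping inherited from \Cref{lem:nonlinearity-disparity}. Halving probabilities while doubling rewards preserves every expected reward $\TypedExpectedReward{t}{i}$ and hence every per-type welfare of the original $T$ types, while type $T+1$ contributes zero welfare; since the uniform prior now puts mass $\tfrac{1}{T+1}$ on each type, $\OWelfare\PrincipalAgentProblemPrime = \tfrac{T}{T+1}\OWelfare\PrincipalAgentProblem$. For $\OTypeAware$, a type-aware principal offers each original type $t$ the contract $(1/\epsilon)\StandardBasisVector{m+1}$, which pays type $t$ exactly $\Cost{\TypedOptimalAction{t}}$ and extracts its full welfare, and offers type $T+1$ the null contract for zero profit; combined with the welfare upper bound $\OTypeAware\PrincipalAgentProblemPrime \le \OWelfare\PrincipalAgentProblemPrime$ this gives the claimed equality. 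For $\OLinear$, a linear contract transfers $\TransferCoefficient\Reward{j}'$ and so transfers nothing on the zero-reward outcomes $m+1,m+2$; it is thus blind to type $T+1$ (which yields zero profit) and faces precisely the rescaled original instance on the first $T$ types, giving $\OLinear\PrincipalAgentProblemPrime = \tfrac{T}{T+1}\OLinear\PrincipalAgentProblem$.

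The crux is the bound on $\OMenu$ (the $\OSingle$ bound is identical, viewing a single contract as a menu assigning every type the same contract). Fix an optimal menu $\ContractMatrixPrime$ for $\PrincipalAgentProblemPrime$ and set $X^\ast \triangleq \max_k \TypedContractPrime{k}{m+1}$. Since type $T+1$ always realizes outcome $m+1$, its utility from any contract $\TypedContractVectorPrime{k}$ equals $\TypedContractPrime{k}{m+1}$, so incentive-compatibility forces its own contract to satisfy $\TypedContractPrime{T+1}{m+1} = X^\ast$, and the principal's profit from type $T+1$ is exactly $-X^\ast$. For the original types, outcome $m+2$ carries zero reward and positive probability (so transferring on it is pure loss) and outcome $m+1$ is the only new lever: a transfer of $\TypedContractPrime{t}{m+1}\le X^\ast$ there lowers the effective cost of $\TypedOptimalAction{t}$ for type $t$ by at most $\epsilon_t X^\ast \le \epsilon\,(\max_t \Cost{\TypedOptimalAction{t}})\,X^\ast$. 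Up to this cost perturbation, restricting $\ContractMatrixPrime$ to the original outcomes maps to an incentive-compatible menu for a copy of $\PrincipalAgentProblem$ whose costs are reduced by at most this amount, so by cost-stability of $\OMenu$ (\Cref{cor:cost-stable}) the original types contribute at most $T\bigl(\OMenu\PrincipalAgentProblem + s\,\epsilon\,(\max_t\Cost{\TypedOptimalAction{t}})\,X^\ast\bigr)$, where $s$ is the stability constant. Hence
$$
  \Profit(\ContractMatrixPrime)
  \;\le\; \frac{1}{T+1}\Bigl(T\,\OMenu\PrincipalAgentProblem + s\,T\,\epsilon\,(\textstyle\max_t\Cost{\TypedOptimalAction{t}})\,X^\ast - X^\ast\Bigr).
$$
Discarding the $-X^\ast \le 0$ term and driving the middle term below $\zeta$ by taking $\epsilon$ small yields $\OMenu\PrincipalAgentProblemPrime \le \tfrac{T}{T+1}\OMenu\PrincipalAgentProblem + \zeta$.

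The main obstacle is making this last paragraph rigorous, and in particular pinning down the cost-stability step. Three points need care: (i) $X^\ast$ must be bounded a priori and independently of $\epsilon$---this follows because any menu with $X^\ast$ larger than the total available welfare is beaten by the null menu (its $-X^\ast$ term alone drives profit negative), giving $X^\ast \le T\,\OWelfare\PrincipalAgentProblem$; (ii) with $X^\ast$ so bounded, $\epsilon$ can be chosen small enough that the induced cost reduction $\epsilon(\max_t\Cost{\TypedOptimalAction{t}})X^\ast$ both stays within the stability radius $\tau$ and makes the error term at most $\zeta$; and (iii) one must verify that restricting the optimal new menu to the original outcomes genuinely produces an incentive-compatible menu for the cost-reduced original instance with at least as much profit on the original types---the subtlety being that the $m+1$ and $m+2$ transfers perturb the agents' utilities in a type- and contract-dependent way (the $m+1$ subsidy only cheapens action $\TypedOptimalAction{t}$ for type $t$), so the correspondence between the new incentive constraints and the original cost-reduced ones must be checked rather than assumed.
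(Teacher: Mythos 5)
Your bookkeeping for $\OTypeAware$, $\OWelfare$, and $\OLinear$ matches the paper's, and your overall architecture --- punish the new outcomes with an extra type, bound the offending transfers a priori, then absorb their residual influence via \Cref{cor:cost-stable} with $\epsilon$ chosen last --- is exactly the paper's plan. But there is a genuine gap in the $\OMenu$ bound, and it traces to the one place you deviate: you reuse the half-mass layout of \Cref{lem:nonlinearity-disparity} (old forecasts scaled by $\tfrac12$, outcome $m+2$ absorbing mass $\approx \tfrac12$ for every old type and action) and give the new type probability $1$ on outcome $m+1$ only. Transfers on outcome $m+2$ are then punished by no one: type $T+1$ never sees $m+2$, and an old type $t$ under contract $k$ receives expected $m+2$ transfer $\left(\tfrac12 - \epsilon_t \indicator{i = \TypedOptimalAction{t}}\right) x^{(k)}_{m+2}$ --- an action-independent but \emph{contract-dependent} lump sum of size $\approx \tfrac12 x^{(k)}_{m+2}$, plus only an $O(\epsilon_t\, x^{(k)}_{m+2})$ action-dependent part. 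The lump-sum channel can reshape which contract each type selects, i.e., it relaxes cross-type IC constraints without distorting action choice --- something menus in the original problem cannot replicate, so it is not obviously ``pure loss.'' The best a priori bound you can extract (any chosen contract with larger $x^{(k)}_{m+2}$ drives profit negative) is $x^{(k)}_{m+2} = O(T \cdot \OWelfare \PrincipalAgentProblem)$, which does \emph{not} shrink with $\epsilon$; the induced perturbation is therefore $\Theta(x_{m+2})$, far outside any stability radius $\tau$, and \Cref{cor:cost-stable} cannot absorb it. Your displayed profit inequality silently omits this term. Note the dismissal of $m+2$ \emph{is} valid for single contracts, where the lump sum is (up to $\epsilon_t$) identical for every type and action and hence WLOG zero --- so your $\OSingle$ argument essentially survives, but your $\OMenu$ argument, the crux of the theorem, does not.

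The paper's construction avoids this precisely where you diverged from it: it scales old forecasts by $(1-\epsilon)$ rather than $\tfrac12$, so the two new outcomes together carry only $\epsilon$ probability mass for old types ($m+1$ with probability $\epsilon$ on action $\TypedOptimalAction{t}$, $m+2$ with probability $\epsilon$ otherwise), and it gives the new type probability $\tfrac12$ on \emph{each} of $m+1$ and $m+2$ for every action. Then any contract transferring more than $2T \cdot \OWelfare \PrincipalAgentProblem$ on either new outcome is beaten by the all-zeros contract, and with both transfers so bounded, their total influence on any old type's utility for any action is $O(\epsilon T \cdot \OWelfare \PrincipalAgentProblem)$ --- genuinely small, so the cost-stability step goes through once $\epsilon$ is chosen to respect $\tau$ and drive the error below $\zeta$. (The type-aware extraction adapts easily: instead of your type-dependent probabilities $\epsilon_t$ with the single contract $(1/\epsilon)\StandardBasisVector{m+1}$, the paper uses uniform probability $\epsilon$ with the type-dependent contract $\left(\Cost{\TypedOptimalAction{t}}/\epsilon\right)\StandardBasisVector{m+1}$, which a type-aware principal may of course offer.)
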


\begin{proof}[Proof Sketch]
  We use a parameter $\epsilon \in (0, 1)$ to represent the probability mass dedicated to our two new outcomes, which is chosen in the full proof to keep probabilities valid. One of the outcomes serves to allow type-aware principals to extract the entire welfare as profit, while the other outcome balances probability mass. We introduce a new type to punish a principal unaware of the type who tries to use either of these new outcomes. Doing so is merely expensive for them, not impossible, and so we apply \Cref{cor:cost-stable} to show that barely using these two outcomes is not very powerful.
  
  We give the formal definition of the resulting principal-agent problem $\PrincipalAgentProblemPrime$ here, but defer proving it has the desired properties to \Cref{apx:info-is-power}.
  \begin{alignat*}
    \forall i \in [n] \quad
    &&\Cost{i}' &\triangleq \Cost{i} \\
    \forall t \in [T], i \in [n], j \in [m+2] \quad
    &&\forecastprime{t}{i}{j} &\triangleq \begin{cases}
      (1 - \epsilon) \TypedForecast{t}{i}{j}               & \text{ if } j \in [m] \\
      \epsilon \indicator{i = \TypedOptimalAction{t}}      & \text{ if } j = m + 1 \\
      \epsilon \indicator{i \not = \TypedOptimalAction{t}} & \text{ if } j = m + 2
    \end{cases} \\
    \forall i \in [n], j \in [m+2] \quad
    &&\forecastprime{T+1}{i}{j} &\triangleq \begin{cases}
      0       & \text{ if } j \in [m] \\
      \frac12 & \text{ otherwise}
    \end{cases} \\
    \forall j \in [m+2] \quad
    &&\Reward{j}' &\triangleq \begin{cases}
      \frac{\Reward{j}}{1 - \epsilon} & \text{ if } j \in [m] \\
      0                               & \text{ if } j \in \{m+1, m+2\} \\
    \end{cases}
  \end{alignat*}
\end{proof}

Applying \Cref{thm:info-is-power} to \Cref{cor:two-outcome-linear} yields the following corollary.

\begin{corollary}
\label{cor:info-is-power}
For all $n, T > 0$, there exists a principal-agent problem $\PrincipalAgentProblem$ with $T+1$ types, $n+1$ actions, and four outcomes that satisfies
\begin{align*}
  \OMenu \PrincipalAgentProblem \le O\left( \frac{1}{n \log T} \right) \OTypeAware \PrincipalAgentProblem
  \text{.}
\end{align*}
\end{corollary}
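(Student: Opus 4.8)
The plan is to obtain this corollary purely by composition: feed the two-outcome counterexample of \Cref{cor:two-outcome-linear} into the Information-is-Power transformation of \Cref{thm:info-is-power}. Concretely, I would fix the base principal-agent problem $\PrincipalAgentProblem$ guaranteed by \Cref{cor:two-outcome-linear}, which has $T$ types, $n+1$ actions, two outcomes, and satisfies $\OMenu \PrincipalAgentProblem \le O(1/(n\log T)) \OWelfare \PrincipalAgentProblem$. Applying \Cref{thm:info-is-power} to this problem (with a parameter $\zeta > 0$ to be fixed later) produces a new problem $\PrincipalAgentProblemPrime$ with $T+1$ types, $n+1$ actions, and $2 + 2 = 4$ outcomes, which are exactly the dimensions claimed in the corollary.

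Next I would chain the guarantees of \Cref{thm:info-is-power} with the gap from \Cref{cor:two-outcome-linear}. From the theorem, $\OTypeAware \PrincipalAgentProblemPrime = \tfrac{T}{T+1} \OWelfare \PrincipalAgentProblem$ and $\OMenu \PrincipalAgentProblemPrime \le \tfrac{T}{T+1}\OMenu \PrincipalAgentProblem + \zeta$. Substituting the base-problem gap $\OMenu \PrincipalAgentProblem \le \frac{C}{n\log T}\OWelfare \PrincipalAgentProblem$ (for the absolute constant $C$ hidden in the $O(\cdot)$) gives
$$\OMenu \PrincipalAgentProblemPrime \le \tfrac{T}{T+1} \cdot \frac{C}{n\log T}\OWelfare \PrincipalAgentProblem + \zeta = \frac{C}{n\log T}\OTypeAware \PrincipalAgentProblemPrime + \zeta.$$
So the multiplicative part is already of the right form; all that remains is to absorb the additive slack $\zeta$.

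The one point requiring care is the choice of $\zeta$. The crucial observation is that $\OTypeAware \PrincipalAgentProblemPrime = \tfrac{T}{T+1}\OWelfare \PrincipalAgentProblem$ is a fixed, strictly positive quantity determined entirely by the base problem (in particular independent of $\zeta$ and of the internal perturbation parameter $\epsilon$ of the transformation), since the base counterexample has $\OWelfare \PrincipalAgentProblem = \Theta(1/T) > 0$. Hence I would choose $\zeta \triangleq \frac{1}{n\log T}\OTypeAware \PrincipalAgentProblemPrime$, which is a legitimate positive constant, and conclude
$$\OMenu \PrincipalAgentProblemPrime \le \frac{C+1}{n\log T}\OTypeAware \PrincipalAgentProblemPrime = O\left(\frac{1}{n\log T}\right)\OTypeAware \PrincipalAgentProblemPrime,$$
as desired. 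The main (and very mild) obstacle is exactly this bookkeeping: verifying that the additive error term in \Cref{thm:info-is-power} can be made negligible relative to the fixed benchmark $\OTypeAware \PrincipalAgentProblemPrime$, so that it folds harmlessly into the $O(1/(n\log T))$ factor rather than degrading the bound. Everything else is a direct substitution of the two previously established results.
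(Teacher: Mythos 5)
Your proposal is correct and is exactly the paper's route: the paper obtains \Cref{cor:info-is-power} precisely by applying \Cref{thm:info-is-power} to the two-outcome counterexample of \Cref{cor:two-outcome-linear}, leaving the $\zeta$-bookkeeping implicit, which you have spelled out. Your choice $\zeta = \frac{1}{n \log T} \OTypeAware \PrincipalAgentProblemPrime$ is legitimate and non-circular, since $\OTypeAware \PrincipalAgentProblemPrime = \tfrac{T}{T+1} \OWelfare \PrincipalAgentProblem > 0$ is determined by the base problem alone (with $\OWelfare \PrincipalAgentProblem = \Theta(1/T)$ strictly positive) and does not depend on $\zeta$ or on the transformation's internal parameter $\epsilon$.
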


\bibliographystyle{unsrtnat}
\bibliography{contracts-bibliography}

\appendix

\section{Simplifying Assumptions for the Principal-Agent Problem}
\label{subsec:simplifying}

In Section \ref{sec:prelims} we remarked that without loss of generality, we can assume that $r_{1} = c_{1} = 0$; that is, there is always a zero-cost action and a zero-reward outcome. Here we justify these claims.

\begin{itemize}
  \item \textbf{Zero-cost action}: To see why we can assume $c_{1} = 0$, assume $0 < c_{1} \leq c_{2} \leq \dots \leq c_{n}$. We then claim that if we subtract $c_{1}$ from each of these costs (setting $c'_{i} = c_{i} - c_{1}$), we obtain an equivalent principal-agent problem: in particular, if a contract $\Contract{}$ induced action $i$ in the original contract, it continues to induce action $i$ under costs $\CostVector'$. To see this, it suffices to note that since $c'_{i} - c_{i}$ is constant
    
  $$\argmax_i \left( \sum_{j=1}^m \TypedForecast{t}{i}{j} \Contract{j} \right) - \Cost{i} = \argmax_i \left( \sum_{j=1}^m \TypedForecast{t}{i}{j} \Contract{j} \right) - \Cost{i}' \text{,}$$

  and therefore $\OptimalAction(\ContractVector)$ is the same under $\CostVector$ and $\CostVector'$. Since our benchmarks do not depend on cost beyond the induced action of the agent\footnote{Technically, this is untrue as $\OWelfare$ is currently defined -- when $c_1 \neq 0$, we should define welfare as $\OWelfare = \max_{i}(R_{i} - \Delta_i))$, where $\Delta_i = c_{i} - \min_{i'}c_{i'}$ is the minimum payment required to convince the agent to play item $i$. Alternatively, increasing $c_1$ only decreases (our current definition of) welfare, and only makes approximation ratios smaller.}, all our benchmarks and approximation ratios are unchanged under $\CostVector'$. 
    
  \item \textbf{Zero-reward outcome}: To see why we can assume $r_{1} = 0$, assume $0 < r_{1} \leq r_{2} \leq \dots \leq r_{n}$. Again, subtract $r_{1}$ from each of these rewards to obtain a new principal-agent problem with rewards $\RewardVector'$ (given by $r'_{i} = r_{i} - r_{1}$). Now, note that subtracting some amount $r$ from the reward of each outcome results in each of our benchmarks decreasing by $r$ (the expected profit for any mechanism for the principal will decrease by $r$ if all rewards decrease by $r$). Since $\frac{M_{1} - r}{M_{2} - r} > \frac{M_{1}}{M_{2}}$ when $M_{1} > M_{2}$, this only increases our approximation gaps and hence we can assume this without loss of generality.\footnote{Technically, this subtraction does not preserve what contracts are linear, and we would need to consider affine contracts in the general case to correspond with linear contracts in the zero-reward outcome case.}
\end{itemize}
\section{Computational Results}
\label{sec:computation}

The fact that agents have types raises some fundamental questions: given a principal-agent problem $\PrincipalAgentProblem$, can we compute the optimal menu of contracts / single contract? How efficiently can this be done in terms of the size of the input?

In this appendix, we show that computing the optimal contract and optimal menu are \ApxHard{} when the number of types $T$ and number of outcomes $m$ can grow without bound. We subsequently show that there exists an algorithm computes the optimal contract in $O(\min(\poly(n^T,m), \poly((n^2T)^m)$ time. 

\subsection{Hardness Results}

\apxhardness*

\begin{proof}
To show that the problem is \ApxHard, we will reduce from the problem \DSet.

\begin{table}
\centering
\begin{tabular}{cccccc}
  \toprule
  \multirow{2}{*}{Type $t \in [N]$} & \textbf{Outcome} $1$ & \multirow{2}{*}{$\cdots$} & \textbf{Outcome} $j \in [N]$ & \multirow{2}{*}{$\cdots$} & \textbf{Outcome} $N+1$ \\
                                    & Reward $1$           &                           & Reward $1$                   &                           & Reward $0$             \\ \midrule
  \textbf{Action} $1$ & \multirow{2}{*}{$\indicator{\neighbor{1}{t} = 1}$} & \multirow{2}{*}{$\cdots$} & \multirow{2}{*}{$\indicator{\neighbor{1}{t} = j}$} & \multirow{2}{*}{$\cdots$} & \multirow{2}{*}{$0$} \\ Cost $\tfrac12$ \\
  \textbf{Action} $2$ & \multirow{2}{*}{$\indicator{\neighbor{2}{t} = 1}$} & \multirow{2}{*}{$\cdots$} & \multirow{2}{*}{$\indicator{\neighbor{2}{t} = j}$} & \multirow{2}{*}{$\cdots$} & \multirow{2}{*}{$0$} \\ Cost $\tfrac12$ \\
  \textbf{Action} $3$ & \multirow{2}{*}{$\indicator{\neighbor{3}{t} = 1}$} & \multirow{2}{*}{$\cdots$} & \multirow{2}{*}{$\indicator{\neighbor{3}{t} = j}$} & \multirow{2}{*}{$\cdots$} & \multirow{2}{*}{$0$} \\ Cost $\tfrac12$ \\
  \textbf{Action} $4$ & \multirow{2}{*}{$\indicator{t = 1}$} & \multirow{2}{*}{$\cdots$} & \multirow{2}{*}{$\indicator{t = j}$} & \multirow{2}{*}{$\cdots$} & \multirow{2}{*}{$0$} \\ Cost $\tfrac12$ \\
  \textbf{Action} $5$ & \multirow{2}{*}{$0$} & \multirow{2}{*}{$\cdots$} & \multirow{2}{*}{$0$} & \multirow{2}{*}{$\cdots$} & \multirow{2}{*}{$1$} \\ Cost $0$ \\ \midrule
  \multirow{2}{*}{Type $t \in [N+1, 2N]$} & \textbf{Outcome} $1$ & \multirow{2}{*}{$\cdots$} & \textbf{Outcome} $j \in [N]$ & \multirow{2}{*}{$\cdots$} & \textbf{Outcome} $N+1$ \\
                                          & Reward $1$           &                           & Reward $1$                   &                           & Reward $0$             \\ \midrule
  \textbf{Action} $1$ & \multirow{2}{*}{$0$} & \multirow{2}{*}{$\cdots$} & \multirow{2}{*}{$0$} & \multirow{2}{*}{$\cdots$} & \multirow{2}{*}{$1$} \\ Cost $\tfrac12$ \\
  \textbf{Action} $2$ & \multirow{2}{*}{$0$} & \multirow{2}{*}{$\cdots$} & \multirow{2}{*}{$0$} & \multirow{2}{*}{$\cdots$} & \multirow{2}{*}{$1$} \\ Cost $\tfrac12$ \\
  \textbf{Action} $3$ & \multirow{2}{*}{$0$} & \multirow{2}{*}{$\cdots$} & \multirow{2}{*}{$0$} & \multirow{2}{*}{$\cdots$} & \multirow{2}{*}{$1$} \\ Cost $\tfrac12$ \\
  \textbf{Action} $4$ & \multirow{2}{*}{$0$} & \multirow{2}{*}{$\cdots$} & \multirow{2}{*}{$0$} & \multirow{2}{*}{$\cdots$} & \multirow{2}{*}{$1$} \\ Cost $\tfrac12$ \\
  \textbf{Action} $5$ & \multirow{2}{*}{$\indicator{t = N + 1}$} & \multirow{2}{*}{$\cdots$} & \multirow{2}{*}{$\indicator{t = N + j}$} & \multirow{2}{*}{$\cdots$} & \multirow{2}{*}{$0$} \\ Cost $0$ \\ \midrule
  \bottomrule
\end{tabular}
\caption{Counterexample principal-agent problem $\PrincipalAgentProblem$ constructed from an instance of \DSet{}. The types $t \in [N]$ generate one-half unit of profit if their corresponding vertex $t$ is dominated, and the types $t \in [N+1, 2N]$ eliminate one-half unit of profit if their corresponding vertex $(t - N)$ is chosen to be in the dominating set.}
\label{tab:apx-hard}
\end{table}

\textbf{Construction:} Given an instance of \DSet on a graph $G = (V,E)$ with $N$ vertices with maximum degree at most $3$, we will create a principal-agent problem $\PrincipalAgentProblem$ with $T = 2N$ types, $n = 5$ actions, and $m = N + 1$ outcomes. Without loss of generality, we will order the neighbors of $i \in V$ to be non-decreasing and refer to the $k^{\text{th}}$ neighbor as $\neighbor{k}{i}$. Outcome $N + 1$ will be a ``null'' outcome which yields zero reward: $\Reward{N+1} \triangleq 0$. For the remaining outcomes $j \in [N]$, the rewards are one: $\Reward{j} \triangleq 1$. Furthermore, the cost of the first four actions $i \in [4]$ is $\Cost{i} = \tfrac12$ and the final ``null'' action has zero cost: $\Cost{5} = 0$.

For the first $N$ types $t \in [N]$, the final null action always yields the null outcome, i.e. $\TypedForecast{t}{5}{N+1} \triangleq 1$. Action four always yields outcome $t$, i.e. $\TypedForecast{t}{4}{t} \triangleq 1$. The remaining actions $i \in [3]$ always yield outcome $\neighbor{i}{t}$, i.e. $\TypedForecast{t}{i}{\neighbor{i}{t}} \triangleq 1$. Obviously, the remaining probability entries for these first $N$ types are zeros so we get valid probability distributions.

For the last $N$ types $t \in [N+1, 2N]$, actions $i \in [4]$ always produce the null outcome $\TypedForecast{t}{i}{N+1} \triangleq 1$ while the null action $i = 5$ produces outcome $t - N$, i.e. $\TypedForecast{t}{5}{t-N} \triangleq 1$. Again, we fill in the remaining probability entries for these last $N$ types with zeros to get valid probability distributions.

\textbf{Analysis}
($\Leftarrow$)  Given any dominating set $S$ for $G$, consider the contract $\ContractVector$ where $\Contract{t} = \frac12$ for all $t \in S$, and $\Contract{t} = 0$ otherwise. By the definition of dominating set, for each vertex $v \in [N]$, either $v \in S$ or $v$ has some neighbor $w \in S$. Therefore the corresponding type $t \in [N]$ can either take action four or some action $i \in [3]$ to get a transfer of one-half, paying off its cost. The principal can pocket the remaining one-half reward. For the remaining types $t \in [N, 2N]$ each type will produce a deterministic reward of one but the principal will transfer one-half to each $t + i$ such that $i \in S$. Thus the expected profit of this contract is $\frac34 - \frac{|S|}{4N}$. 

($\Rightarrow$) 
Without loss of generality, we assume that the optimal menu of contracts $\ContractMatrix^*$ and optimal single contract $\ContractVector^*$ satisfy the following two properties.
\begin{enumerate}
  \item \label{assum1} They only transfers either zero or one-half on every outcome, i.e. $\TypedContract{t}{j} \in \set{0, \frac12}$ or $\Contract{j} \in \set{0, \frac12}$ for all $j \in [N+1]$.  
  \item \label{assum2} All types $t \in [N]$ choose non-null actions.
\end{enumerate}

If \cref{assum1} is violated, then we can round down all transfer amounts to either zero or one-half. This keeps agent preferences the same but saves money, so it is only (weakly) more profitable.

We next prove \cref{assum2} using \cref{assum1}. Suppose some type $t \in [N]$ chooses the null action $i = N + 1$. Then we know that for all outcomes $j \in \{ v \mid (t,v) \in E \text{ or } v = t \} $, the menu of contracts / single contract offers no transfer. Consider the menu of contracts / single contract which always transfers one-half on outcome $t$, but is otherwise the same. Observe that this new contract gains at least $\frac{1}{4T}$ in expected profit overall by covering an additional type $t \in [T]$ and loses $\frac{1}{4T}$ in expected profit to some $t \in [T, 2T]$. This reduces the number of types $t \in [N]$ that choose the null action by at least one, so applying this a finite number of times yields the desired assumption without any loss in profit.

Consider an optimal menu of contracts / single contract satisfying \cref{assum1} and \cref{assum2} $x^*$. Let $S$ be the set of outcomes where any contract in the menu / the single contract transfers one-half. By \cref{assum2}, $S$ is a dominating set since an agent of type $t \in [N]$ only chooses a non-null action if one of the outcomes in $\{t, \neighbor{1}{t}, \neighbor{2}{t}, \neighbor{3}{t}\}$ has a transfer of one-half. The profit generated by this menu of contracts / single contract is exactly $\frac34 -  \frac{|S|}{4N}$.

Using the work of \cite{chlebik2008approximation}[Proof of Theorem 6], we know that there exist instances of \DSet containing $N$ vertices where it is \NPHard to distinguish if the optimal dominating set has at least $0.28792798 N$ vertices or at most $0.28719007142 N$ vertices. It is hence $\NPHard$ to distinguish if the optimal menu of contracts / single contract extracts expected profit at least $0.67820248214$ or at most $0.678018005$. This completes the proof.
\end{proof}

\subsection{Algorithms to compute the optimal contracts}

\expalg*

\begin{proof}
First we will show an algorithm that computes the optimal contract in time $\poly(n^T,m)$ time. 
The optimal contract will induce some set of actions for each  type of agent.  We  will enumerate over all possible
action profiles $a \in [n]^T$, and compute the optimal contract to induce those actions. Since there are $n^T$ such profiles, 
we will solve $O(n^T)$ linear programs, one for each action profile. 

Given an action profile $a \in [n]^T$, we can compute the optimal contract via the following linear program:
\begin{align*}
\min  &\sum_o (\sum_i F^i_{a_i,o} ) \cdot x_o  \\
\sum_{o \in O} F^i_{a_i, o} \cdot x_o - c_{a_i} &\geq \sum_{o \in O} F^i_{p,o} \cdot x_o - c_{p} \qquad \forall i \in [T], \forall p \neq a_i, p \in [n]  \tag{P1} \label{const:fair} \\
   x &\geq 0 
\end{align*} 

In  constraint~\ref{const:fair}, we require the contract $x$ forces an agent of type $i$ 
to prefer action $a_i$ over all other $p$. Since we know that there exists an action with $0$ cost, we
know that the optimal contract will either induce this action profile or be infeasible. 
This LP has  $n(T+1)$ constraints, and can be solved in $\poly(n\cdot T \cdot m)$ time. 

Next, we show an algorithm that computes the optimal contract in time $\poly((n^2T)^m)$. 
For each agent of type $i$, and each pair of actions $a_1 , a_2 \in [n]$, an optimal contract will 
induce a preference between $a_1$ and $a_2$. Each of these can be encoded as a 
constraint of the form $\cref{const:fair}$. Since there are $\binom{n}{2}$ possible pairs of constraints and $T$ different types of agents, there can be at most $O(n^2T)$ distinct constraints. 
Since the optimal contract $x^*$ is a solution to an LP determined by constraints of the form~$\cref{const:fair}$, it must be an extreme point
 determined by some set of 
constraints of size $m$ from set of all possible constraints. Therefore, we can enumerate all possible subsets of constraints of size $m$ from $O(n^2T)$ constraints and evaluate the resulting contract. This procedure can be done in time $O(n^2T)^m)$. 
\end{proof}

\section{Proof of the Two Outcome Case}
\label{apx:two-outcome}

In this appendix, we give the proof of \Cref{lem:two-outcome-linear}. We restate the lemma below for convenience.

\twooutcomelemma*

\begin{proof}
  Fix an optimal menu of contracts $\ContractMatrix$:
  \begin{align*}
    \ContractMatrix = \left(
      \TypedContractVector{1} = \left(\TypedContract{1}{1}, \TypedContract{1}{2}\right),
      \ldots,
      \TypedContractVector{T} = \left(\TypedContract{T}{1}, \TypedContract{T}{2}\right),
    \right)
    \text{.}
  \end{align*}
  
  We simply give a single linear contract that will beat the menu. Before we define it, we first define two helper values.
  \begin{align*}
    \bar{x} &\triangleq \max_t \TypedContract{t}{2} - \TypedContract{t}{1} \\
    \tilde{x} &\triangleq \min \{ \max \{ \bar{x}, 0 \}, \reward{2} \}
  \end{align*}
  Informally, $\bar{x}$ is the maximum difference any contract in the menu gave for outcome two over outcome one, and $\tilde{x}$ is the same but capped to the range $[0, \reward{2}]$.
  
  Consider the linear contract $\ContractVector \triangleq \left(0, \tilde{x}\right)$. We will prove it is better via a hybrid argument. In particular, let us run a though experiment where we restrict all agents to their choice of action under the menu $\ContractMatrix$ but replace the menu with this linear contract $\ContractVector$. We claim that under this fixed-action restriction, the linear contract achieves (weakly) more profit than the menu did. To see why, observe that by definition, $\tilde{x}$ either (i) still equals $\bar{x}$, (ii) was capped down to $\reward{2}$, or (iii) was capped up to zero. In cases (i) and (ii), consider a type $t$ agent that chose action $i$. They could have misreported their type to be $\bar{t} \triangleq \argmax_t \TypedContract{t}{2} - \TypedContract{t}{1}$, but did not. By incentive compatibility, we know the following about their transfer:
  \begin{align*}
    \TypedForecast{t}{i}{1} \TypedContract{t}{1}
    + \TypedForecast{t}{i}{2} \TypedContract{t}{2}
    &\stackrel{(IC)}{\ge} \TypedForecast{t}{i}{1} \TypedContract{\bar{t}}{1}
    + \TypedForecast{t}{i}{2} \TypedContract{\bar{t}}{2} \\
    &= \TypedContract{\bar{t}}{1}
    + \TypedForecast{t}{i}{2} \left(
      \TypedContract{\bar{t}}{2} - \TypedContract{\bar{t}}{1}
    \right) \\
    &\ge \TypedForecast{t}{i}{2} \tilde{x}
  \end{align*}
  Hence $\ContractVector$ does not increase their expected transfer. We argue about case (iii) separately since we rounded up, but it is covered by noting that any contract $\TypedContractVector{t}$ trivially transfers (weakly) more than our all-zeros linear contract $\ContractVector$.
  
  We now compare this hybrid state to the plain linear contract setting we actually want to compare with. What happens if we now allow agents to once again change their choice of action? We might be worried that a type $t \in [T]$ agent will decide to deviate from action $i \in [n]$ to action $i' \in [n]$. Suppose that the original contract $\TypedContractVector{t}$ did not have an excessive difference ($\TypedContract{t}{2} - \TypedContract{t}{1} \le \reward{2}$), and this deviation lowers the probability of outcome two ($\TypedForecast{t}{i'}{2} < \TypedForecast{t}{i}{2}$):
  \begin{align*}
    (0) \TypedForecast{t}{i'}{1} + (\tilde{x}) \TypedForecast{t}{i'}{2} - \cost{i'}
      &> (0) \TypedForecast{t}{i}{1} + (\tilde{x}) \TypedForecast{t}{i}{2} - \cost{i} \\
    \underbrace{(\tilde{x})}_{(\text{at least } \TypedContract{t}{2} - \TypedContract{t}{1} \text{ by assumption})}
      \underbrace{\left( \TypedForecast{t}{i'}{2} - \TypedForecast{t}{i}{2} \right)}_{(\text{less than } 0 \text{ by assumption})}
      &> \cost{i'} - \cost{i} \\
    \left(\TypedContract{t}{2} - \TypedContract{t}{1}\right)
      \left( \TypedForecast{t}{i'}{2} - \TypedForecast{t}{i}{2} \right)
      &> \cost{i'} - \cost{i} \\
    \left(\TypedContract{t}{2} - \TypedContract{t}{1}\right) \TypedForecast{t}{i'}{2} - \cost{i'}
      &> \left(\TypedContract{t}{2} - \TypedContract{t}{1}\right) \TypedForecast{t}{i}{2} - \cost{i} \\
    \TypedContract{t}{1} \TypedForecast{t}{i'}{1} + \TypedContract{t}{2} \TypedForecast{t}{i'}{2} - \cost{i'}
      &> \TypedContract{t}{1} \TypedForecast{t}{i}{1} + \TypedContract{t}{2} \TypedForecast{t}{i}{2} - \cost{i}
  \end{align*}
  We have concluded that this deviation to action $i'$ should have occurred under the original menu $\TypedContractVector{t}$, which contradicts its incentive compatibility. Hence, any type $t$ whose original contract did not have an excessive difference ($\TypedContract{t}{2} - \TypedContract{t}{1} \le \reward{2}$) can only deviate to (weakly) increase the probability of outcome two. Such deviations only (weakly) increase our profit, namely by $(\reward{2} - \tilde{x}) \left(\TypedForecast{t}{i'}{2} - \TypedForecast{t}{i}{2}\right)$. On the other hand, if type $t$'s original contract did have an excessive difference, i.e. $\TypedContract{t}{2} - \TypedContract{t}{1} \le \reward{2}$, then the menu could only lose money on this type. Since our linear contract never loses money on any type, we do at least as well. Hence across all types $t \in [T]$, this linear contract makes (weakly) more profit when the fixed-action restriction is lifted.
  
  Since we have already proved that the linear contract under the fixed-action restriction makes more profit than the menu did, we have shown that a linear contract makes as much profit as the optimal menu, as desired. This completes the proof.
\end{proof}
\section{Proof of the Information is Power Reduction}
\label{apx:info-is-power}

In this appendix, we give the proof of \Cref{thm:info-is-power}. We restate the theorem below for convenience.

\infoispowerthm*

\begin{proof}
  Our goal is to permit a type-aware contract to extract the full welfare as profit but not allow menus of contracts, single contracts, or linear contracts to extract much additional profit. The overall proof plan bears similarities to that of \Cref{lem:nonlinearity-disparity}, but we need to make an effort to prevent menus of contracts and single contracts from using our new outcomes. This is done via the introduction of a new type, which serves to severely punish any principal who tries to use a new outcome without being type-aware.
  
  Again we need a parameter $\epsilon \in (0, 1)$, which we choose later in the proof. It now represents how much probability mass we want to use for our two new outcomes (which previously was always one-half). We plan to scale down the probability of existing outcomes by $(1 - \epsilon)$ and scale up the rewards by $\frac{1}{1 - \epsilon}$. We again define $\TypedOptimalAction{t}$ to be the welfare-maximizing action for a type $t$ agent in the original principal-agent problem $\PrincipalAgentProblem$, breaking ties arbitrarily.
  
  Combining the above ideas, we formally define our new principal-agent problem $\PrincipalAgentProblemPrime$ as follows (see \Cref{tab:info-is-power} for a depiction of this construction in tabular form).
  \begin{alignat*}
    \forall i \in [n] \quad
    &&\Cost{i}' &\triangleq \Cost{i} \\
    \forall t \in [T], i \in [n], j \in [m+2] \quad
    &&\forecastprime{t}{i}{j} &\triangleq \begin{cases}
      (1 - \epsilon) \TypedForecast{t}{i}{j}               & \text{ if } j \in [m] \\
      \epsilon \indicator{i = \TypedOptimalAction{t}}      & \text{ if } j = m + 1 \\
      \epsilon \indicator{i \not = \TypedOptimalAction{t}} & \text{ if } j = m + 2
    \end{cases} \\
    \forall i \in [n], j \in [m+2] \quad
    &&\forecastprime{T+1}{i}{j} &\triangleq \begin{cases}
      0       & \text{ if } j \in [m] \\
      \frac12 & \text{ otherwise}
    \end{cases} \\
    \forall j \in [m+2] \quad
    &&\Reward{j}' &\triangleq \begin{cases}
      \frac{\Reward{j}}{1 - \epsilon} & \text{ if } j \in [m] \\
      0                               & \text{ if } j \in \{m+1, m+2\} \\
    \end{cases}
  \end{alignat*}
  
  \begin{table}
\centering
\begin{tabular}{cccccc}
  \toprule
  \multirow{2}{*}{Type $t \in [T]$} & \textbf{Outcome} $1$                        & \multirow{2}{*}{$\cdots$} & \textbf{Outcome} $m$                        & \textbf{Outcome} $m+1$ & \textbf{Outcome} $m+2$ \\
                                    & Reward $\tfrac{1}{1 - \epsilon} \reward{1}$ &                           & Reward $\tfrac{1}{1 - \epsilon} \reward{m}$ & Reward $0$             & Reward $0$             \\ \midrule
  \textbf{Action} $1$ & \multirow{2}{*}{$(1 - \epsilon) \TypedForecast{t}{1}{1}$} & \multirow{2}{*}{$\cdots$} & \multirow{2}{*}{$(1 - \epsilon) \TypedForecast{t}{1}{m}$} & \multirow{2}{*}{$0$} & \multirow{2}{*}{$\epsilon$} \\ Cost $\cost{1}$ \\
  \textbf{Action} $2$ & \multirow{2}{*}{$(1 - \epsilon) \TypedForecast{t}{2}{1}$} & \multirow{2}{*}{$\cdots$} & \multirow{2}{*}{$(1 - \epsilon) \TypedForecast{t}{2}{m}$} & \multirow{2}{*}{$0$} & \multirow{2}{*}{$\epsilon$} \\ Cost $\cost{2}$ \\
  $\vdots$ & $\vdots$ & $\ddots$ & $\vdots$ & $0$ & $\epsilon$ \\
  \textbf{Action} $\TypedOptimalAction{t}$ & \multirow{2}{*}{$(1 - \epsilon) \TypedForecast{t}{\TypedOptimalAction{t}}{1}$} & \multirow{2}{*}{$\cdots$} & \multirow{2}{*}{$(1 - \epsilon) \TypedForecast{t}{\TypedOptimalAction{t}}{m}$} & \multirow{2}{*}{$\epsilon$} & \multirow{2}{*}{$0$} \\ Cost $\cost{i^*(t)}$ \\
  $\vdots$ & $\vdots$ & $\ddots$ & $\vdots$ & $0$ & $\epsilon$ \\
  \textbf{Action} $n$ & \multirow{2}{*}{$(1 - \epsilon) \TypedForecast{t}{n}{1}$} & \multirow{2}{*}{$\cdots$} & \multirow{2}{*}{$(1 - \epsilon) \TypedForecast{t}{n}{m}$} & \multirow{2}{*}{$0$} & \multirow{2}{*}{$\epsilon$} \\ Cost $\cost{n}$ \\ \midrule
  \multirow{2}{*}{Type $t = T + 1$} & \textbf{Outcome} $1$                        & \multirow{2}{*}{$\cdots$} & \textbf{Outcome} $m$                        & \textbf{Outcome} $m+1$ & \textbf{Outcome} $m+2$ \\
                                    & Reward $\tfrac{1}{1 - \epsilon} \reward{1}$ &                           & Reward $\tfrac{1}{1 - \epsilon} \reward{m}$ & Reward $0$             & Reward $0$             \\ \midrule
  \textbf{Action} $1$ & \multirow{2}{*}{$0$} & \multirow{2}{*}{$\cdots$} & \multirow{2}{*}{$0$} & \multirow{2}{*}{$\tfrac12$} & \multirow{2}{*}{$\tfrac12$} \\ Cost $\cost{1}$ \\
  $\vdots$ & $\vdots$ & $\ddots$ & $\vdots$ & $\tfrac12$ & $\tfrac12$ \\
  \textbf{Action} $n$ & \multirow{2}{*}{$0$} & \multirow{2}{*}{$\cdots$} & \multirow{2}{*}{$0$} & \multirow{2}{*}{$\tfrac12$} & \multirow{2}{*}{$\tfrac12$} \\ Cost $\cost{n}$ \\
  \bottomrule
\end{tabular}
\caption{Resulting principal-agent problem $\PrincipalAgentProblemPrime$ from the information is power reduction. For each old type $t \in [T]$, we use outcome $m + 1$ as a way for the full information principal to extract the entire welfare as revenue. Outcome $m + 2$ serves the rebalance the total probability mass. The new type $t = T + 1$ serves to prevent uninformed principals from utilizing these additional outcomes.}
\label{tab:info-is-power}
\end{table}
  
  Note that we have correctly guaranteed that the new forecast tensor has valid probability distributions, because for any old type $t \in [T]$ and action $i \in [n]$, the relevant row sums to one:
  \begin{align*}
    \sum_{j=1}^{m+2} \forecastprime{t}{i}{j}
      = \underbrace{\left[ \sum_{j=1}^m (1 - \epsilon) \forecastprime{t}{i}{j} \right]}_{1 - \epsilon}
      + \underbrace{\left[ \forecastprime{t}{i}{m+1} + \forecastprime{t}{i}{m+2} \right]}_{\epsilon}
      = 1
  \end{align*}
  Of course for the new type $t = T + 1$, rows trivially sum to one.
  
  This completes our construction, and we are now ready to argue that a type-aware principal is able to completely capture the new welfare $\OWelfare \PrincipalAgentProblemPrime$. Consider the following strategy for a type-aware principal. If the agent has an old type $t \in [T]$, then the principal offers the contract that transfers $\left(\cost{\TypedOptimalAction{t}} / \epsilon \right)$ on outcome $(m + 1)$: $\TypedContractVector{t} = \left( \cost{\TypedOptimalAction{t}} / \epsilon \right) \StandardBasisVector{m+1}$. Under this contract, an agent that picks action $i$ is expected to be transferred $\cost{\TypedOptimalAction{t}}$ if $i = \TypedOptimalAction{t}$ and nothing otherwise. Since costs are nonnegative, this means that action $\TypedOptimalAction{t}$ is among the actions that maximize the utility of a type $t$ agent. Furthermore, single the principal managed to transfer the bare minimum, i.e. $\cost{\TypedOptimalAction{t}}$, they keep $\TypedExpectedReward{t}{\TypedOptimalAction{t}} - \cost{\TypedOptimalAction{t}}$. This is by definition the maximum welfare of a type $t$ agent. If the agent has the new type $t = T + 1$, then the principal has nothing to gain and hence offers the all-zeroes contract: $\TypedContractVector{t} = \vec{0}$. For every type, this strategy lets the principal capture all the welfare:
  \begin{align*}
    \OSingle \PrincipalAgentProblemPrime &= \OWelfare \PrincipalAgentProblemPrime
  \end{align*}
  
  By construction, the (expected) welfare has dropped by a factor of $\tfrac{T}{T+1}$, because we diluted the type pool with a zero-welfare type and in the base problem we scaled probabilities down but scaled up rewards to match. Hence we get one of our desired statements:
  \begin{align*}
    \OWelfare \PrincipalAgentProblemPrime &= \tfrac{T}{T+1} \OWelfare \PrincipalAgentProblem \\
    \OTypeAware \PrincipalAgentProblemPrime &= \tfrac{T}{T+1} \OWelfare \PrincipalAgentProblem
  \end{align*}
  
  It remains to argue that linear contracts, single contracts, and menus of contracts do not gain much profit. We begin with the easiest case, linear contracts. Since our two new outcomes have zero reward, they cannot be utilized by linear contracts. Hence, linear contracts are facing a rescaled version of the original problem, and the linear contract with transfer coefficient $\TransferCoefficient \in [0, 1]$ achieves the same profit in both $\PrincipalAgentProblemPrime$ as it did in $\PrincipalAgentProblem$, modulo the addition of a zero-welfare type to the pool of types.
  \begin{align*}
    \OLinear \PrincipalAgentProblemPrime &= \tfrac{T}{T+1} \OLinear \PrincipalAgentProblem
  \end{align*}
  
  We now argue about single contracts and menus of contracts. Our goal is to prevent these from utilizing our two new outcomes using the new type $(T + 1)$. Suppose that a single contract or any contract in a menu of contracts offers one of the following:
  \begin{align*}
    \contract{m+1} &\ge (2T) \OWelfare \PrincipalAgentProblem\text{ or} \\
    \contract{m+2} &\ge (2T) \OWelfare \PrincipalAgentProblem\text{.}
  \end{align*}
  Such a contract or menu of contracts incurs a cost of $(T) \OWelfare \PrincipalAgentProblem$ and makes nothing on the new type $(T + 1)$ and therefore cannot attain strictly positive expected profit over all types. Hence this would at best tie the single all-zeroes contract $\ContractVector = \vec{0}$. We have shown that without loss of generality, we can assume that the optimal single contract / all contracts in the optimal menu of contracts satisfy:
  \begin{align*}
    \contract{m+1} &\le (2T) \OWelfare \PrincipalAgentProblem\text{ and} \\
    \contract{m+2} &\le (2T) \OWelfare \PrincipalAgentProblem\text{.}
  \end{align*}
  
  Such a single contract / menu of contracts does not have much power to influence the agent via these two outcomes. Compared to a single contract / menu of contracts whose transfers are restricted to the old types $t \in [T]$, such a single contract / menu of contracts has at most an additive $\pm (2T) \epsilon \OWelfare \PrincipalAgentProblem$ effect on the agent's utility for any action. In other words, this is at most as powerful as working in the original principal-agent problem $\PrincipalAgentProblem$ and being able to perturb the costs by at most $\epsilon (2T) \OWelfare \PrincipalAgentProblem$ (modulo an additional zero-welfare type diluting the type pool). Our technical result \Cref{cor:cost-stable} lets us control the effect of this additive perturbation: there exists constants\footnote{Technically the proof-generated constants might be different for $\OSingle$ and $\OMenu$, but we can pick the larger error scaling term $s$ and the smaller perturbation threshold $\tau$.} $s \ge 0, \tau > 0$ such that as long as $\epsilon (2T) \OWelfare \PrincipalAgentProblem \le \tau$,
  \begin{align*}
    \tfrac{T+1}{T} \OSingle \PrincipalAgentProblemPrime &\le \OSingle \PrincipalAgentProblem + s \epsilon (2T) \OWelfare \PrincipalAgentProblem\text{ and} \\
    \tfrac{T+1}{T} \OMenu \PrincipalAgentProblem &\le \OMenu \PrincipalAgentProblem + s \epsilon (2T) \OWelfare \PrincipalAgentProblem\text{.}
  \end{align*}
  We now choose the parameter $\epsilon$ to (i) be less than $\tfrac12$, (ii) satisfy the error perturbation threshold, and (iii) result in at most $\zeta$ additive error:
  \begin{align*}
    \epsilon \triangleq \min \left\{\frac12, \frac{\min\{\tau, \zeta\}}{(2T) \OWelfare \PrincipalAgentProblem} \right\}\text{.}
  \end{align*}
  Plugging in this choice of $\epsilon$ implies:
  \begin{align*}
    \OSingle \PrincipalAgentProblemPrime &\le \tfrac{T}{T+1} \OSingle \PrincipalAgentProblem + \zeta \text{ and} \\
    \OMenu \PrincipalAgentProblemPrime &\le \tfrac{T}{T+1} \OMenu \PrincipalAgentProblem + \zeta \text{.}
  \end{align*}
  This completes the proof.
\end{proof}
\section{Cost-Stability}
\label{apx:stability}

In this appendix, we prove \Cref{cor:cost-stable}, one benchmark at a time. At a high level, the intuition is that we can set an appropriate perturbation threshold $\tau$ to prevent infeasible mappings of types to actions from becoming feasible, and we can set our error scaling term $s$ to account for the growth of the feasible regions of already-feasible mappings. We begin with linear contracts, for which we explicitly derive good choices for $\tau$ and $s$. We then proceed to the more complex single contracts and menus of contracts, for which our argument is similar but involves reasoning about linear programs. We finish with the simple type-aware and welfare cases.

\begin{lemma}\label{lem:linear-stable}
  The expected profit of the best linear contract, $\OLinear$, is cost-stable.
\end{lemma}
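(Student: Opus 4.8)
The plan is to exhibit explicit constants $s$ and $\tau$ directly from the problem data. Write $\Delta = \dist{\CostVector}{\CostVector'}{\infty}$ for the size of the perturbation, let $R_{\max} = \max_{t,i} \TypedExpectedReward{t}{i}$, and let $g > 0$ be the smallest positive gap between two distinct values of $\TypedExpectedReward{t}{i}$ (if no such gap exists, every agent always prefers a zero-cost action, the linear profit is insensitive to small cost changes, and the claim is trivial). Recall from the two observations at the start of \Cref{sec:linear} that a linear contract's profit depends on the forecasts and rewards only through the expected rewards $\TypedExpectedReward{t}{i}$: writing $\Profit_{\CostVector}(\alpha)$ as shorthand for the profit of transfer coefficient $\alpha$ under cost vector $\CostVector$, we have $\Profit_{\CostVector}(\alpha) = \E_t[(1 - \alpha)\TypedExpectedReward{t}{\TypedOptimalAction{t}(\alpha)}]$, and perturbing $\CostVector$ changes neither the $\TypedExpectedReward{t}{i}$ nor $R_{\max}$ nor $g$ --- it only changes which action each type selects.

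The tempting first move is to take the coefficient $\alpha'$ that is optimal for the perturbed instance $\CostVector'$ and reuse it on $\CostVector$. This fails, and is the main obstacle: a type that played a high-reward action $i'_t$ at $\alpha'$ under $\CostVector'$ may, under $\CostVector$, barely prefer a cheaper lower-reward action $i_t$, and the resulting loss $(1-\alpha')(\TypedExpectedReward{t}{i'_t} - \TypedExpectedReward{t}{i_t})$ is governed by the cost gap rather than by $\Delta$ (indeed it blows up like $1/\alpha'$ as $\alpha' \to 0$). The fix is to not reuse $\alpha'$ but a slightly larger coefficient, chosen so that the only surviving loss comes from the shrinking $(1-\alpha)$ factor, which \emph{is} proportional to $\Delta$.

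Concretely I would set $\alpha^{\mathrm{use}} = \alpha' + 3\Delta/g$ and prove the key claim: under the original costs $\CostVector$ at coefficient $\alpha^{\mathrm{use}}$, every type $t$ plays an action whose expected reward is at least $\TypedExpectedReward{t}{i'_t}$. The hard (but short) part is establishing this claim, which I would do by contradiction. Suppose type $t$ instead plays some $a$ with $\TypedExpectedReward{t}{a} < \TypedExpectedReward{t}{i'_t}$. Then optimality of $a$ at $\alpha^{\mathrm{use}}$ under $\CostVector$ gives $\alpha^{\mathrm{use}} \le (\Cost{i'_t} - \Cost{a})/(\TypedExpectedReward{t}{i'_t} - \TypedExpectedReward{t}{a})$, while optimality of $i'_t$ at $\alpha'$ under $\CostVector'$, together with the facts that each cost moves by at most $\Delta$ and $\TypedExpectedReward{t}{i'_t} - \TypedExpectedReward{t}{a} \ge g$, gives the reverse bound $(\Cost{i'_t} - \Cost{a})/(\TypedExpectedReward{t}{i'_t} - \TypedExpectedReward{t}{a}) \le \alpha' + 2\Delta/g$. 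Chaining these yields $\alpha^{\mathrm{use}} \le \alpha' + 2\Delta/g < \alpha^{\mathrm{use}}$, a contradiction. Note the strict inequality (from the extra $\Delta/g$ slack) means no tie-breaking assumption is needed.

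Granting the claim, I would finish by comparing profits. We have $\OLinear(\CostVector') = \Profit_{\CostVector'}(\alpha') = (1-\alpha')\E_t[\TypedExpectedReward{t}{i'_t}]$, whereas the claim gives $\OLinear(\CostVector) \ge \Profit_{\CostVector}(\alpha^{\mathrm{use}}) \ge (1 - \alpha^{\mathrm{use}})\E_t[\TypedExpectedReward{t}{i'_t}] = (1 - \alpha' - 3\Delta/g)\E_t[\TypedExpectedReward{t}{i'_t}]$. Subtracting and using $\E_t[\TypedExpectedReward{t}{i'_t}] \le R_{\max}$ gives $\OLinear(\CostVector') \le \OLinear(\CostVector) + (3R_{\max}/g)\Delta$, so $s = 3R_{\max}/g$ works. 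The one loose end is the boundary case $\alpha^{\mathrm{use}} > 1$: there $1 - \alpha' < 3\Delta/g$, hence $\OLinear(\CostVector') < (3R_{\max}/g)\Delta$ directly, and since $\OLinear(\CostVector) \ge 0$ (the principal can always offer $\alpha = 0$) the same bound holds. Because every step is valid for arbitrary $\Delta \ge 0$, any fixed $\tau > 0$ (e.g. $\tau = 1$) suffices, completing the proof.
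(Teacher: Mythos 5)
Your proof is correct, but it takes a genuinely different route from the paper's. The paper enumerates type-to-action mappings, splits them into a never-implementable class $I_1$ and an implementable class $I_2$, chooses the threshold $\tau$ so that $I_1$ mappings remain infeasible after perturbation, and bounds via a witness-action argument how far the minimal implementing coefficient of an $I_2$ mapping can drop, yielding an $s$ defined as a maximum over mappings. You never fix a mapping: you take the optimal perturbed coefficient $\alpha'$, shift it up to $\alpha^{\mathrm{use}} = \alpha' + 3\Delta/g$, and show by contradiction that under the original costs every type's chosen action has expected reward at least that of its perturbed choice, so only the $(1-\alpha)$ factor degrades. The underlying geometric fact is the same in both proofs --- indifference thresholds $(\Cost{j}-\Cost{i})/(\TypedExpectedReward{t}{j}-\TypedExpectedReward{t}{i})$ move by at most $2\Delta/g$ --- but your architecture buys several things: an explicit constant $s = 3R_{\max}/g$; no enumeration over the exponentially many mappings; no tie-breaking concerns (your extra $\Delta/g$ of slack makes the contradiction strict, so the claim holds for \emph{every} utility-maximizing action); and a bound valid for all $\Delta \ge 0$, so $\tau$ is arbitrary --- in particular you show the paper's Case-1 machinery guarding against newly feasible mappings is unnecessary for $\OLinear$, because you compare profits pointwise in reward space rather than mapping by mapping. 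What the paper's heavier structure buys is uniformity: the same feasibility/perturbed-LP template transfers essentially verbatim to $\OSingle$ and $\OMenu$ (Lemmas~\ref{lem:single-stable} and~\ref{lem:menu-stable}), where no one-dimensional coefficient-shift argument is available. Two cosmetic nits, neither a gap: your degenerate case is slightly misstated, since after perturbation there need not be a zero-cost action, but the conclusion stands because when all $\TypedExpectedReward{t}{i}$ coincide the profit $(1-\alpha)R$ is independent of both costs and the chosen action, so $\OLinear$ is literally invariant; and, like the paper, you implicitly assume the supremum defining $\OLinear(\CostVector', \ForecastTensor, \RewardVector)$ is attained at some $\alpha'$, which holds under the paper's principal-favoring tie-breaking (or can be sidestepped by taking $\alpha'$ to be $\eta$-optimal and letting $\eta \to 0$).
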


\begin{proof}
  We begin with some useful definitions for this proof. The error term $\epsilon$ defined to be the maximum amount any cost is perturbed: $\epsilon = \dist{c}{c'}{\infty}$. We use $u^{(t)}_i(\TransferCoefficient)$ to denote the utility of a type $t \in [T]$ agent choosing action $i \in [n]$ under a linear contract of transfer coefficient $\TransferCoefficient \in [0, 1]$ in the original principal-agent problem $\PrincipalAgentProblem$; we use $u'^{(t)}_i(\TransferCoefficient)$ to denote the same for the perturbed principal-agent problem $(\CostVector', \ForecastTensor, \RewardVector)$. We remove the $i$ subscripts to denote the max over all actions; $u^{(t)}(\TransferCoefficient)$ denotes the maximum utility our type $t$ agent can achieve under a linear contract of transfer coefficient $\TransferCoefficient \in [0, 1]$ in the original problem $\PrincipalAgentProblem$ and $u'^{(t)}(\TransferCoefficient)$ denotes the same for the perturbed problem $(\CostVector', \ForecastTensor, \RewardVector)$. Formally, we can write these four functions as:
  \begin{align*}
    u^{(t)}_i(\TransferCoefficient) &= \TransferCoefficient R^{(t)}_i - c_i \\
    u'^{(t)}_i(\TransferCoefficient) &= \TransferCoefficient R^{(t)}_i - c'_i = u_i(\TransferCoefficient) \pm \epsilon \\
    u^{(t)}(\TransferCoefficient) &= \max_i u^{(t)}_i(\TransferCoefficient) \\
    u'^{(t)}(\TransferCoefficient) &= \max_i u'^{(t)}_i (\TransferCoefficient) = u^{(t)}(\TransferCoefficient) \pm \epsilon
  \end{align*}
  
  In the original principal-agent problem $\PrincipalAgentProblem$, the principal could convince a type $t$ agent to choose action $i \in [n]$ if said action was utility-maximizing, i.e. $u^{(t)}_i(\TransferCoefficient) = u^{(t)}(\TransferCoefficient)$. Imagine that we fix a particular mapping from types to actions $\actionmapping$, signifying that we want type $t$ to choose action $i_t$. As we vary our choice of $\TransferCoefficient$, this mapping is either (i) can never be chosen or (ii) can be chosen when $\TransferCoefficient$ is in some closed subset of $[0, 1]$. We will denote the set of $\actionmapping$ that fall into the former class as $I_1$ and the set that fall into the latter class as $I_2$. The proof plan is that when the principal adjusts all costs (effectively adjusting all utilities) by $\epsilon$, the region of $\TransferCoefficient$ that produce a specific class (i) mapping remains empty while the region that produces a specific class (ii) mapping does not grow much larger.
  
  \textbf{Case 1.} The mapping $\optactionmapping \in I_1$ is currently infeasible for all $\TransferCoefficient \in [0, 1]$, and we need ensure that we preserve this property when we choose our perturbation threshold $\tau$. To do so, we examine the largest amount any action $i^*_t$ lags behind the best action: $\max_{t \in T} \left[ u^{(t)}(\TransferCoefficient) - u^{(t)}_{i^*_t}(\TransferCoefficient) \right]$; we know that for all $\TransferCoefficient \in [0, 1]$, this expression is strictly positive. We will choose our perturbation threshold $\tau$ to guarantee that the most lagging action cannot catch up:
  \begin{align*}
    \tau &\triangleq \frac13 \left[ \min_{\actionmapping \in I_1, \TransferCoefficient \in [0, 1]} \max_{t \in T} u^{(t)}(\TransferCoefficient) - u^{(t)}_{i_t}(\TransferCoefficient) \right] > 0
  \end{align*}
  
  We now need to prove that this choice was good enough to get our desired guarantee. Consider the perturbed problem $(\CostVector', \ForecastTensor, \RewardVector)$ where the analogous lag expression is $u'^{(t)}(\TransferCoefficient) - u'^{(t)}_{i^*_t}(\TransferCoefficient)$. Observe that for all $\TransferCoefficient \in [0, 1]$:
  \begin{align*}
    &\phantom{=} \max_{t \in T} \left[ u'^{(t)}(\TransferCoefficient) - u'^{(t)}_{i^*_t}(\TransferCoefficient) \right] \\
    &= \max_{t \in T} \left[ u^{(t)}(\TransferCoefficient) - u^{(t)}_{i^*_t}(\TransferCoefficient) \right] \pm 2 \epsilon \\
    &\ge \max_{t \in T} \left[ u^{(t)}(\TransferCoefficient) - u^{(t)}_{i^*_t}(\TransferCoefficient) \right] - 2 \tau \\
    &\ge \max_{t \in T} \left[ u^{(t)}(\TransferCoefficient) - u^{(t)}_{i^*_t}(\TransferCoefficient) \right]
         - \frac23 \left[ \min_{\actionmapping \in I_1, \tilde{\TransferCoefficient} \in [0, 1]}
                          \max_{t \in T} \left[ u^{(t)}(\tilde{\TransferCoefficient}) - u^{(t)}_{i_t}(\tilde{\TransferCoefficient}) \right] \right] \\
    &\ge \max_{t \in T} \left[ u^{(t)}(\TransferCoefficient) - u^{(t)}_{i^*_t}(\TransferCoefficient) \right] - \frac23 \max_{t \in T} \left[ u^{(t)}(\TransferCoefficient) - u^{(t)}_{i^*}(\TransferCoefficient) \right] \\
    &\ge \frac13 \max_{t \in T} \left[ u^{(t)}(\TransferCoefficient) - u^{(t)}_{i^*_t}(\TransferCoefficient) \right] \\
    &> 0
  \end{align*}
  Hence we have successfully guaranteed that the mapping $\optactionmapping \in I_1$ remains infeasible via our choice of perturbation threshold $\tau$.
  
  \textbf{Case 2.} The mapping $\optactionmapping \in I_2$ is currently feasible, and we want to control the growth of its feasible region. If we are indeed committed to forcing the mapping $\optactionmapping$, then we would always prefer a smaller sharing coefficient $\TransferCoefficient$ (since that maximizes resulting profit). In other words, we just need to control how much the smallest feasible $\TransferCoefficient$ decreases. Suppose that in the original problem $\PrincipalAgentProblem$ the smallest feasible value was $\TransferCoefficient^*$. Formally, we know that:
  \begin{align*}
    u^{(t)}(\TransferCoefficient^*) &= u^{(t)}_{i^*_t}(\TransferCoefficient^*) \\
    \forall \TransferCoefficient < \TransferCoefficient^* \quad
    u^{(t)}(\TransferCoefficient) &> u^{(t)}_{i^*_t}(\TransferCoefficient) \\
    \forall \TransferCoefficient < \TransferCoefficient^* \quad
    \exists i \in [n] \quad
    u^{(t)}_i(\TransferCoefficient) &> u^{(t)}_{i^*_t}(\TransferCoefficient)
  \end{align*}
  
  But our agent's utility for any action is linear in $\TransferCoefficient$. In other words, the functions $u^{(t)}_i(\TransferCoefficient)$ and $u^{(t)}_{i^*_t}(\TransferCoefficient)$ can only cross once, and some particular action $i \in [n]$ witnesses the optimality of $\TransferCoefficient^*$.
  \begin{align}
    \label{ineq:original}
    \exists i \in [n] \quad
    \forall \TransferCoefficient < \TransferCoefficient^* \quad
    u^{(t)}_i(\TransferCoefficient) &> u^{(t)}_{i^*_t}(\TransferCoefficient)
  \end{align}
  
  To determine how much the smallest feasible $\TransferCoefficient$ could decrease in the perturbed problem, we recall the need for our desired actions to not lag utility-wise (i.e. be tied for the best utility). In other words, we know that:
  \begin{align}
    \nonumber
    u'^{(t)}(\TransferCoefficient) &=
    u'^{(t)}_{i^*_t}(\TransferCoefficient) \\
    \nonumber
    \implies \forall i \in [n] \quad
    u'^{(t)}_i(\TransferCoefficient) &\le
    u'^{(t)}_{i^*_t}(\TransferCoefficient) \\
    \label{ineq:perturbed}
    \implies \forall i \in [n] \quad
    u^{(t)}_i(\TransferCoefficient) - 2 \epsilon &\le
    u^{(t)}_{i^*_t}(\TransferCoefficient)
  \end{align}
  Inequality~\ref{ineq:perturbed} will help us control how much lower $\alpha$ can be set in the perturbed problem, and hence how much additional profit the principal can extract there. We can instantiate it for any action $i \in [n]$, and we will choose the action defined by Inequality~\ref{ineq:original}.
  
  We want to deduce a small fact about our action $i$, so we split into two subcases, depending on the relationship between the expected principal rewards $R^{(t)}_{i^*_t}$ (expected reward when type $t$ plays desired action $i^*_t$) and $R^{(t)}_{i}$ (expected reward when type $t$ plays alternate action $i$).
  
  \textbf{Subcase 2(a).} Suppose our type-action pair satisfies $R^{(t)}_{i^*_t} \le R^{(t)}_i$, so we can deduce:
  \begin{align*}
    u^{(t)}_{i^*_t}(\TransferCoefficient^*)          &\ge u^{(t)}_i(\TransferCoefficient^*) & \\
    \TransferCoefficient^* R^{(t)}_{i^*_t} - c_{i^*} &\ge \TransferCoefficient^* R^{(t)}_i - c_i & \\
    \TransferCoefficient   R^{(t)}_{i^*_t} - c_{i^*} &\ge \TransferCoefficient   R^{(t)}_i - c_i & \forall \TransferCoefficient \le \TransferCoefficient^* \\
    u^{(t)}_{i^*_t}(\TransferCoefficient)            &\ge u^{(t)}_i(\TransferCoefficient) & \forall \TransferCoefficient \le \TransferCoefficient^*
  \end{align*}
  We have shown such an alternate action $i$ cannot be the action guaranteed by Inequality~\ref{ineq:original}. Hence that action must fall into the subcase below instead.
  
  \textbf{Subcase 2(b).} This is the only subcase left, so we have deduced that the action $i$ guaranteed by Inequality~\ref{ineq:original} satisfies $R^{(t)}_{i^*_t} > R^{(t)}_i$. Combining what we know yields:
  \begin{align*}
  \begin{array}{rll}
    u^{(t)}_{i^*_t}(\TransferCoefficient)
      &< u^{(t)}_i(\TransferCoefficient) & \forall \TransferCoefficient < \TransferCoefficient^* \\
    u^{(t)}_{i^*_t}(\TransferCoefficient)
      &< u^{(t)}_i(\TransferCoefficient) + \frac{2\epsilon}{R^{(t)}_{i^*_t} - R^{(t)}_i} \left(R^{(t)}_{i^*_t} - R^{(t)}_i\right) - 2\epsilon & \forall \TransferCoefficient < \TransferCoefficient^* \\
    u^{(t)}_{i^*_t}(\TransferCoefficient) - \frac{2\epsilon}{R^{(t)}_{i^*_t} - R^{(t)}_i} R^{(t)}_{i^*_t}
      &< u^{(t)}_i(\TransferCoefficient) - \frac{2\epsilon}{R^{(t)}_{i^*_t} - R^{(t)}_i} R^{(t)}_i - 2\epsilon & \forall \TransferCoefficient < \TransferCoefficient^* \\
    u^{(t)}_{i^*_t}\left( \TransferCoefficient - \frac{2\epsilon}{R^{(t)}_{i^*_t} - R^{(t)}_i} \right)
      &< u^{(t)}_i\left( \TransferCoefficient  - \frac{2\epsilon}{R^{(t)}_{i^*_t} - R^{(t)}_i} \right) - 2\epsilon & \forall \TransferCoefficient < \TransferCoefficient^* \\
    u^{(t)}_{i^*_t}(\TransferCoefficient)& < u^{(t)}_i(\TransferCoefficient) - 2 \epsilon & \forall \TransferCoefficient < \TransferCoefficient^* - \frac{2 \epsilon}{R^{(t)}_{i^*_t} - R^{(t)}_i}
  \end{array}
  \end{align*}
  
  We have proven that the smallest feasible $\TransferCoefficient$ may decrease by at most $\frac{2 \epsilon}{R^{(t)}_{i^*_t} - R^{(t)}_i}$ (note: there was also an unhandled case where $\TransferCoefficient^*$ was already zero, but that leads to this conclusion as well). This increases the potential profit of $\optactionmapping$ by at most $\frac{2 \epsilon}{R^{(t)}_{i^*_t} - R^{(t)}_i} \sum_{\tilde{t}=1}^T R^{(t)}_{i^*_t}$. We can cover this by choosing our error scaling term $s$ to be:
  \begin{align*}
    s &\triangleq \max_{\actionmapping \in I_2, t \in [T], i \in [n] . R^{(t)}_{i^*_t} > R^{(t)}_i} \frac{2 \sum_{\tilde{t}} R^{(\tilde{t})}_{i^*_t}}{R^{(t)}_{i^*_t} - R^{(t)}_i}
  \end{align*}
  (or zero, if for all Case 2 mappings $\TransferCoefficient^* = 0$). This completes the proof.
\end{proof}

We are now ready to take a look at the more complicated case of single contracts.

\begin{lemma}\label{lem:single-stable}
  The expected profit of the best single contract, $\OSingle$, is cost-stable.
\end{lemma}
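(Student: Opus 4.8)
The plan is to mirror the structure of the proof of Lemma~\ref{lem:linear-stable}, replacing the one-dimensional reasoning over the transfer coefficient $\TransferCoefficient \in [0,1]$ with reasoning over the polytope of single contracts $\ContractVector \in \R^m$, $\ContractVector \ge 0$. Fix the principal-agent problem $\PrincipalAgentProblem$ and write $\epsilon \triangleq \dist{\CostVector}{\CostVector'}{\infty}$. For each mapping $\actionmapping \in [n]^T$ from types to actions, let $P_{\vec i}$ be the set of single contracts inducing it, i.e. the contracts $\ContractVector \ge 0$ satisfying the incentive constraints of the form~\ref{const:fair} from Lemma~\ref{lem:algorithm}:
\[
  \sum_{j} \left(\TypedForecast{t}{i_t}{j} - \TypedForecast{t}{p}{j}\right)\Contract{j} \ge \Cost{i_t} - \Cost{p} \qquad \forall t \in [T],\ p \ne i_t.
\]
On $P_{\vec i}$ the principal's profit is the linear function $\E_t\left[\sum_j \TypedForecast{t}{i_t}{j}(\reward{j} - \Contract{j})\right]$, so the best single contract inducing $\vec i$ is the value of a linear program, and $\OSingle \PrincipalAgentProblem$ is the maximum of these values over all feasible mappings. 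The crucial structural observation is that the costs $\CostVector$ appear only in the right-hand sides of the constraints defining $P_{\vec i}$: neither the constraint matrix (which depends only on $\ForecastTensor$) nor the profit objective (which depends only on $\ForecastTensor$ and $\RewardVector$) involves $\CostVector$. Perturbing costs to $\CostVector'$ therefore changes each right-hand side $\Cost{i_t} - \Cost{p}$ by at most $2\epsilon$ and leaves everything else fixed.

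As in Lemma~\ref{lem:linear-stable}, I would partition the mappings into the set $I_1$ of those with empty $P_{\vec i}$ (infeasible) and the set $I_2$ of those feasible in the original problem, and choose $\tau$ and $s$ to handle each. For $I_1$, I would pick $\tau$ small enough that these mappings remain infeasible after perturbation. Since each such system of linear inequalities is infeasible, a Farkas certificate yields a strictly positive ``infeasibility margin'' below which relaxing every right-hand side cannot restore feasibility; as there are at most $n^T$ mappings, setting $\tau$ to a constant fraction of the smallest of these finitely many margins guarantees that every mapping in $I_1$ stays infeasible for all perturbations with $\epsilon \le \tau$. Consequently only mappings in $I_2$ can induce a contract in the perturbed problem.

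For $I_2$, I would bound the growth of each LP value by LP duality. Writing the profit LP for $\vec i$ as $\max\{c^\top \ContractVector : A\ContractVector \le b,\ \ContractVector \ge 0\}$, its dual is $\min\{b^\top y : A^\top y \ge c,\ y \ge 0\}$; since $A$ and $c$ do not depend on costs, an optimal dual solution $y^*$ of the original problem remains dual-feasible after perturbation. Hence, whenever the perturbed primal is feasible, weak duality gives a perturbed value at most $(b')^\top y^* \le b^\top y^* + \|b' - b\|_\infty \|y^*\|_1 \le \text{OPT}_{\vec i} + 2\epsilon\|y^*\|_1$, using $b^\top y^* = \text{OPT}_{\vec i}$ by strong duality on the original LP (which is feasible and bounded, as profit never exceeds welfare). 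Setting $s \triangleq \max_{\vec i \in I_2} 2\|y^*_{\vec i}\|_1$, a maximum over finitely many mappings, yields for every $\vec i \in I_2$ a perturbed value at most $\OSingle \PrincipalAgentProblem + s\epsilon$. Combining with the $I_1$ argument gives $\OSingle(\CostVector',\ForecastTensor,\RewardVector) \le \OSingle \PrincipalAgentProblem + s\,\dist{\CostVector}{\CostVector'}{\infty}$, as required.

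The main obstacle is the $I_1$ step. Unlike the interval case of Lemma~\ref{lem:linear-stable}, where a direct ``lag'' computation pins down the threshold, here I must argue in the LP setting that infeasibility of a type-to-action assignment is robust to small right-hand-side relaxations, which requires invoking a Farkas/duality certificate and taking care that the finitely many margins are genuinely bounded away from zero. The $I_2$ growth bound, by contrast, is clean precisely because costs enter only the right-hand side, so the dual feasible region is invariant under the perturbation and the original dual optimum can be reused verbatim.
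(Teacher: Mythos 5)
Your proposal matches the paper's proof in all essentials: the same partition of type-to-action mappings into infeasible ($I_1$) and feasible ($I_2$) sets, the same choice of $\tau$ via a minimal-relaxation margin (your Farkas certificate is exactly what the paper's auxiliary program $LP'_{\actionmapping}$ computes), and the same choice of $s$ by reusing the unperturbed optimal dual solution, which stays feasible because costs enter only one side of the duality. The only cosmetic differences are that the paper works with the minimum-transfer LP (equivalent to your profit-maximization LP once the mapping fixes the expected reward) and cites a bit-complexity bound on the dual optimum where you simply take a finite maximum of dual norms, which suffices since the cost-stability constants may depend on the instance.
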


\begin{proof}
  As in the proof of Lemma~\ref{lem:linear-stable}, we define the error term $\epsilon = \dist{c}{c'}{\infty}$. This proof proceeds in a similar fashion, except that our arguments are now based on linear programs and we are less precise in choosing $s$ and $\tau$.
  
  In the original principal-agent problem $\PrincipalAgentProblem$, the principal could get the mapping $\actionmapping$ with minimum expected transfers by solving the following linear program:
  \begin{align*}
  LP_{\actionmapping} \left\{
    \begin{array}{rll}
                  \min_x & \sum_{t=1}^T \sum_{j=1}^m F^{(t)}_{i_t j} x_j \\
      \\
      \text{subject to } & \sum_{j=1}^m \left[ F^{(t)}_{i_tj} - F^{(t)}_{ij} \right] x_j \ge c_{i_t} - c_i & \forall t \in [T], i \in [n] \\
                         & x_j \ge 0                                                                       & \forall j \in [m]
    \end{array}
  \right.
  \end{align*}
  Let $I_1$ be the set of mappings $\actionmapping$ whose linear program $LP_{\actionmapping}$ is infeasible; $I_2$, feasible.
  
  \textbf{Case 1.} The mapping $\optactionmapping$ currently has an infeasible LP and we want to maintain this status quo. We will choose our perturbation threshold $\tau$ to keep it so. We can determine how far this mapping is from feasible (again, we pretend the principal can pick a different perturbation for each type, giving it more power) by writing another linear program:
  \begin{align*}
  LP'_{\actionmapping} \left\{
    \begin{array}{rll}
      \min_{x, \epsilon} & \epsilon \\ 
      \\
      \text{subject to } & \sum_{j=1}^m \left[ F^{(t)}_{i_tj} - F^{(t)}_{ij} \right] x_j + 2 \epsilon \ge c_{i_t} - c_i & \forall t \in [T], i \in [n] \\
                         & x_j \ge 0                                                                                    & \forall j \in [m]
    \end{array}
  \right.
  \end{align*}
  Since $LP_{\optactionmapping}$ was infeasible, we know that $LP'_{\optactionmapping}$ has a strictly positive optimal value (the latter is feasible because very large perturbations e.g. larger than all costs suffice). We hence can choose our perturbation threshold $\tau$ to be:
  \begin{align*}
    \tau \triangleq \frac12 \min_{\actionmapping \in I_1} \text{value}(LP'_{\actionmapping})
  \end{align*}
  
  \textbf{Case 2.} The mapping $\optactionmapping$ currently has a feasible LP and we want to control how much additional profit the principal can extract from it. Consider the following dual to $LP_{\actionmapping}$:
  \begin{align*}
  D_{\actionmapping} \left\{
    \begin{array}{rll}
                  \max_y & \sum_{t=1}^T \sum_{i=1}^n (c_{i_t} - c_i) y^{(t)}_i \\
      \\
      \text{subject to } & \sum_{t=1}^T \sum_{i=1}^n \left[ F^{(t)}_{i_tj} - F^{(t)}_{ij} \right] y_i \le \sum_{t=1}^T F^{(t)}_{i_tj} & \forall j \in [m] \\
                         & y^{(t)}_i \ge 0                                                                                            & \forall t \in [T], i \in [n]
    \end{array}
  \right.
  \end{align*}
  
  By strong LP-duality, we know that $LP_{\optactionmapping}$ and $D_{\optactionmapping}$ have the same optimal value. Additionally, due to \cite{grotschel2012geometric} there exists a dual point $y^*$ which achieves this optimal value whose $\ell_\infty$ norm is exponential in the bit-complexity of $D_{\optactionmapping}$. Now, suppose that the costs $c$ get perturbed by $\epsilon$ into $c'$. Our previous dual solution $y^*$ is still feasible, because only the dual objective gets changed. In fact, its dual objective value drops by at most the $\ell_\infty$ norm of $y^*$. Hence even under perturbed costs $c'$ the optimal value of the dual (and primal, by strong LP-duality) also drops by that amount. Profit goes up by at most the same amount, and hence it suffices to choose our error scaling term $s$ to be:
  \begin{align*}
    s \triangleq \max_{\actionmapping \in I_2} 2^{1 + O(\text{bit-complexity}(D_{\actionmapping})}
  \end{align*}
  
  This completes the proof.
\end{proof}

The proof for menus of contracts just involves more complicated linear programs.

\begin{lemma}\label{lem:menu-stable}
  The expected profit of the best menu of contracts, $\OMenu$, is cost-stable.
\end{lemma}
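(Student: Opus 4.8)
The plan is to mirror the proof of Lemma~\ref{lem:single-stable} almost verbatim, with the one genuine complication that a menu must satisfy incentive-compatibility constraints, each of which hides a maximum over actions. As before, set $\epsilon \triangleq \dist{\CostVector}{\CostVector'}{\infty}$. The key step is to replace the type-to-action mapping $\actionmapping$ by a richer combinatorial object that pins down enough of the agents' behavior to linearize everything. Concretely, I would enumerate over all \emph{behavior profiles} $\phi : [T] \times [T] \to [n]$, where $\phi(t, t') = i_{t \to t'}$ records the action a type-$t$ agent would play if handed the contract $\TypedContractVector{t'}$ intended for type $t'$; write $i_t \triangleq \phi(t,t)$ for the action type $t$ plays under its own contract. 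There are only finitely many such profiles ($n^{T^2}$ of them), which is all we need since the cost-stability constants $s, \tau$ are permitted to depend on the instance.

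Given a profile $\phi$, the minimum expected transfer realizing it is the value of a linear program $LP_\phi$ in the variables $\{\TypedContract{t}{j}\}$: minimize $\sum_{t=1}^T \sum_{j=1}^m \TypedForecast{t}{i_t}{j} \TypedContract{t}{j}$ subject to (i) best-response constraints forcing $i_{t\to t'}$ to be optimal for type $t$ under contract $\TypedContractVector{t'}$ for every ordered pair $(t,t')$, (ii) the incentive-compatibility constraints $\sum_j \TypedForecast{t}{i_t}{j} \TypedContract{t}{j} - \Cost{i_t} \ge \sum_j \TypedForecast{t}{i_{t\to t'}}{j} \TypedContract{t'}{j} - \Cost{i_{t\to t'}}$ for all $t' \ne t$, and (iii) nonnegativity. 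Every constraint is linear in the contract variables, and---crucially, exactly as in Lemma~\ref{lem:single-stable}---the costs appear \emph{only} on the right-hand sides, each as a difference of two of the $\Cost{i}$. The profit from $\phi$ equals the cost-independent expected reward $\E_t[\TypedExpectedReward{t}{i_t}]$ minus $\mathrm{value}(LP_\phi)$, so bounding how much $\mathrm{value}(LP_\phi)$ can drop under perturbation bounds how much profit can rise.

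From here the two-case structure of Lemma~\ref{lem:single-stable} carries over without change. Partition profiles into $I_1$ (those with $LP_\phi$ infeasible) and $I_2$ (feasible). For Case~1, since each perturbed cost moves by at most $\epsilon$, each right-hand side moves by at most $2\epsilon$; I would write the auxiliary program $LP'_\phi$ that minimizes the slack $\epsilon$ needed to make $\phi$ feasible and set $\tau \triangleq \tfrac12 \min_{\phi \in I_1} \mathrm{value}(LP'_\phi) > 0$, so that no infeasible profile becomes feasible once $\dist{\CostVector}{\CostVector'}{\infty} \le \tau$. For Case~2, I would pass to the dual $D_\phi$; because costs sit only in the primal right-hand sides, they appear only in the dual objective and the dual feasible region is cost-independent. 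Hence a dual optimum $y^*$ for the original costs remains feasible after perturbation, and by strong duality $\mathrm{value}(LP_\phi)$ changes by at most $2\epsilon$ times the $\ell_\infty$-norm of $y^*$; invoking the polynomial bit-complexity bound on basic feasible solutions from~\cite{grotschel2012geometric}, this is at most $2^{1 + O(\mathrm{bit\text{-}complexity}(D_\phi))}\epsilon$, and taking $s \triangleq \max_{\phi \in I_2} 2^{1 + O(\mathrm{bit\text{-}complexity}(D_\phi))}$ completes the argument.

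I expect the main obstacle to be the LP formulation itself, specifically verifying that constraints (i)--(ii) faithfully encode incentive compatibility: the genuine IC condition quantifies over a maximum over deviating actions, and I need the enumeration of the $i_{t\to t'}$ together with the best-response constraints to guarantee that the feasible region of $LP_\phi$ is exactly the set of menus inducing the behavior $\phi$---no more and no less---so that ranging over all $\phi$ recovers precisely $\OMenu$. Once that correspondence is nailed down, the infeasibility-margin estimate governing $\tau$ and the dual-norm estimate governing $s$ are identical to the single-contract case.
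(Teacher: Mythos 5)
Your proposal is correct and follows the same skeleton as the paper's proof: reduce to the two-case argument of Lemma~\ref{lem:single-stable}, with a finite family of linear programs in which the costs appear only on the right-hand sides as differences $c_{i_t} - c_i$, take $\tau$ from the minimum infeasibility margin via an auxiliary slack LP, and take $s$ from a bit-complexity bound on an optimal dual solution, which remains dual-feasible under perturbation because costs enter only the dual objective. The one place you diverge is the combinatorial object you enumerate: you fix a full behavior profile $\phi : [T] \times [T] \to [n]$ recording each type's best response to every foreign contract ($n^{T^2}$ profiles, with explicit best-response constraints), whereas the paper keeps the $n^T$ enumeration over own-action mappings $\actionmapping$ and encodes incentive compatibility in a single LP by writing the constraint $\sum_{j=1}^m \bigl[ F^{(t)}_{i_t j} x^{(t)}_j - F^{(t)}_{ij} x^{(t')}_j \bigr] \ge c_{i_t} - c_i$ for \emph{all} triples $t, t' \in [T]$, $i \in [n]$. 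Quantifying over all $i$ directly dominates the inner maximum in the IC condition, so the paper never needs to name the misreporting agent's best response at all; this also dissolves the obstacle you flag at the end, since you do not need the feasible region of $LP_\phi$ to be \emph{exactly} the menus inducing $\phi$ --- it suffices that every IC menu is feasible for the profile given by its (tie-broken) argmaxes and that every feasible point of $LP_\phi$ is an IC menu inducing at least the profile's profit, so the maximum over profiles recovers $\OMenu$. Your formulation is sound (the best-response constraints also keep costs on the right-hand side, so both the $\tau$ and $s$ estimates go through verbatim), just larger than necessary; since both you and the paper absorb dual-norm factors into an exponential bit-complexity bound, the extra constraints cost nothing in the final statement.
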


\begin{proof}
  The proof is identical to that of Lemma~\ref{lem:single-stable}, but with the alternate linear programs:
  \begin{align*}
  LP_{\actionmapping} &\left\{
    \begin{array}{rll}
                  \min_x & \sum_{t=1}^T \sum_{j=1}^m F^{(t)}_{i_t j} x^{(t)}_j \\
      \\
      \text{subject to } & \sum_{j=1}^m \left[ F^{(t)}_{i_tj} x^{(t)}_j - F^{(t)}_{ij} x^{(t')}_j \right] \ge c_{i_t} - c_i & \forall t,t' \in [T], i \in [n] \\
                         & x^{(t)}_j \ge 0                                                                                  & \forall t \in [T], j \in [m]
    \end{array}
  \right. \\ \cline{1-2}
  LP'_{\actionmapping} &\left\{
    \begin{array}{rll}
      \min_{x, \epsilon} & \epsilon \\
      \\
      \text{subject to } & \sum_{j=1}^m \left[ F^{(t)}_{i_tj} x^{(t)}_j - F^{(t)}_{ij} x^{(t')}_j \right] + 2 \epsilon \ge c_{i_t} - c_i & \forall t,t' \in [T], i \in [n] \\
                         & x^{(t)}_j \ge 0                                                                                               & \forall t \in [T], j \in [m]
    \end{array}
  \right. \\ \cline{1-2}
  D_{\actionmapping} &\left\{
    \begin{array}{rll}
                  \max_y & \sum_{t=1}^T \sum_{t'=1}^T \sum_{i=1}^n (c_{i_t} - c_i) y^{(t, t')}_i \\
      \\
      \text{subject to } & \sum_{t'=1}^T \sum_{i=1}^n \left[ F^{(t)}_{i_tj} y^{(t, t')}_i - F^{(t)}_{ij} y^{(t', t)}_i \right] \le \sum_{t=1}^T F^{(t)}_{i_tj} & \forall t \in [T], j \in [m] \\
                         & y^{(t, t')}_i \ge 0                                                                                                  & \forall t, t' \in [T], i \in [n]
    \end{array}
  \right.
  \end{align*}
  
  The same choices (plugging in these new linear programs, of course) of perturbation threshold $\tau$ and error scaling term $s$ give the desired properties. This completes the proof.
\end{proof}

\begin{corollary}
\label{cor:type-aware-stable}
  The expected profit of the best contract per type, $\OTypeAware$, is cost-stable.
\end{corollary}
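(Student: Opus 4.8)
The plan is to reduce this corollary directly to the cost-stability of single contracts already established in Lemma~\ref{lem:single-stable}. The key observation is that, unlike menus, the type-aware benchmark decomposes completely across types: a type-aware principal observes the agent's type and faces no incentive-compatibility coupling between the contracts offered to different types, so the outer maximization over contract tuples in the definition of $\OTypeAware$ commutes with the expectation over types. Formally, I would first record the identity
$$\OTypeAware\PrincipalAgentProblem = \E_t\left[\max_{\ContractVector} \TypedProfit{t}(\CostVector, \ForecastTensor, \RewardVector, \ContractVector)\right],$$
and note that the inner quantity $\max_{\ContractVector} \TypedProfit{t}(\CostVector, \ForecastTensor, \RewardVector, \ContractVector)$ is precisely the $\OSingle$ benchmark of the single-type principal-agent problem $(\CostVector, \TypedForecastMatrix{t}, \RewardVector)$ consisting of type $t$ alone.

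Given this decomposition, I would apply Lemma~\ref{lem:single-stable} separately to each of the $T$ single-type subproblems. This yields, for each type $t \in [T]$, cost-stability constants $s_t \ge 0$ and $\tau_t > 0$ such that whenever $\dist{\CostVector}{\CostVector'}{\infty} \le \tau_t$, the per-type single-contract profit increases by at most $s_t\,\dist{\CostVector}{\CostVector'}{\infty}$. Since cost-stability is preserved under convex combinations, I would then set the global perturbation threshold to $\tau \triangleq \min_t \tau_t$, which is strictly positive because there are only finitely many types, and the global error scaling term to $s \triangleq \max_t s_t$. Averaging the $T$ per-type inequalities over the (uniform) distribution on types gives exactly the desired bound $\OTypeAware(\CostVector', \ForecastTensor, \RewardVector) \le \OTypeAware\PrincipalAgentProblem + s\,\dist{\CostVector}{\CostVector'}{\infty}$ whenever $\dist{\CostVector}{\CostVector'}{\infty} \le \tau$.

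I expect no substantive obstacle here: the entire content of the corollary is inherited from the single-contract case, and the only things to verify are the decomposition (immediate from the absence of IC constraints in the type-aware setting) and the elementary fact that a finite average of cost-stable benchmarks is itself cost-stable, with scaling constant the maximum (or expectation) of the per-type constants and threshold the minimum of the per-type thresholds. The one point that warrants a sentence of care is that this decomposition is exactly what fails for $\OMenu$, where the coupling through incentive compatibility is what forced the more elaborate linear-programming argument of Lemma~\ref{lem:menu-stable}; here that difficulty simply does not arise.
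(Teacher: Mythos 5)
Your proposal is correct and matches the paper's own proof essentially verbatim: the paper likewise decomposes $\OTypeAware$ by type, applies \Cref{lem:single-stable} to each single-type subproblem, and takes the perturbation threshold $\tau$ to be the minimum and the error scaling $s$ to be the maximum over types. Your additional remarks (the decomposition identity, why averaging preserves the bound, and why this fails for $\OMenu$) are just fuller justification of the same argument.
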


\begin{proof}
  Given any principal-agent problem $\PrincipalAgentProblem$, we can just break it down by type and apply \Cref{lem:single-stable} to each type. We choose our perturbation threshold to be the minimum of the resulting perturbation thresholds, and our error scaling $s$ to be the maximum of the resulting error scalings.
\end{proof}

\begin{lemma}
\label{lem:welfare-stable}
  The maximum expected welfare, $\OWelfare$, is cost-stable.
\end{lemma}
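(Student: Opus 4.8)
The plan is to observe that welfare is the simplest of the five benchmarks with respect to cost-stability, because unlike the single-contract, menu, and type-aware cases there are no feasibility regions to control: the welfare depends on the costs only through the additive term $-\Cost{i}$, and the expected rewards $\TypedExpectedReward{t}{i}$ are entirely independent of $\CostVector$. I would therefore claim that $\OWelfare$ is in fact $1$-Lipschitz in the costs under the $\ell_\infty$ norm, so that the constants $s = 1$ and any $\tau > 0$ (say $\tau = 1$) witness cost-stability.

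Concretely, let $\epsilon \triangleq \dist{\CostVector}{\CostVector'}{\infty}$, so that $\Cost{i}' \ge \Cost{i} - \epsilon$ for every action $i \in [n]$. Fix any type $t \in [T]$. Since the expected rewards are unaffected by the perturbation, for each action $i$ we have $\TypedExpectedReward{t}{i} - \Cost{i}' \le \TypedExpectedReward{t}{i} - \Cost{i} + \epsilon$. Taking the maximum over $i$ on both sides preserves the inequality, yielding
$$\max_i \left( \TypedExpectedReward{t}{i} - \Cost{i}' \right) \le \max_i \left( \TypedExpectedReward{t}{i} - \Cost{i} \right) + \epsilon.$$

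Finally I would take the expectation over the (uniform) distribution of types. Since the bound holds termwise for every $t$, and expectation is monotone, we obtain
$$\OWelfare(\CostVector', \ForecastTensor, \RewardVector) = \E_t \left[ \max_i \left( \TypedExpectedReward{t}{i} - \Cost{i}' \right) \right] \le \E_t \left[ \max_i \left( \TypedExpectedReward{t}{i} - \Cost{i} \right) \right] + \epsilon = \OWelfare \PrincipalAgentProblem + \epsilon.$$

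This is exactly the cost-stability inequality with $s = 1$ and $\dist{\CostVector}{\CostVector'}{\infty} = \epsilon$. There is no genuine obstacle here: the only point worth emphasizing is that, in contrast to the $\OSingle$ and $\OMenu$ arguments, we never need the perturbation threshold $\tau$ to keep any infeasible type-to-action mapping infeasible, so $\tau$ may be chosen arbitrarily and the error scaling is simply $s = 1$. The entire difficulty present for the contract-based benchmarks disappears because the welfare benchmark effectively lets the principal observe both the type and the action, so the only role of the costs is as a linear penalty.
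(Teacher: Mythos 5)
Your proposal is correct and is essentially the paper's own argument: the paper's proof of \Cref{lem:welfare-stable} likewise observes that an additive cost perturbation of $\epsilon$ changes $\OWelfare$ by at most $\epsilon$ and picks $\tau = 1$, $s = 1$. You have merely filled in the one-line Lipschitz calculation that the paper leaves implicit, which is a fine (and slightly more careful) rendering of the same idea.
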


\begin{proof}
  From the definition of $\OWelfare$, it is clear that changing all costs by an additive $\epsilon$ results in $\OWelfare$ increasing by at most $\epsilon$ as well. Hence we can arbitrarily pick our perturbation threshold $\tau \triangleq 1$ and we should pick our error scaling to be $s \triangleq 1$.
\end{proof}

Combining \Cref{lem:linear-stable}, \Cref{lem:single-stable}, \Cref{lem:menu-stable}, \Cref{cor:type-aware-stable}, and \Cref{lem:welfare-stable} gives us the goal of this section, \Cref{cor:cost-stable}, which is restated below for convenience.

\coststablecorollary*
\section{Gap Instance separating Optimal Menu and Optimal Single Contract}
\label{apx:gap3v4}

\begin{lemma}
\label{lem:gap3v4}
We show that there exists a family of instances with $n=3$ actions, where the ratio of the $\OMenu \text{ vs }\OSingle$  is at least $2 - \epsilon $ for any constant $\epsilon>0$. 
\end{lemma}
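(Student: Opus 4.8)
The plan is to exhibit an explicit family of principal-agent problems, each with $n = 3$ actions (a zero-cost null action plus two costly actions) and a constant number of outcomes, in which an incentive-compatible menu can price-discriminate between the types while any single contract is forced into a compromise costing it nearly a factor of two. I would use a small constant number of types (I would first attempt two, with equal probability) and engineer the forecasts so that the two types have \emph{conflicting} productive actions: there is a distinguished outcome on which type $1$ needs a large transfer to be induced to its welfare-maximizing action, whereas a large transfer on that same outcome would instead lure type $2$ away from its own welfare-maximizing action. The goal of the construction is that the set of single contracts that profitably serve type $1$ and the set that profitably serve type $2$ are (nearly) disjoint, whereas a menu may hand each type a separate transfer vector and rely on incentive compatibility to keep each type on its intended contract.

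With the instance in hand, I would compute the benchmarks in two stages. First, to lower-bound $\OMenu$, I would write down an explicit menu $\ContractMatrix = (\TypedContractVector{1}, \TypedContractVector{2})$ that transfers to each type (essentially) the minimal amount needed to induce its welfare-maximizing action, verify the incentive-compatibility constraint~\eqref{ineq:ic} directly (checking that neither type strictly prefers the other's contract, invoking the principal-favorable tie-breaking where necessary), and read off that this menu extracts almost the full welfare, so $\OMenu \ge (1 - o(1))\,\OWelfare\PrincipalAgentProblem$. Second, to upper-bound $\OSingle$, I would fix an arbitrary single contract $\ContractVector$, reduce to $\Contract{1} = 0$ on the null outcome without loss of generality, and partition contract space according to which action each type best-responds with as a function of the transfers; in each region I would bound the principal's profit and argue that no region simultaneously induces both types to take profitable actions. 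This yields $\OSingle \le (\tfrac12 + o(1))\,\OWelfare\PrincipalAgentProblem$, and combining the two bounds gives the claimed ratio $2 - \epsilon$ in the limit.

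The main obstacle is the $\OSingle$ upper bound, since it is a maximization over the continuum of transfer vectors $\ContractVector \in \R_{\ge 0}^m$ and one must rule out \emph{every} single contract, not merely the natural candidates. Concretely, the difficulty is that the very conflict I want to exploit against a single contract — a shared outcome whose transfer level demanded by type $1$ is exactly the level that corrupts type $2$'s best response — is in tension with the menu's incentive-compatibility constraints, because the same cross-incentive also threatens to make one type want to deviate to the other type's menu contract. The delicate part of the construction is therefore choosing the forecasts and costs (and, if needed, one or two auxiliary outcomes used only to tune the agents' off-path utilities) so that the menu threads the needle, each type weakly preferring its own contract, while the single contract provably cannot serve both. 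I expect the cleanest way to certify the $\OSingle$ bound is to phrase ``induce action profile $(a_1, a_2)$'' as the small linear feasibility system in $\ContractVector$ appearing in the proof of \Cref{lem:algorithm}, show that the system demanding a profitable action from \emph{both} types is infeasible, and then dispatch the finitely many single-type profiles by direct optimization.
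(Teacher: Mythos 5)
Your high-level shape matches the paper's (exhibit a family in which a menu extracts the full welfare while every single contract is capped near half of it, so the ratio tends to $2$), but the core of your plan --- two types, a constant number of outcomes, and a pairwise conflict on one shared outcome --- is exactly the part you leave unresolved, and there is a concrete attack suggesting it cannot work as stated. Whatever compensation the menu pays type $1$ to satisfy \eqref{ineq:ic} against the temptation of type $2$'s contract, a single contract can pay that same compensation directly: take the outcome-wise union of the two menu contracts and raise the transfer on type $1$'s productive outcome until its intended action again dominates. With only two types whose actions touch a constant number of outcomes, the cross-temptation a type faces under this union contract is essentially the same temptation the menu already had to buy off, so the union contract's total expected transfers track the menu's, and $\OSingle$ stays close to $\OMenu$ rather than a factor $2$ below it. Your ``delicate part'' paragraph flags this tension but supplies no mechanism that breaks it; note also that with $n$, $m$, and $T$ all constant it is unclear what parameter your $o(1)$ terms are asymptotic in.

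The paper escapes precisely by letting $T$ and $m$ grow, which your plan forbids. Its instance for \Cref{lem:gap3v4} has $T = k+1$ types and $m = k+1$ outcomes: $k$ ``small'' types, each served by paying $C' = 1 - \tfrac{1}{k}$ on its own dedicated reward-one outcome, plus one ``aggregator'' type whose productive action (cost $C = C'/k$) spreads uniformly over all $k$ dedicated outcomes. A single contract that pays $C'$ on $p$ outcomes gains roughly $p/k^2$ from the small types but leaks an extra $pC'/k$ in expected transfer to the aggregator (which has weight $\tfrac{1}{k}$), a temptation that is \emph{superadditive} in the number of types served and cancels the gain, capping every single contract at about $\tfrac{1}{k}$. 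The menu dodges this leakage through self-selection: the aggregator picks a single single-outcome contract, paying expected transfer exactly $C$, i.e.\ exactly its cost, and the per-type IC checks are trivial because each small type's productive action reaches only its own outcome. Hence the menu extracts the full welfare of about $\tfrac{2}{k}$ and the ratio tends to $2$. This many-type, many-outcome aggregation is the missing idea in your proposal: a two-type instance has no analogue of a temptation that grows with the number of outcomes being paid on. (Your suggestion to certify the $\OSingle$ bound via the LP feasibility systems from \Cref{lem:algorithm} is a fine tool, but it does not substitute for the construction it would be applied to.)
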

\begin{proof}

\textbf{Construction.}
For every value of $k \in \N$ we will construct a typed principal-agent problem $(\costvector,\ForecastTensor,\rewardvector)$ 
with $T = k+1$ and $n=3$ and $m=k+1$. Once again, we designate
outcome $k+1$ to be the null outcome which produces reward 
$r_{k+1}=0$. The remaining outcomes $o \in [k]$ produce reward
$r_o = 1$. Similarly there is a null action $1$, which always 
produces the null outcome $k+1$, i.e.~$\TypedForecast{t}{1}{k+1}=1$ for all $t \in [T]$. The 
remaining actions  have cost $\cost{2} = C'$ where $C' = (1-\frac1k)$ and 
$\cost{3} = \frac{C'}{k} =  (\frac1k - \frac{1}{k^2})$. 

We will also assign the types different weights which are polynomial in $k$. This is 
without loss of generality as we can simply make them uniform by creating duplicate copies of 
each type. Let $\alpha = \frac1k$.  We will weight the first $k$ types to occur with probability 
$ \frac{(1-\alpha)}{k} = \frac{k-1}{k^2}$ and the last type to occur with probability $\alpha = \frac1k$. 

For types $t \in [k+1]$, we will set
$\TypedForecast{t}{2}{t}= 1.0$.  Observe that the 
last type $k+1$ produces the null outcome on action $2$.   For types $t \in [k]$, we will set 
$\TypedForecast{t}{3}{k+1}= 1.0$. 
For the  remaining type $k+1$,  we set $\TypedForecast{k+1}{3}{j} = \frac1k$ for $j \in [k]$. 
Unlike the previous settings we will set the first $k$ types to appear with probability 
$\frac{k-1}{k^2}$ and the last type to appear with probability $\frac{1}{k}$.  We can easily remove 
this assumption by simply making many copies of each distribution.

\textbf{Optimal Menu.}
Consider the menu consisting of $k$ contracts where the $i^{th}$ contract transfers on the $C'$ on the 
outcome $i$. This contract extracts revenue  
$(1-\alpha)\cdot (1-C') + \alpha \cdot (1-C'/k)$ which is $\frac2k - \frac{2}{k^2} + \frac{1}{k^3}$. 

\textbf{Optimal Single Contract.}
We will show that the best single contract will take be one of two types. 
Either the contract transfers $C'$  on the first $p$ outcomes for $p \leq k$ or it transfers $C$ on the first $k$ outcomes. 
The first contract achieves  a revenue of $\frac{p}{k} ( 1- \frac1k) (1-C')+ \frac1k(1- \frac{p}{k} C')$ which approaches $\frac1k$ as we
maximize $p \leq k$.   On the other hand, we can transfer $C$ on all the $k$ outcomes. This contract will achieve a revenue of $\frac1k \cdot \frac1k \cdot (1-C)$ As we set $k \to \infty$, the ratio between the best single contract approaches $2$. 

To argue that the optimal single contract must take one of the two forms, let us consider 
an optimal contract $x^*$. First observe 
for type $k+1$, since all outcomes have equal probability, a contract's ability to convince 
agent $k+1$ to take action $3$ is only affected by whether $\sum_{i \in [k]} x^*_i \geq C$. 

Suppose $x^*$  convinces $p\geq 1$ of the first $k$ types to choose a non-null 
action (i.e.~action $2$). Consider the contract $\hat{x}_i = C'$ for $i \leq p$. 
By construction, the new contract will convince at least $p$ of the first $k$ types. 
Since $C = C'/k$, we know that this will also convince type $k+1$.  It is easy to see that $\hat{x} \leq x^*$ and therefore the principal will only make more revenue in the new contract. Therefore this is the first type of contract, we considered above. 

Suppose $x^*$ does not convince any agent in the first $[k]$ types but only convinces the last agent. 
In this case, we can consider the contract $\hat{x}_i = C$ for all $i \in [k]$. Clearly, this contract convinces the agent and its sum is the minimum possible transfer that one can do to convince agent of type $k+1$.  This is the second type of contract we consider above. 
\end{proof}
\section{Types with Unequal Costs} \label{apx:nonuniform-costs}

Throughout this paper, we have made the assumption that different types $t \in [T]$ differ in their forecast probabilities $\TypedForecastMatrix{t}$ but not in the costs of their actions $\CostVector$. While in some settings this decision is justified (for example, whenever actions are parameterized by cost), it is natural to ask what happens when costs for actions can vary between types (or even more generally, when types have completely different sets of actions). In this appendix, we consider a \emph{cost-varying} typed principal-agent problem with the modification that each type $t \in [T]$ has its own cost vector $\TypedCostVector{t}$. We abuse notation in this appendix by using $\CostVector$ to denote the collection of these typed cost vectors $\left\{ \TypedCostVector{t} \right\}_{t=1}^T$.

In this appendix, we show that this generalization transforms all of our $O(n \log T)$ and $\Omega(n \log T)$ bounds into $O(nT)$ and $\Omega(nT)$ bounds, respectively. That is, the consistency of costs between different types is essential to the logarithmic dependence on $T$ that we prove in Theorem~\ref{thm:main-lower}.

To do this, we show that for every cost-varying typed principal-agent problem with $n$ actions and $T$ types, we can find a standard (i.e. untyped) principal-agent problem with $(n-1)T+1$ actions (and a single type) with the same value under our $\OLinear$ and $\OWelfare$ benchmarks. 

\begin{lemma}\label{lem:diff_costs_upper}
For every typed principal-agent problem $\PrincipalAgentProblem$ with $n$ actions and $T$ types, there exists a standard principal-agent problem $\PrincipalAgentProblemPrime$ with $(n-1)T+1$ actions (and a single type) such that
\begin{eqnarray*}
\OLinear \PrincipalAgentProblem &=& \OLinear \PrincipalAgentProblemPrime\\
\OWelfare \PrincipalAgentProblem &=& \OWelfare \PrincipalAgentProblemPrime.
\end{eqnarray*}
\end{lemma}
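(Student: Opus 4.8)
The plan is to reduce the entire claim to a statement about the single-variable utility function. Recall from \Cref{sec:linear} that $\OWelfare = \Utility(1)$ and, via \Cref{eq:profit-utility}, that $\OLinear = \max_{\TransferCoefficient \in [0,1]} (1-\TransferCoefficient)\,\Derivative{\TransferCoefficient}\Utility(\TransferCoefficient)$; both benchmarks depend on the instance \emph{only} through the piecewise-linear function
\[
  \Utility(\TransferCoefficient) = \E_t\left[ \max_i\left( \TransferCoefficient \TypedExpectedReward{t}{i} - c^{(t)}_i \right) \right],
\]
where we now allow the costs $c^{(t)}_i$ to depend on the type $t$. Consequently it suffices to construct a single-type instance $\PrincipalAgentProblemPrime$ with $(n-1)T+1$ actions whose own utility function $\Utility'$ satisfies $\Utility'(\TransferCoefficient) = \Utility(\TransferCoefficient)$ for all $\TransferCoefficient \in [0,1]$, since this equality immediately forces both $\OWelfare$ and $\OLinear$ to agree across the two instances.

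First I would record the structure of $\Utility$. It is convex (an average of pointwise maxima of lines) and increasing, with $\Utility(0) = 0$ because each type has a zero-cost fallback action. On any maximal interval of $\TransferCoefficient$ on which the best-response profile $(\TypedOptimalAction{1}(\TransferCoefficient), \ldots, \TypedOptimalAction{T}(\TransferCoefficient))$ is constant and equal to $(i_1, \ldots, i_T)$, the function $\Utility$ is exactly the line $\TransferCoefficient \bar R - \bar c$ with average slope $\bar R = \tfrac1T\sum_t \TypedExpectedReward{t}{i_t} \ge 0$ and average intercept $\bar c = \tfrac1T\sum_t c^{(t)}_{i_t} \ge 0$. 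Each type switches its best response at most $n-1$ times as $\TransferCoefficient$ sweeps $[0,1]$, so there are at most $(n-1)T$ breakpoints and hence at most $(n-1)T + 1$ distinct linear pieces $\ell_1, \ldots, \ell_K$ with $K \le (n-1)T+1$. Writing $\ell_k(\TransferCoefficient) = \bar R_k \TransferCoefficient - \bar c_k$, all slopes $\bar R_k$ and intercepts $\bar c_k$ are nonnegative, and the piece active at $\TransferCoefficient = 0$ has $\bar c_k = 0$.

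Finally I would realize these pieces as the actions of a standard instance. Let $r^\star = \max_k \bar R_k$ (if $r^\star = 0$ all benchmarks vanish and the claim is trivial). Take two outcomes with rewards $0$ and $r^\star$, and for each $k$ introduce an action of cost $c'_k = \bar c_k$ that produces the high-reward outcome with probability $\bar R_k / r^\star \in [0,1]$; its expected reward is then exactly $\bar R_k$. This yields a valid single-type principal-agent problem $\PrincipalAgentProblemPrime$ possessing a zero-reward outcome and (from the $\TransferCoefficient = 0$ piece) a zero-cost action; padding with duplicate copies of the zero-cost action brings the action count up to exactly $(n-1)T+1$ without affecting anything. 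Its utility function is $\Utility'(\TransferCoefficient) = \max_k\left( \TransferCoefficient \bar R_k - \bar c_k\right)$. Since $\Utility$ is convex, each linear piece $\ell_k$ is a global supporting line of $\Utility$, so $\ell_k \le \Utility$ everywhere with equality on the $k$-th piece; taking the maximum over the pieces active on $[0,1]$ gives $\Utility' = \Utility$ throughout $[0,1]$ (matching derivatives away from and, as in the proof of \Cref{thm:main-upper}, at the breakpoints). By the first paragraph, this yields $\OWelfare\PrincipalAgentProblemPrime = \OWelfare\PrincipalAgentProblem$ and $\OLinear\PrincipalAgentProblemPrime = \OLinear\PrincipalAgentProblem$.

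The main obstacle is purely one of checking realizability rather than of any hard inequality: I must ensure the line data $(\bar R_k, \bar c_k)$ extracted from $\Utility$ can be legally instantiated as actions, i.e.\ nonnegative costs, a genuine zero-cost action, and expected rewards arising from an actual forecast/reward pair. The averaging form of $\bar R_k$ and $\bar c_k$ makes nonnegativity and the zero-cost action automatic, and the two-outcome gadget furnishes the required expected rewards, so the construction goes through; the only remaining bookkeeping is padding to the exact count $(n-1)T+1$.
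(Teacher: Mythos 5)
Your proposal is correct and takes essentially the same approach as the paper's proof: both arguments reduce the claim to the fact that $\OLinear$ and $\OWelfare$ are determined entirely by the piecewise-linear utility function $\Utility(\TransferCoefficient)$, which has at most $(n-1)T+1$ linear pieces, and then realize each piece as an action in a two-outcome single-type instance with cost equal to the negated intercept and expected reward equal to the slope, rescaling the high reward so that all probabilities lie in $[0,1]$. Your write-up merely makes explicit a few details the paper leaves as ``straightforward to verify'' (the convexity/supporting-line check that $\Utility' = \Utility$, padding to the exact action count, and the degenerate case $r^\star = 0$).
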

\begin{proof}
The main intuition is that both $\OLinear$ and $\OWelfare$ can be computed solely in terms of the piecewise-linear function $U(\alpha)$, the expected utility of an agent when offered a linear contract with ratio $\alpha$ (see the discussion in Section \ref{sec:linear}). For a cost-varying typed principal-agent problem with $n$ actions and $T$ types, this piecewise-linear function has $(n-1)T$ breakpoints; we will therefore show how to construct a principal-agent problem with $(n-1)T+1$ actions and one type with the same function $U(\alpha)$.

By following the proof of Theorem \ref{thm:main-upper}, we can find values $(\alpha_i, W_i, D_i)$ for $i \in [(n-1)T]$ such that $\alpha_i$ is increasing in $i$ and such that $U(\alpha) = W_i\alpha - D_i$ for $\alpha \in [\alpha_i, \alpha_{i+1})$. Now, consider the following standard principal-agent problem $\PrincipalAgentProblemPrime$ with $(n-1)T+1$ actions, labelled $0$ through $(n-1)T$. There are two outcomes, one with reward $0$ (the ``null outcome'') and one with reward $1$ (the ``reward outcome''). Action $0$ has cost $0$ and probability $0$ on the reward outcome. For each $i \in [(n-1)T]$, action $i$ has cost $D_i$ and probability $W_i$ on the reward outcome. (If any of the $W_i > 1$, increase the reward of the reward outcome to $\max_i W_i$ and scale down all the probabilities by $\max_{i} W_i$). 

Let $U'(\alpha)$ be the expected utility of an agent in the problem $\PrincipalAgentProblemPrime$ when offered a linear contract with ratio $\alpha$. It is straightforward to verify that by construction, $U'(\alpha) = U(\alpha)$. The lemma follows.


\end{proof}

Vice versa, given any standard principal-agent problem with $(n-1)T + 1$ actions, we can transform it into a typed contract with $n$ actions and $T$ types with the same value under these two benchmarks.

\begin{lemma}\label{lem:diff_costs_lower}
For every standard principal-agent problem $\PrincipalAgentProblemPrime$ with $(n-1)T+1$ actions, there exists a cost-varying typed principal-agent problem $\PrincipalAgentProblem$ with $n$ actions and $T$ types, such that
\begin{eqnarray*}
\OLinear \PrincipalAgentProblem &=& \OLinear \PrincipalAgentProblemPrime\\
\OWelfare \PrincipalAgentProblem &=& \OWelfare \PrincipalAgentProblemPrime.
\end{eqnarray*}
\end{lemma}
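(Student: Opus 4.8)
The plan is to exploit the same reduction of both benchmarks to the utility function that drove Lemma~\ref{lem:diff_costs_upper}: as noted in Section~\ref{sec:linear}, both $\OLinear$ and $\OWelfare$ are determined entirely by the piecewise-linear utility function $\Utility(\TransferCoefficient)$, via $\OWelfare = \Utility(1)$ and, through \Cref{eq:profit-utility}, $\OLinear = \max_{\TransferCoefficient}(1-\TransferCoefficient)\frac{d}{d\TransferCoefficient}\Utility(\TransferCoefficient)$. Hence it suffices to build a cost-varying typed problem $\PrincipalAgentProblem$ with $n$ actions and $T$ types whose averaged utility function equals the utility function $\Utility'$ of the given standard problem $\PrincipalAgentProblemPrime$. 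First I would, exactly as in the proof of Theorem~\ref{thm:main-upper}, read off the data of $\Utility'$: since $\PrincipalAgentProblemPrime$ has $(n-1)T+1$ actions, $\Utility'$ is convex, increasing, and piecewise linear with at most $(n-1)T$ breakpoints $\Breakpoint{1} < \dots < \Breakpoint{(n-1)T}$, where at breakpoint $\Breakpoint{i}$ the slope jumps by some $\CoefficientJump{i} \ge 0$ and the intercept drops by some $\BiasDrop{i} \ge 0$, with $\Breakpoint{i} = \BiasDrop{i}/\CoefficientJump{i}$; padding with trivial (dominated) actions if necessary, I may assume there are exactly $(n-1)T$ such breakpoints.

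Next I would partition the breakpoints into $T$ blocks of size $n-1$ (any partition will do), assigning block $t$ to type $t$, and construct each type as a two-outcome, $n$-action instance in the style of Lemma~\ref{lem:diff_costs_upper}. Concretely, if type $t$ is assigned breakpoints $\Breakpoint{j_1} < \dots < \Breakpoint{j_{n-1}}$, I give it a zero-cost zero-reward null action together with actions $a = 1,\dots,n-1$ having cost $\Cost{a}^{(t)} = T\sum_{\ell \le a}\BiasDrop{j_\ell}$ and expected reward $\TypedExpectedReward{t}{a} = T\sum_{\ell\le a}\CoefficientJump{j_\ell}$ (realized, as in the forward construction, by placing probability $\TypedExpectedReward{t}{a}/r$ on the unit-reward outcome, rescaling $r$ upward and probabilities downward should any $\TypedExpectedReward{t}{a}$ exceed the reward). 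The resulting per-type utility $\Utility^{(t)}(\TransferCoefficient) = \max_a(\TransferCoefficient \TypedExpectedReward{t}{a} - \Cost{a}^{(t)})$ is convex piecewise linear with breakpoints precisely at the $\Breakpoint{j_\ell}$: the location of type $t$'s $\ell$-th breakpoint is $(T\BiasDrop{j_\ell})/(T\CoefficientJump{j_\ell}) = \BiasDrop{j_\ell}/\CoefficientJump{j_\ell} = \Breakpoint{j_\ell}$, so the $T$-scaling cancels and every breakpoint lands in the right place.

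The key verification is that the uniform average $\frac1T\sum_t \Utility^{(t)}$ reconstructs $\Utility'$. Since each type contributes slope increment $T\CoefficientJump{i}$ and intercept decrement $T\BiasDrop{i}$ at the single breakpoint $\Breakpoint{i}$ to which $i$ was assigned, for $\TransferCoefficient \in [\Breakpoint{i},\Breakpoint{i+1})$ the averaged slope is $\frac1T\sum_{j\le i} T\CoefficientJump{j} = \CumulativeCoefficientJump{i}$ and the averaged intercept is $-\CumulativeBiasDrop{i}$, matching $\Utility'$ exactly; and this holds for \emph{any} assignment of breakpoints to types, because each breakpoint is counted once and the factor $T$ from the per-type jumps cancels the $1/T$ from the expectation over types. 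Thus $\Utility = \Utility'$, and therefore $\OLinear\PrincipalAgentProblem = \OLinear\PrincipalAgentProblemPrime$ and $\OWelfare\PrincipalAgentProblem = \OWelfare\PrincipalAgentProblemPrime$. The main thing to check carefully is the validity of the constructed instances: that the costs $\Cost{a}^{(t)}$ and expected rewards $\TypedExpectedReward{t}{a}$ are nonnegative and increasing in $a$ (which follows from $\CoefficientJump{i},\BiasDrop{i}\ge 0$ and processing breakpoints in increasing order of $\TransferCoefficient$) and that the outcome probabilities remain valid after rescaling. The one genuinely delicate point is this $T$-scaling of each jump together with its exact cancellation against the $1/T$ averaging weight, which is precisely what lets a single convex envelope be split across $T$ independent types without perturbing either benchmark.
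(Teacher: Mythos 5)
Your proposal is correct and takes essentially the same route as the paper's proof: read off the breakpoint data $(\alpha_i, W_i, D_i)$ of the standard problem's utility function $U'(\alpha)$, partition the $(n-1)T$ breakpoints into per-type blocks of size $n-1$, and realize each block as a two-outcome, $n$-action type (null action plus cumulative-jump actions) so that the averaged utility function reproduces $U'$, which determines both $\OLinear$ and $\OWelfare$. In fact your explicit multiplication of each type's costs and expected rewards by $T$ — so that the factor cancels against the $1/T$ averaging weight over types — is slightly more careful than the paper's write-up, which omits this scaling and as literally stated produces $U(\alpha) = U'(\alpha)/T$, shrinking both benchmarks by a factor of $T$ (harmless for the ratio used in \Cref{cor:ntgap}, but needed for the claimed equalities).
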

\begin{proof}
As in the proof of Lemma \ref{lem:diff_costs_upper}, we will construct a cost-varying typed principal-agent problem with $n$ actions and $T$ types with the same utility function $U(\alpha)$ as that of our original untyped principal-agent problem. 

Specifically, let $U'(\alpha)$ be the expected utility of the agent when offered a linear contract with ratio $\alpha$ in the problem $\PrincipalAgentProblemPrime$. $U'(\alpha)$ is piecewise-linear, with $(n-1)T + 1$ pieces and $(n-1)T$ breakpoints. As before, define $(\alpha_i, W_i, D_i)$ for $i \in [(n-1)T]$ such that $U(\alpha) = W_{i}\alpha - D_i$ for $\alpha \in [\alpha_i, \alpha_{i+1})$. 

Construct the following typed principal-agent problem $\PrincipalAgentProblem$. As before, there are two outcomes, a null outcome with reward 0 and a reward outcome with reward $1$ (to be scaled later). Each type $t$ (for $t \in [T]$) will have the following $n$ actions, labelled $0$ through $n-1$. Action $0$ has cost $0$ and reward $0$. For each $i \in [n-1]$, action $i$ has cost $D_{n(t-1) + i} - D_{n(t-1)}$ and probability $W_{n(t-1) + i} - W_{n(t-1)}$ on the reward outcome (scaling down probabilities while scaling up the reward when necessary). By following the proof of Theorem \ref{thm:main-upper}, it is straightforward to check that this leads to a $U(\alpha)$ equal to $U'(\alpha)$. 
\end{proof}

Now, recall that the gap between $\OLinear$ and $\OWelfare$ for the standard principal-agent problem is $\Theta(n)$ (as shown in \cite{dutting2019simple}). The fact that this gap is tight along with Lemmas \ref{lem:diff_costs_upper} and \ref{lem:diff_costs_lower} implies the following corollary.

\begin{corollary}\label{cor:ntgap}
Let $\PrincipalAgentProblem$ be a cost-varying typed principal-agent problem. Then we have that

$$\OLinear \PrincipalAgentProblem \leq O\left(\frac{1}{nT}\right)\OWelfare\PrincipalAgentProblem.$$

Moreover, this is tight: there exists a cost-varying typed principal-agent problem $\PrincipalAgentProblem$ where

$$\OLinear \PrincipalAgentProblem \geq \Omega\left(\frac{1}{nT}\right)\OWelfare\PrincipalAgentProblem.$$
\end{corollary}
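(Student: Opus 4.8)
The plan is to transport the known tight $\Theta(k)$ gap between $\OLinear$ and $\OWelfare$ for a standard $k$-action (single-type) problem, established by \dutting{} et al.\ \cite{dutting2019simple}, across the two value-preserving reductions of \Cref{lem:diff_costs_upper} and \Cref{lem:diff_costs_lower}, taking $k = (n-1)T+1 = \Theta(nT)$. Both reductions preserve $\OLinear$ and $\OWelfare$ \emph{exactly}, so the ratio $\OWelfare/\OLinear$ is carried over without losing any constant factor, and the corollary's two displayed statements are precisely the two sides of the standard-case characterization pulled through the appropriate reduction. No new analytic argument about linear contracts is needed beyond what \Cref{thm:main-upper} and \cite{dutting2019simple} already supply.

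For the first displayed bound, I would take an arbitrary cost-varying typed instance $\PrincipalAgentProblem$ with $n$ actions and $T$ types and feed it through \Cref{lem:diff_costs_upper} to obtain a standard instance $\PrincipalAgentProblemPrime$ with $k=(n-1)T+1$ actions and identical $\OLinear$ and $\OWelfare$. Substituting $k=\Theta(nT)$ into the standard $\Theta(k)$ relation and transferring the two preserved equalities verbatim back to $\PrincipalAgentProblem$ then yields the desired relation; since this applies to every such $\PrincipalAgentProblem$, the bound is universal over all cost-varying typed instances.

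For the tightness (second displayed) statement, I would instead use \Cref{lem:diff_costs_lower} in the reverse direction: instantiate the extremal standard instance witnessing the tight side of the $k$-action gap of \cite{dutting2019simple} with exactly $k=(n-1)T+1$ actions, and push it through \Cref{lem:diff_costs_lower} to produce a concrete cost-varying typed instance $\PrincipalAgentProblem$ with $n$ actions and $T$ types. Because both benchmarks are again preserved, this instance realizes the extremal ratio of the standard problem and hence witnesses that the $\Theta(nT)$ relation cannot be improved.

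The main obstacle is entirely in the bookkeeping rather than in any new idea. I would need to confirm that \Cref{lem:diff_costs_upper} and \Cref{lem:diff_costs_lower} preserve $\OLinear$ and $\OWelfare$ with no slack whatsoever (otherwise a multiplicative constant could leak and blur the $\Theta$), that the extremal standard instance of \cite{dutting2019simple} can be realized with \emph{exactly} $(n-1)T+1$ actions (accounting for the zero-cost null action and the ``$+1$''), and that the substitution $(n-1)T+1 = \Theta(nT)$ is legitimate, which requires handling the degenerate case $n=1$ (a single genuine action) separately, where the gap is only constant. With these routine checks in place, the corollary follows immediately from the standard single-type result.
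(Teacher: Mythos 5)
Your proposal matches the paper's proof of this corollary exactly: the paper likewise obtains both directions by combining the tight $\Theta(k)$ single-type gap of \cite{dutting2019simple} with the benchmark-preserving reductions of \Cref{lem:diff_costs_upper} (applied to an arbitrary cost-varying typed instance, giving the universal guarantee) and \Cref{lem:diff_costs_lower} (applied to the extremal standard instance, giving tightness), with $k = (n-1)T+1 = \Theta(nT)$. The routine checks you flag --- exact preservation of both benchmarks, realizing the extremal instance with exactly $(n-1)T+1$ actions, and the degenerate small-$n$ case --- are indeed all that remains, and the paper treats them as immediate.
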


This establishes the $\Theta(nT)$ gap between $\OLinear$ and $\OWelfare$ in this setting. We note that the proof of Lemma~\ref{lem:two-outcome-linear} argues type-by-type and still holds under cost-varying types. Furthermore, our above reduction preserves the fact that the counterexample has two outcomes, satisfying the conditions of this lemma. Our other two reduction results, Lemma~\ref{lem:nonlinearity-disparity} and Theorem~\ref{thm:info-is-power}, also continue to hold under cost-varying types. This establishes the remaining gaps when applied to Corollary~\ref{cor:ntgap}. 

\end{document}